\newcommand{\R}{\mathbb{R}}
\newcommand{\Q}{\mathbb{Q}}
\newcommand{\Z}{\mathbb{Z}}
\newcommand{\Si}{\mathcal{S}_i}
\newenvironment{rtheorem}[3][]{%
\noindent\ifthenelse{\equal{#1}{}}{\bf #2 #3.}{\bf #2 #3 (#1)}%
\begin{it}}{\end{it}}
\newtheorem{theorem}{Theorem}
\newtheorem{lemma}[theorem]{Lemma}
\newtheorem{corollary}[theorem]{Corollary}
\newtheorem{proposition}[theorem]{Proposition}
\newtheorem{definition}[theorem]{Definition}
\newtheorem{example}[theorem]{Example}
\newtheorem{remark}[theorem]{Remark}
\numberwithin{theorem}{section}
\numberwithin{lemma}{section}
\numberwithin{proposition}{section}
\numberwithin{remark}{section}
\numberwithin{example}{section}
\numberwithin{corollary}{section}
\numberwithin{figure}{section}
\numberwithin{definition}{section}
\begin{document}
\title{Sampling from the Gibbs Distribution in Congestion Games\footnote{Most of this work has been carried out while the author was a postdoctoral fellow at the Max Planck Institute for Informatics in Saarbr\"ucken, Germany.}}

\author{
Pieter Kleer\\ 
Tilburg University\\
Tilburg, The Netherlands\\ 
\texttt{p.s.kleer@tilburguniversity.edu}
}
\maketitle              
\begin{abstract} 
Logit dynamics is a form of randomized game dynamics where players have a bias towards strategic deviations that give a higher improvement in cost. It is used extensively in practice. In congestion (or potential) games, the dynamics converges to the so-called Gibbs distribution over the set of all strategy profiles, when interpreted as a Markov chain. In general, logit dynamics might converge slowly to the Gibbs distribution, but beyond that, not much is known about their algorithmic aspects, nor that of the Gibbs distribution. 
In this work, we are interested in the following two questions for congestion games: i) Is there an efficient algorithm for \emph{sampling} from the Gibbs distribution? ii) If yes, do there also exist natural randomized dynamics that converges quickly to the Gibbs distribution?

We first study these questions in  extension parallel congestion games, a well-studied special case of symmetric network congestion games. As our main result, we show that there is a simple variation on the logit dynamics (in which we in addition are also allowed to randomly interchange the strategies of two players) that converges quickly to the Gibbs distribution in such games. This answers both questions above affirmatively. We also address the first question for the class of so-called capacitated $k$-uniform congestion games.

To prove our results, we rely on the recent breakthrough work of Anari, Liu, Oveis-Gharan and Vinzant (2019) concerning the approximate sampling of the base of a matroid according to strongly log-concave probability distribution.
\end{abstract}

\section{Introduction} 
Congestion games constitute a rich class of games that have been studied extensively since their introduction by Rosenthal \cite{Rosenthal1973}. An (unweighted) \emph{congestion game} $\Gamma = (N,E,(\Si)_{i \in N},(c_e)_{e \in E})$ consists of a set of players $N = \{1,\dots,n\}$ and a set of resources $E = \{1,\dots,m\}$. Every player $i$ has a strategy set $S_i \subseteq 2^E$, where each strategy is a subset of resources. Furthermore, every resource $e \in E$ is equipped with a cost function $c_e : \R_{\geq 0} \rightarrow \R$ that we assume to be non-negative and non-decreasing. 
The goal of a player is to choose a strategy that minimizes her total cost $C_i(s) = \sum_{e \in s_i} c_e(\ell_e(s))$, where $\ell_e(s)$ is the number of players using resource $e$ in profile $s \in \times_i \Si = \mathcal{S}$.  
A well-known example is the class of symmetric network congestion games, in which we are given a directed graph $G = (V,E)$ with origin $o \in V$ and destination $d \in V$. The common strategy set of all players is given by the set of all $o,d$-paths in $G$.

Rosenthal \cite{Rosenthal1973} proved that congestion games are \emph{(exact) potential games}. He showed that the function $\Phi : \times_i \Si \rightarrow \R$ given by \vspace*{-0.15cm}
$$
\Phi(s) = \sum_{e \in E} \sum_{k=1}^{\ell_e(s)} c_e(k),
$$
satisfies 
\begin{equation}
\label{eq:exact}
C_i(s) - C_i(s_i',s_{-i}) = \Phi(s) - \Phi(s_i',s_{-i})
\end{equation}
for every $s \in \times_i\Si$ and $s_i' \in \mathcal{S}_i$. Here $C_i(s_i',s_{-i})$ is used to denote the cost of player $i$ in the strategy profile where $i$ chooses $s_i'$, and all other players choose their strategy in $s$. The function $\Phi$ is often referred to as \emph{Rosenthal's potential}.
The main implication of \eqref{eq:exact} is the existence of a \emph{pure Nash equilibrium (PNE)}: A strategy profile in which no player can deviate to another strategy and obtain an improved cost \cite{Rosenthal1973}. This follows directly from the observation that any sequence of \emph{better (or best) response dynamics} converges to a PNE in a finite number of steps. Better response dynamics is defined as the procedure where in every step precisely one player deviates to another strategy that yields an improved cost (until a pure Nash equilibrium is reached). For best response dynamics, the deviating player always deviates to a strategy that yields the greatest possible improvement in cost.  

 In the last two decades, the algorithmic aspects of (pure) Nash equilibria have been studied extensively, in both general and special classes of congestion games; see, e.g., \cite{Fabrikant2004,Ieong2005,Ackermann2008,DelPia2015,KleerSchaferEC}. Two of the most prominent questions concerning pure Nash equilibria are the following.
\begin{enumerate}
\item Do (natural) player dynamics, such as better or best response dynamics, converge to a PNE in polynomial time?
\item If not, can one compute a PNE in polynomial time by other means?
\end{enumerate}

\noindent Player dynamics, roughly speaking, come in two flavors: Either one deviates to another strategy profile according to a deterministic rule, or a probabilistic one. A well-known example of the latter case is \emph{noisy (randomized) best response dynamics} that has received a lot of attention in practice, but seems hard to analyze from a theoretical perspective. Here, instead of making a deviation  to another strategy according to a deterministic rule, a player chooses a strategy from her set according to a probability distribution that usually puts relatively more weight on strategies that will result in a lower cost.\footnote{Such methods are special cases of what is more broadly known as \emph{simulated annealing}.}  Randomized dynamics can be studied from multiple perspectives: Either as a randomized alternative for deterministic dynamics converging to a pure Nash equilibrium or as a dynamical system on its one. 

One well-known example of player dynamics that can be studied as a dynamical system, and which is the topic of this paper, is the \emph{logit dynamics}. It has  received a lot of attention in various  communities, such as evolutionary game theory (see, e.g., \cite{Sandholm2010}) and experimental economics (see, e.g., \cite{Camerer2010}). The procedure was introduced by Blume \cite{Blume1993} as a form of randomized game dynamics in which players update their strategy according to a \emph{logit update rule} \cite{McFadden1973}. The logit dynamics for congestion games can be formulated as follows. For a given strategy profile $s$ and fixed \emph{rationality level} (or \emph{temperature}\footnote{The notion of temperature comes from the physics literature.}) parameter $T \geq 0$, first choose a player $i \in N$ uniformly at random, and then have player $i$ choose a strategy $s_i' \in \mathcal{S}_i$ with probability 
\begin{equation}
\label{eq:logit}
\frac{e^{-T \Phi(s_i',s_{-i})}}{\sum_{t \in \mathcal{S}_i}  e^{-T \Phi(t,s_{-i})}}.\footnote{Equivalently, and in fact the usual definition of logit dynamics for general (not necessarily potential) games, one can replace the $\Phi(s_i',s_{-i})$ by $C_i(s_i',s_{-i})$, and, similarly, in the normalizing constant. We choose to use $\Phi$ as this will be more convenient for our purposes. The equivalence follows from \eqref{eq:exact}.}
\end{equation}
 Note that the denominator in \eqref{eq:logit} is a normalizing constant. The rationality level $T \geq 0$ is used to model the amount of noise players believe there to be in the system. When $T \rightarrow \infty$, players effectively only assign positive probability to best responses, whereas when $T \rightarrow 0$, the distribution in \eqref{eq:logit} approaches the uniform distribution over $\Si$. Also note that the dynamics indeed puts relatively more probability mass on strategies that give a greater improvement in cost.

The logit dynamics gives rise to an \emph{irreducible, time-reversible Markov chain} on the set $\mathcal{S}$ of all strategy profiles that has the \emph{Gibbs distribution} $\pi$, given by
$$
\pi(s) = \frac{e^{-T \Phi(s)}}{\sum_{t \in \times_i \mathcal{S}_i} e^{-T \Phi(t)}}
$$
for $s \in \mathcal{S}$, as its unique stationary distribution. This simply  means that if one runs the logit dynamics for a sufficiently long time, the distribution over $\mathcal{S}$ converges to the Gibbs distribution.

In fact, Auletta et al. \cite{Auletta2016} interpret the Gibbs distribution as a dynamic equilibrium concept, which they dubbed the \emph{logit equilibrium} (this concept is well-defined for general games in \cite{Auletta2016}). 
The goal of this work is to study algorithmic aspects of the logit equilibrium/Gibbs distribution in congestion games. Unfortunately, in general the logit dynamics converge slowly to the Gibbs distribution  \cite{Auletta2016}, in particular, the number steps needed might be $\Omega(e^{T\Phi_{\max}})$, where $\Phi_{\max}$ is the maximum value attained by Rosenthal's potential. We next give a simple example illustrating this fact.

\begin{example}
Consider the congestion game with players $N = \{1,2\}$ and two resources $E = \{a,b\}$. Both resources $e \in \{a,b\}$ can be used by both players, i.e., $\mathcal{S}_1 = \mathcal{S}_2 = \{\{a\},\{b\}\}$, and have a cost function satisfying $c_e(1) = 0$ and $c_e(2) = \phi$ for some $\phi \geq 0$. 

If $\phi$ is large, the Gibbs distribution assigns weight close to $1/2$ to the strategy profiles $(s_1,s_2) \in \{(a,b),(b,a)\}$, and weight close to $\frac{1}{2}e^{-T\phi} \approx 0$ to $(s_1,s_2) \in \{(a,a),(b,b)\}$. Now, informally speaking, if we consider the logit dynamics with starting profile $(a,b)$, then the number of steps we need to run the dynamics before we see the profile $(b,a)$ is $\Omega(e^{T\phi})$. As both profiles appear with probability close to $1/2$ in the Gibbs distribution, this also means that we need to run the dynamics for at least $\Omega(e^{T\phi})$ steps before the resulting distribution over all strategy profiles is close to the Gibbs distribution.\label{exmp:slow}
\end{example}

What is causing the slow convergence in Example \ref{exmp:slow}? The problem is that we need to use either the profile $(a,a)$ or $(b,b)$, both having very small probability in the Gibbs distribution, to move from $(a,b)$ to $(b,a)$. However, as it turns out, if one in addition with some probability is allowed to interchange the strategies of players $1$ and $2$, then the resulting dynamics converges  quickly to the Gibbs distribution.\footnote{This is a special case of Theorem \ref{thm:ep_informal}.} Note that this enables the possibility to directly transition between $(a,b)$ and $(b,a)$. This motivates the following question:
\begin{center}
\textit{Is there a (simple) Markov chain on $\mathcal{S}$ that converges rapidly to the Gibbs distribution over $\mathcal{S}$ at any temperature $T$?}
\end{center}
This question may be interpreted as the  natural analogue of looking for other local search procedures converging quickly to a PNE, when best/better response dynamics does not have this property. 

When the answer to the above question is not directly obvious, one can take another step back and first ask whether it is at all possible to efficiently \emph{sample} from the Gibbs distribution. Informally speaking, can we take `snapshots' (according to the Gibbs distribution) from the system in equilibrium in polynomial time? More formally speaking: 
\begin{center}
\textit{Does there exist an efficient algorithm to sample (approximately) a strategy profile $s \in \mathcal{S}$   according to the Gibbs distribution over $\mathcal{S}$ at any temperature $T$?}
\end{center}
This question can be interpreted as a dynamic analogue of the second question posed earlier for the computation of pure Nash equilibria. That is, although (deterministic) better/best response dynamics might take a long time to converge to a pure Nash equilibrium, one still wants to know whether a PNE can be computed efficiently by other means. Similarly, although logit (or other natural) dynamics might take a long time to converge to the Gibbs distribution, we may still ask whether, by means of sampling, we can get an impression of what the Gibbs distribution over $\mathcal{S}$ looks like.

The questions above will  be made precise in Section \ref{sec:pre}. We remark that one of the aspects that makes them non-trivial is the fact that we require the questions  to hold for \emph{any} temperature $T$, i.e., $T$ is considered part of the input. (For example, in Example \ref{exmp:slow} we could set $T = \Theta(1/\phi)$ to circumvent the problem arising there.)

\subsection{Our contributions and techniques}\label{sec:contributions}
We first address the questions from the introduction for \emph{extension parallel (EP)} congestion games, a well-studied special case of symmetric network congestion games, see, e.g., \cite{Holzman1997,Fotakis2010,Fujishige2015}. Here, the common strategy set of all players is given by the set of $o,d$-paths $\mathcal{P}$ of an \emph{extension parallel graph} (see Section \ref{sec:ep} for a definition and example). Our main result is that there is a simple Markov chain converging quickly to the Gibbs distribution over $\mathcal{S}$ (also implying that we can sample approximately from the Gibbs distribution). 

We show that if one, in addition to the logit dynamics transitions, is allowed to randomly interchange the strategies of two players (akin to the explanation given after Example \ref{exmp:slow}), the resulting Markov chain converges quickly to the Gibbs distribution. We call this dynamics the \emph{relaxed logit dynamics} (see Section \ref{sec:relaxed} for a formal definition). Note that Theorem \ref{thm:ep_informal} gives a doubly exponential improvement w.r.t. the dependence on $T\Phi_{\max}$ compared to the lower bound as given in Example \ref{exmp:slow}.

\begin{theorem}[Informal]
The \emph{relaxed logit dynamics} for extension parallel (EP) congestion games, at temperature $T$, converges to a distribution ``$\epsilon$-close'' to the Gibbs distribution in at most
$$
n^3 \left( \log n + \log \log |\mathcal{P}| + \log \left(\frac{2T\Phi_{\max}}{\epsilon^2}\right) \right)
$$
steps, where $n$ is the number of players, $\mathcal{P}$ the number of paths in the EP graph and $\Phi_{\max}$ the maximum value attained by Rosenthal's potential.\footnote{If one would drop the assumption that the cost functions are non-negative (see Section \ref{sec:pre}), the parameter $\Phi_{\max}$ can be replaced by $\Delta \Phi := \Phi_{\max} - \Phi_{\min}$ where $\Phi_{\min}$ is the minimum value attained by Rosenthal's potential. This also holds for all subsequent results.} \label{thm:ep_informal}
\end{theorem}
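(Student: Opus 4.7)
The plan is to reduce the mixing-time analysis of the relaxed logit dynamics to the Anari--Liu--Oveis Gharan--Vinzant (ALOV) framework for sampling from strongly log-concave distributions on the bases of a matroid. The first ingredient is a structural observation tailored to extension parallel graphs: using the series--parallel decomposition (in which every series composition involves a single edge), the set of $o,d$-paths $\mathcal{P}$ can be put in bijection with the bases of a (laminar/partition-type) matroid $M_{EP}$, where a path is specified by choosing one edge per ``parallel block''. Consequently, the set $\mathcal{S} = \mathcal{P}^n$ of strategy profiles is in bijection with the set of bases of the direct sum $M = M_{EP}^{\oplus n}$, and the Gibbs distribution $\pi$ is transported to a weighted distribution over these bases.

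The core technical step is to verify that this weighting is strongly log-concave (SLC) in the sense required by ALOV. Since Rosenthal's potential $\Phi(s)$ depends only on the load vector, and since $c_e$ is non-decreasing so that $g_e(\ell) = \sum_{k=1}^{\ell} c_e(k)$ is convex in $\ell$, the weight $e^{-T\Phi(s)}$ factorizes across parallel blocks, and within each block takes the form of a log-concave weighting of a multinomial-type distribution on a partition matroid. I would then combine (i) the fact that such convex-cost weightings yield SLC distributions on partition matroid bases, with (ii) closure of SLC under direct sums, to conclude that the Gibbs distribution on the bases of $M$ is strongly log-concave.

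Once SLC is established, ALOV's main theorem yields that the basis-exchange ``down-up'' walk on $M$ mixes in relative entropy in $\order(n \log(1/\pi_{\min}))$ steps. The final step is to identify this walk with the relaxed logit dynamics: removing a random ground-set element of the current base and resampling it from the Gibbs-weighted conditional distribution corresponds exactly to selecting a player $i$ and a block, and updating $i$'s choice in that block according to the logit update rule; the additional two-player strategy interchange implements the two-element basis exchange needed to overcome the between-player bottleneck illustrated by Example \ref{exmp:slow}. Using $\pi_{\min} \geq e^{-T\Phi_{\max}}/|\mathcal{P}|^n$ so that $\log(1/\pi_{\min}) \leq T\Phi_{\max} + n \log |\mathcal{P}|$, and converting from relative entropy to total variation distance (which introduces the $\log(1/\epsilon^2)$ factor via Pinsker-type inequalities), one obtains the claimed $n^3\bigl(\log n + \log\log|\mathcal{P}| + \log(2T\Phi_{\max}/\epsilon^2)\bigr)$ bound after a careful accounting of polynomial overhead in translating the down-up walk into single-player logit/swap moves.

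The main obstacle I anticipate is the SLC verification in the second step. Strong log-concavity is a rigid condition, not automatic even for simple matroids with non-trivial (non-linear) potentials, and checking it amounts to proving complete log-concavity of a multivariate generating polynomial. The convexity of $g_e$ only gives log-concavity of the induced weight on load vectors; lifting this to complete log-concavity of the base-polytope weighting is where the particular laminar/partition structure of extension parallel graphs is essential, and I expect this to be the most delicate part of the proof.
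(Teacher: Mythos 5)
Your proposal has the right high-level ingredients (SLC distributions, the ALOV basis-exchange walk, Cryan--Guo--Mousa-type mixing bounds) but it diverges from what actually works in two essential places, and the gap is not a ``careful accounting of polynomial overhead.''

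First, the structural claim that $\mathcal{P}$ is in bijection with the bases of a matroid $M_{EP}$, so that $\mathcal{S}=\mathcal{P}^n$ is the basis set of $M_{EP}^{\oplus n}$, is neither established nor what the paper uses, and the subsequent ``factorizes across parallel blocks'' argument for strong log-concavity is too weak. Rosenthal's potential does split across \emph{resources}, but loads on arcs in different series positions of the same path are coupled through that path's load, so the weight does not factor over the blocks of your proposed matroid. The paper avoids this entirely by working with \emph{strategy load profiles} $\alpha\in\mathcal{L}=\{\alpha\in[q]^n:|\alpha|=n\}$, i.e.\ how many players use each path. SLC of the generating polynomial (Lemma~\ref{lem:ep_SLC}) then follows from the non-trivial theorem of Fujishige et al.\ that Rosenthal's potential is M-convex as a function of strategy load profiles in EP games (Proposition~\ref{prop:fujishige}), combined with Proposition~\ref{prop:sufficient}. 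Convexity of the per-edge $g_e$ alone does not give this, and the paper explicitly notes that M-convexity already fails for the smallest non-EP series--parallel graph, so the property really is specific and needs a citation, not a plausibility argument.

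Second, and more fundamentally, the identification of the ALOV down-up walk with the relaxed logit dynamics does not hold, even approximately. The walk whose mixing ALOV controls lives on the bases of the \emph{polarized} matroid built from $\mathcal{L}$ (an $n$-uniform matroid on $\sim qn$ auxiliary elements), i.e.\ it moves between load profiles, not between strategy profiles. A single step of that walk changes one path's multiplicity by $\pm1$; it does not know which player moved. Conversely, one two-player interchange in the relaxed logit dynamics changes a permutation within a fixed load class, which is not a single (or any small number of) basis exchanges. Because of this mismatch the paper's proof of Theorem~\ref{thm:relaxed} is a decomposition argument, not a direct reduction: it partitions $\mathcal{S}$ by strategy load profile, applies the Hermon--Salez modified-log-Sobolev decomposition theorem, bounds the projection chain's constant by a comparison to the polarized base-exchange walk (using Cryan--Guo--Mousa's $\rho\ge 1/r$), and bounds the restriction chains' constant by a comparison to the random transposition walk (using Goel's $\rho_{rt}\ge 1/(n-1)$). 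The three factors of $n$ in the final $n^3$ bound come from these three separate pieces, and none of them falls out of applying Theorem~\ref{thm:base_walk} to a single matroid. Without the decomposition-plus-comparison structure your argument cannot produce a bound on the relaxed logit dynamics itself; at best it would re-derive the direct sampler of Theorem~\ref{thm:ep}.
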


The notion of ``$\epsilon$-close'' refers to the fact that the distribution seen after the indicated number of steps differs from the Gibbs distribution at most $\epsilon$ in \emph{total variation distance} (see Section \ref{sec:markov}), a well-known distance measure for comparing probability distributions in Markov chain theory.

In a nutshell, Theorem \ref{thm:ep_informal} follows from the fact that in EP congestion games Rosenthal's potential is  \emph{M-convex}, as was shown by Fujishige et al. \cite{Fujishige2015}. M-convexity is a property defined in the area of discrete convex analysis \cite{Murota1998} (see Section \ref{sec:matroid}).\footnote{In fact, the result above generalizes directly to ``symmetric congestion games for which Rosenthal's potential is M-convex'', but we are not aware of any other interesting class of congestion games for which this is true (and therefore choose to formulate our results in terms of EP congestion games). M-convexity of Rosenthal's potential already fails to hold for the smallest non-extension parallel network congestion game (the "Figure 8" graph that has two graphs, both consisting of two parallel edges, in series).} The link between M-convexity and sampling has, roughly speaking, been established in a series of papers by Anari et al.  \cite{SLC1,SLC2,SLC4} through the theory of strongly log-concave polynomials, which in turn is also largely developed by Br\"and\'en and Huh \cite{BH2019} (under the name Lorentzian polynomials). In particular, Anari et al. \cite{SLC2} give the first polynomial time algorithm for approximately sampling and counting the number of bases of a given matroid, resolving also an old conjecture by Mihail and Vazirani \cite{Mihail1989}. In this work, we rely on the sampling result from \cite{SLC2}, albeit for relatively simple matroid structures. 

Before proving the result above, we also give another way of sampling from the Gibbs distribution in Section \ref{sec:ep_gibbs}, that essentially is a more direct approach than the sampler induced by the  Markov chain result given above. The high-level approach used for this more direct sampler (and for the additional application given in the next paragraph) is given in Section \ref{sec:overview}. Finally, we give an application of our results to the problem of (almost) uniformly sampling pure Nash equilibria in EP congestion games in Section \ref{sec:nash}, which is, to the best of our knowledge, the first of its kind.\\

\noindent Furthermore, we also study the class of so-called \emph{$u$-capacitated $k$-uniform congestion games} for given $k = (k_1,\dots,k_n)$ and $u = (u_1,\dots,u_m)$. In such a game the strategy set of player $i \in N$ is given by all subsets of $E$ of size $k_i$. Furthermore, for every $e \in E$ we are given a \emph{capacity} $u_e$ so that $c_e(x) = \infty$ whenever $x > u_e$. 

The motivation for studying these games comes from the class of \emph{base-matroid} congestion games, where the strategy set of every player is the set of bases $\mathcal{B}_i$ of a given matroid $\mathcal{M}_i$ over the ground set of resources $E$. It is well-known that best response dynamics converges to a PNE in a polynomial number of steps in this class of games  \cite{Ackermann2008}, and so, in particular, a PNE can be computed in polynomial time. Given the base matroid sampling result of Anari et al. \cite{SLC2}, a natural question that comes to mind is  if a similar result exists for sampling from the Gibbs distribution in base matroid congestion games. Here, we give a first result addressing this question. (Note that in our setting the strategies of player $i \in N$ are the bases of the $k_i$-uniform matroid.)

We next explain why there is a need for capacity constraints in our results. A strategy profile $s$ can be seen as a bipartite graph where the nodes on one side correspond to the players, having degrees $k_i$, and the nodes on the other side  correspond to the resources, having degrees $\ell_e(s)$ (the resource load on $e$ in profile $s$). For a given profile $s$ with resource load profile $\ell(s)$, the bipartite graph is obtained in the natural way: There is an edge between player $i$ and resource $e$ if player $i$ uses resource $e$ in profile $s$. Very roughly speaking, in order to apply the sampling result in \cite{SLC2}, we establish strong log-concavity of a certain polynomial associated to the vectors $k$ and $u$. For this we rely on an asymptotic enumeration formula\footnote{Asymptotic enumeration of graphs with given degrees has been studied extensively in the area of combinatorics.} for the number of bipartite graphs with a given degree sequence, which, in terms of congestion games, gives the number of strategy profiles that have a given resource load profile $\ell$. The formula that we use is only valid for the range of $k = (k_1,\dots,k_n)$ and resource load profiles $\ell(s) = (\ell_e(s))$ satisfying the imposed capacity constraints as given in Theorem \ref{thm:cap_informal}. Our main result is as follows.

\begin{theorem}[Informal] There is an (almost)\footnote{See Remark \ref{rem:epsilon}.} 
 polynomial time algorithm for approximately sampling from the Gibbs distribution in $u$-capacitated $k$-uniform congestion games assuming that $1 \leq k_{\max}u_{\max} = o\left(U^{1/4}\right)$ when $n \rightarrow \infty$, where $k_{\max} = \max_i k_i$, $u_{\max} = \max_j u_j$ and $U = \sum_j u_j$. 
\label{thm:cap_informal}
\end{theorem}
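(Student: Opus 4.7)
The plan is to reduce the sampling task to drawing a resource load profile $\ell = (\ell_e)_{e \in E}$, and then lifting this to a strategy profile. Since Rosenthal's potential depends on $s$ only through $\ell(s)$, the Gibbs distribution factorizes as
\begin{equation*}
\pi(s) \;=\; \frac{\mu(\ell(s))}{N(k,\ell(s))}, \qquad \mu(\ell) \;\propto\; N(k,\ell)\, e^{-T\Phi(\ell)},
\end{equation*}
where $N(k,\ell)$ counts the number of strategy profiles with load $\ell$, equivalently the number of simple bipartite graphs with left-degree sequence $k = (k_i)$ and right-degree sequence $\ell = (\ell_e)$. Hence it suffices to (i) draw $\ell$ from $\mu$, and (ii) draw a bipartite graph uniformly at random from those realizing the prescribed $(k,\ell)$. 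Step (ii) is a classical problem for which polynomial-time samplers are available (the configuration model with rejection, or a switch-chain type MCMC) in the sparsity regime $k_{\max}u_{\max} = o(U^{1/4})$ assumed in the theorem.

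For step (i), I would invoke an asymptotic enumeration formula for simple bipartite graphs with given degree sequences, valid in the same sparsity regime, of the form
\begin{equation*}
N(k,\ell) \;=\; \frac{K!}{\prod_i k_i!\,\prod_e \ell_e!}\,\exp\bigl(-\Lambda(k,\ell)\bigr)\,\bigl(1+o(1)\bigr),
\end{equation*}
with $K = \sum_i k_i$ and a correction term $\Lambda(k,\ell)$ that is uniformly $o(1)$ under the hypothesis. The prefactor $K!/\prod_i k_i!$ is absorbed into the normalization of $\mu$, so up to $(1+o(1))$ multiplicative error the target becomes
\begin{equation*}
\tilde\mu(\ell) \;\propto\; \prod_{e=1}^m \frac{e^{-T\Phi_e(\ell_e)}}{\ell_e!}, \qquad \Phi_e(j) \;:=\; \sum_{r=1}^j c_e(r),
\end{equation*}
supported on the transportation polytope $\{\ell \in \Z_{\ge 0}^m : \sum_e \ell_e = K,\ \ell_e \le u_e\}$.

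Because the $c_e$ are non-negative and non-decreasing, each $\Phi_e$ is discretely convex, which in turn forces the coefficient sequence $e^{-T\Phi_e(\ell_e)}/\ell_e!$ to be log-concave. I would then package $\tilde\mu$ as the weighted basis distribution on the ground set of $U = \sum_e u_e$ elements obtained by introducing $u_e$ copies of each resource, viewed as a partition / rank-$K$ uniform matroid, with element weights (and a light inner rejection step) chosen so that the induced signature distribution coincides with $\tilde\mu$. Strong log-concavity of the associated basis-generating polynomial, in the Anari--Liu--Oveis-Gharan--Vinzant / Br\"and\'en--Huh sense, then yields a polynomial-time sampler for $\tilde\mu$ via \cite{SLC2}. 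A final outer rejection step absorbs the $(1+o(1))$ discrepancy between $\tilde\mu$ and $\mu$; its acceptance probability is bounded away from zero precisely because $\Lambda(k,\ell) = o(1)$ uniformly over the support.

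The main obstacle is the strong log-concavity step: concretely, choosing the matroid encoding and the element weights so that, after the inner rejection, the projected signature distribution really is $\tilde\mu$ and the associated multivariate generating polynomial is provably Lorentzian. The per-resource factors $e^{-T\Phi_e(\ell_e)}/\ell_e!$ are not of a form already covered by the standard product-of-real-rooted-polynomials construction, so some additional work is needed to embed them into the Lorentzian framework. The sparsity hypothesis $k_{\max}u_{\max}=o(U^{1/4})$ plays a dual role here: it is exactly the regime in which the asymptotic enumeration of bipartite graphs with given degree sequences is accurate enough to keep the outer rejection efficient, and it is also the regime in which the conditional bipartite-graph sampler used in step (ii) runs in polynomial time.
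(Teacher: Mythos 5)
Your high-level architecture matches the paper's: sample a resource load profile $\ell$, then draw a uniform bipartite graph with degree sequence $(k,\ell)$, with the asymptotic enumeration of $\mathcal{G}(k,\ell)$ providing the bridge. But there is a concrete gap in how you handle the correction term $\Lambda(k,\ell)$, and it breaks your outer rejection step.

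In McKay's formula the dominant correction is $\exp\!\bigl(-\tfrac{K_2}{K^2}A_2\bigr)$ with $K_2 = \sum_i k_i(k_i-1)$ and $A_2 = \sum_j \alpha_j(\alpha_j-1)$. Under the hypothesis $k_{\max}u_{\max}=o(K^{1/4})$ this exponent is \emph{not} $o(1)$: bounding $K_2 \le K k_{\max}$ and $A_2 \le K\alpha_{\max}$ only gives $\tfrac{K_2}{K^2}A_2 \le k_{\max}\alpha_{\max}$, which can be as large as $K^{1/4-\varepsilon}$. So $\exp(-\Lambda)$ can be exponentially small in $K^{1/4}$, and replacing $N(k,\ell)$ by the bare $\tfrac{K!}{\prod_i k_i! \prod_e \ell_e!}$, as your $\tilde\mu$ does, is a multiplicative distortion that varies wildly across the support of $\ell$. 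Your outer rejection step is then not a ``light'' fix: its acceptance probability is governed by the range of $\exp(-\Lambda)$ over the support, which is not bounded away from zero. Only the secondary error term $O\bigl(\max\{k_{\max},\alpha_{\max}\}^4/K\bigr)$ is uniformly $o(1)$ in this regime.

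The fix --- and the key observation the paper actually uses --- is to \emph{keep} the $\exp\!\bigl(-\tfrac{K_2}{K^2}A_2\bigr)$ factor inside the approximating weight $\phi(\alpha) = \tfrac{K!}{k!\alpha!}\exp\!\bigl(-\tfrac{K_2}{K^2}A_2\bigr)$ and observe that $A_2(\alpha)=\sum_j \alpha_j(\alpha_j-1)$ is separable convex in $\alpha$. Since Rosenthal's potential $\Phi(\alpha)$ is also separable convex (Proposition~\ref{prop:ros_sep}), the log-weight $-T\Phi(\alpha)-\tfrac{K_2}{K^2}A_2(\alpha)$ remains separable concave, hence $M$-concave on $L(k,u)$, and Proposition~\ref{prop:sufficient} (Br\"and\'en--Huh) directly certifies that $\sum_\alpha \tfrac{w(\alpha)}{\alpha!}x^\alpha$ is strongly log-concave. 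This also resolves the hesitation in your last paragraph: the per-resource factors $e^{-T\Phi_e(\ell_e)}/\ell_e!$ do not need the product-of-real-rooted-polynomials route; the $M$-concavity criterion of Proposition~\ref{prop:sufficient} covers them (and, crucially, also absorbs the extra $A_2$ term) with no additional embedding work. With $\phi(\alpha)/|\mathcal{S}(\alpha)| = 1+o(1)$ now a genuine $(1+o(1))$, the paper replaces your outer rejection by a Markov chain comparison argument on the base-exchange walk over the polarized truncated partition matroid, and handles the fact that $|\mathcal{S}(\alpha)|$ is not exactly computable by plugging in an FPRAS for bipartite degree-sequence counting.
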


The proof of Theorem \ref{thm:cap_informal} reveals an interesting connection between M-convexity and asymptotic enumeration formulas, that might be of independent interest.\\ 

\noindent To the best of our knowledge, ours are the first sampling results for the Gibbs distribution in congestion games, beyond the well-studied case of Glauber dynamics for the Ising model (see Section \ref{sec:rel_work}). Given the extensive attention that logit dynamics has received in various communities, we believe this to be an interesting line of work, at the intersection of algorithmic game theory, combinatorics and approximate sampling, to pursue further. 

In particular, for special cases of congestion games with a positive answer to questions $(1)$ and $(2)$ as in the introduction, do there also exist positive answers for their dynamic analogues? As a concrete open question, we ask whether it is always possible to efficiently sample from the Gibbs distribution in general base matroid congestion games \cite{Ackermann2008}. If true, this would provide an interesting (qualitative) game-theoretical generalization of the sampling result of Anari et al. \cite{SLC2}.

\subsection{Further related work}\label{sec:rel_work}
For an exposition of the notion of logit equilibrium, and more related work, we refer to the survey article of Ferraioli \cite{Ferraioli2013} and references therein. There are also various results addressing the inefficiency of `long-term' equilibria in the context of logit dynamics, see, e.g., the works of Asadpour and Saberi \cite{Asadpour2009}, Mamageishvili and Penna \cite{Mamageishvili2016} and Penna \cite{Penna2018}.
 
A special case of potential games for which logit-like dynamics have been studied extensively is the Glauber dynamics for the Ising model, which, in game-theoretical terms, can be seen as logit dynamics in so-called \emph{max cut games}, see, e.g., \cite{GM09}.\footnote{There exist many generalizations of max cut games, see, e.g., \cite{KS2019} and references therein, for which it might also be interesting to study the logit dynamics.} Here, we are given a graph $G = (V,E)$ of which its nodes $V$ are players that all have strategy set $\{-1,+1\}$. The potential function in this case is given by $\Phi(s) = \sum_{\{i,j\} \in E} s_is_j$ for a strategy profile $s = (s_1,\dots,s_n)$. Whether or not the logit dynamics are rapidly mixing in this case depends on the parameter $T \geq 0$ and graph topology $G$, see, e.g., the work of Levin, Luczak and Peres \cite{Levin2010} and references therein. Jerrum and Sinclair \cite{Jerrum1993} show that, nevertheless, there exists a polynomial time algorithm to sample from the Gibbs distribution for any parameter $T \geq 0$ and graph topology. The first question posed in the introduction essentially aims at investigating to what extent a similar result is possible for (special classes of) congestion games.

\section{Preliminaries}\label{sec:pre}
In this section we will give all the necessary preliminaries regarding resource allocation (or congestion) games, strongly log-concave polynomials and the relevant Markov chain notions and results. We start with some general notation.

All logarithms in this work have Euler's number $e$ as their base, unless specified otherwise. For $k \in \Z_{>0}$, we write $[k] = \{1,\dots,k\}$. For two vectors $x,y \in \Z^n$, we write $x \leq y$ if $x_i \leq y_i$ for $i = 1,\dots,n$, and $x < y$ if strict inequality holds for at least one $i$. Furthermore, with $|x| = \sum_{i=1}^n |x_i|$ we denote the modulus of $x$. We use $(e_i)_{i = 1,\dots,n}$ to denote the standard basis of $\R^n$, i.e., $e_i(k) = 1$ if $k = i$ and $e_i(k) = 0$ otherwise. 

\subsection{Congestion games}\label{sec:pre_cong}
A \emph{capacitated congestion game} $\Gamma$ is given by a tuple $(N,E,(\mathcal{S}_i)_{i\in N},(c_e)_{e\in E},(u_e)_{e \in E})$, where $N = [n]$ is a finite set of players, $E = [m]$ a finite set of resources (or facilities), $\mathcal{S}_i \subseteq 2^{E}$ is a set of strategies of player $i \in N$, and $c_e : \Z_{\geq 0} \rightarrow \Q$ the cost function of resource $e \in E$ that satisfies $c_e(x) = W$ whenever $x > u_e$ for $e \in E$ with $W$ a (sufficiently) large number.\footnote{Think of $W$ as being $\infty$.} Unless stated otherwise, the cost functions are assumed to be non-negative and non-decreasing. Finally, $u_e$ is non-negative integer modeling the capacity on resource $e \in E$. If $u_e = n$ for every resource $e \in E$, we simply call $\Gamma$ a \emph{congestion game}.\footnote{Extension parallel and capacitated uniform congestion games are defined in Sections \ref{sec:ep} and \ref{sec:cap}, respectively.} For a strategy profile $s = (s_1,\dots,s_n) \in \times_i \mathcal{S}_i = \mathcal{S}$, we define $\ell_e(s)$ as the number of players using resource $e$, i.e., $\ell_e(s) = |\{i \in N : e \in s_i\}|$. A game is called \emph{symmetric} if $\mathcal{S}_i = \mathcal{S}_j$ for all $i,j \in [n]$. We then write $\mathcal{P}$ to denote the common strategy set of all players.

We call $\ell(s) = (\ell_e(s))_{e \in E}$ the \emph{resource load profile corresponding to strategy profile $s$}. We say that a strategy $s \in \mathcal{S}$ is \emph{feasible} if $\ell_e(s) \leq u_e$ for every $e \in E$ and write $\mathcal{S}_f$ to denote the set of all feasible strategy profiles.\footnote{We only consider games in which the set of feasible strategy profiles is non-empty.}  
More generally, we say that $y \in \mathbb{N}^m$ is a \emph{(feasible) resource  load profile} for $(N,E,(\mathcal{S}_i)_{i\in N},(u_e)_{e \in E})$ if there is some (feasible) strategy profile $s$ such that $y= \ell(s)$. We write $\mathcal{S}(y)$ for the set of strategy profiles $s \in \times_i \mathcal{S}_i$ whose resource load profile is $y$. 

Similarly, for symmetric congestion games we define the notion of a \emph{strategy load profile} that models how many players are using a strategy $p \in \mathcal{P}$ in a given strategy profile $s \in \mathcal{P}^n$. More precisely, given a strategy profile $s \in \mathcal{P}^n$, we define $z_p(s) = |\{ i \in N : s_i = p\}|$ as the number of players choosing strategy $p \in \mathcal{P}$ in strategy profile $s$. The vector $z(s) = (z_p(s))_{p \in \mathcal{P}}$ is called the strategy load profile of $s$. Similarly as for resource load profiles, we define (with a slight abuse of notation) $\mathcal{S}(x)$ as the set of strategy profiles $s$ for which $x =  z(s)$.\\

\noindent The cost of player $i \in N$ under a strategy profile $s = (s_1,\dots,s_n) \in \times_i \mathcal{S}_i$ is given by 
$
C_i(s) = \sum_{e \in s_i} c_e(\ell_e(s)).
$
A strategy profile $s \in \mathcal{S}$ is called a \emph{(pure) Nash equilibrium} if for every $i \in N$ and every $s_i' \in \mathcal{S}_i$ it holds that $C_i(s) \leq C_i(s_i',s_{-i})$, where $(s_i',s_{-i})$ denotes the strategy profile in which player $i$ plays $s_i'$ and every other player $j \neq i$ plays $s_j$. We write $\text{NE}(\Gamma)$ to denote the set of all pure Nash equilibria of $\Gamma$.

We say that $\Phi : \times_i \mathcal{S}_i \rightarrow \R$ is an \textit{exact potential function} for a congestion game $\Gamma$ if for every strategy profile $s \in \times_i \mathcal{S}_i$, for every player $i \in N$ and every unilateral deviation $s'_i \in \mathcal{S}_i$ of $i$ it holds: 
$
\Phi(s) - \Phi(s_{-i},s_i') = C_i(s) - C_i(s_{-i},s_i').
$
Rosenthal \cite{Rosenthal1973} shows that 
\begin{equation}\label{eq:rosenthal}
\Phi(s) = \sum_{e \in E} \sum_{k=1}^{\ell_e(s)} c_e(k)
\end{equation}
is an exact potential function for any congestion game). 
Subsequently, we refer to this potential function as \emph{Rosenthal's potential}.

 A function $\phi : \{0,\dots,n\} \rightarrow \R$, is called \emph{convex} if $\phi(i) - \phi(i-1) \leq \phi(i+1) - \phi(i)$ for all $i = 1,\dots,n-1$. A function $\psi : \{0,\dots,n\}^m \rightarrow \R$ is called \emph{separable convex} if it is of the form $(x_1,\dots,x_m) \mapsto \sum_{j=1}^m \psi_j(x_j)$ where the $\psi_j$, given by $x_j \mapsto \psi_j(x_j)$, are convex. We say that $\phi$ is \emph{concave} if $-\phi$ is convex, and, similarly, $\psi$ is \emph{separable concave} if $-\psi$ is separable convex. A simple, but important, observation that we will use in this work is the fact that Rosenthal's potential is a separable convex function when seen as a function from resource load profiles to the reals. That is, the function $\bar{\Phi}: \Z_{\geq 0}^m \rightarrow \R$, given by 
\begin{equation}\label{eq:rosenthal_resource}
\bar{\Phi}(\alpha) = \sum_{e \in E} \sum_{k=1}^{\alpha_e} c_e(k)
\end{equation}
for $\alpha \in \Z_{\geq 0}^m$ is separable convex.
\begin{proposition}\label{prop:ros_sep}
If the cost functions $(c_e)_{e \in E}$ are non-decreasing, then Rosenthal's potential $\bar{\Phi}$ is a separable convex function.
\end{proposition}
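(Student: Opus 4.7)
The plan is straightforward: the definition of $\bar{\Phi}$ already exhibits it as a sum of single-variable functions, so separability is free and the only real content is convexity of each coordinate function.

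First, I would write $\bar{\Phi}(\alpha) = \sum_{e \in E} \bar{\Phi}_e(\alpha_e)$, where the single-variable function $\bar{\Phi}_e : \Z_{\geq 0} \rightarrow \R$ is defined by $\bar{\Phi}_e(t) = \sum_{k=1}^{t} c_e(k)$ (with the empty sum equal to $0$). This matches the form required by the definition of a separable convex function given just before the proposition.

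Next, I would verify that each $\bar{\Phi}_e$ is convex in the discrete sense stated in the paper, i.e., that $\bar{\Phi}_e(t) - \bar{\Phi}_e(t-1) \leq \bar{\Phi}_e(t+1) - \bar{\Phi}_e(t)$ for all $t \geq 1$. Computing the first differences directly from the definition gives $\bar{\Phi}_e(t) - \bar{\Phi}_e(t-1) = c_e(t)$ and $\bar{\Phi}_e(t+1) - \bar{\Phi}_e(t) = c_e(t+1)$, so the required inequality reduces to $c_e(t) \leq c_e(t+1)$. This is exactly the assumption that $c_e$ is non-decreasing.

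Since each coordinate function $\bar{\Phi}_e$ is convex and $\bar{\Phi}$ is the sum of these coordinate functions, $\bar{\Phi}$ is separable convex by definition, completing the proof. There is no real obstacle here — the argument is essentially unpacking the two definitions, and the only subtlety worth flagging is that discrete convexity of $\bar{\Phi}_e$ is equivalent to monotonicity of its first differences $c_e(\cdot)$, which is where the non-decreasing hypothesis on the cost functions is used (and in fact this equivalence shows the hypothesis is also necessary).
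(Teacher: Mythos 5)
Your proof is correct and is exactly the intended argument: the paper states this as a simple observation without an explicit proof, and your unpacking — writing $\bar{\Phi}$ as $\sum_e \bar{\Phi}_e$ and noting that the first differences of $\bar{\Phi}_e$ are precisely $c_e(\cdot)$, so discrete convexity is equivalent to $c_e$ being non-decreasing — is the standard and expected derivation.
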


\paragraph{The Gibbs distribution and logit dynamics.}\label{sec:pre_gibbs}
The Gibbs distribution $\pi : \mathcal{S} \rightarrow \R_{\geq 0}$ over the strategy profiles of a congestion game $\Gamma$ is given by 
$$
\pi(s) = \frac{e^{-T \Phi(s)}}{Z}
$$
where $\Phi$ is Rosenthal's potential, $T \geq 0$ a so-called temperature parameter, and $Z$ the normalizing constant (or partition function)
$$
Z = \sum_{t \in \mathcal{S}_f} e^{-T \Phi(t)}.
$$ 

The \emph{logit dynamics Markov chain}\footnote{Formal Markov chain definitions are given in Section \ref{sec:markov}.} with current state $s \in \mathcal{S}$ proceeds by:
\begin{itemize}
\item Select a player $i \in N$ uniformly at random.
\item For $s_i' \in \mathcal{S}_i$, transition to $(s_i',s_{-i})$ with probability 
$e^{-T \Phi(s_i',s_{-i})}/Z_i'$ with normalizing constant
$$
Z_i' = \sum_{r \in \mathcal{S}_i} e^{-T \Phi(r,s_{-i})}.
$$
\end{itemize}

It is a standard fact, which can be shown by using \eqref{eq:exact}, that this Markov chain is reversible with respect to the  Gibbs distribution.

\subsection{Matroids and $M$-concavity}
\label{sec:matroid}
Let $E = [n]$ be a finite set called the \emph{ground set} and $\mathcal{I} \subseteq 2^E = \{X : X \subseteq E\}$ a collection of subsets of $E$ (called \emph{independent sets}). The pair $\mathcal{N} = (E,\mathcal{I})$ is a \emph{matroid} if  i) $\emptyset \in \mathcal{I}$; ii) $A \in \mathcal{I}$ and $B \subseteq A$, then $B \in \mathcal{I}$; iii) $A,B \in \mathcal{I}$ and $|A| > |B|$, then there exists an $a \in A \setminus B$ such that $B + a \in \mathcal{I}$.\footnote{For $A,B,C \subseteq E$ with $|B|, |C| \leq 1$, we use the notation $A - B + C$ to denote the set $(A \setminus B) \cup C$.}
An independent set $B \in \mathcal{I}$ of maximum size is called a \emph{basis}. We use $\mathcal{B}$ to denote the set of all bases of $\mathcal{N}$. The set of bases $\mathcal{B}$ satisfies the so-called \emph{base-exchange property}: If $B,B' \in \mathcal{B}$ and $e \in B \setminus B'$, then there exists an $e' \in B' \setminus B$ such that $B + e' - e \in B$. It safisfies the \emph{strong base-exchange property} if both $B + e' - e,\ B' - e' + e \in B$. The \emph{rank} of a matroid is the common cardinality $r$ of all bases in $\mathcal{B}$. The $\ell$-truncation $\mathcal{M}_{\ell} = (E,\mathcal{I}_\ell)$ of a matroid $\mathcal{M}$ is the matroid with $A \in\mathcal{I}_\ell$ if and only if $A \in \mathcal{I}$ and $|A| \leq \ell$.
The \emph{partition matroid} is given by a disjoint partition $E = E_1 \cup \dots \cup E_q$ of the ground set $E$, and upper bounds $u_i$ for $i = 1,\dots,q$. A set $A \subseteq E$ is independent if and only if $|A \cap E_i| \leq u_i$ for all $i = 1,\dots,q$. The \emph{$k$-uniform matroid} is the matroid in which $A \subseteq E$ is independent if and only if $|A| \leq k$. 

A \emph{discrete polymatroid} (which can be seen as a multi-set generalization of a matroid) is a finite set of vectors $R \subset \Z_{\geq 0}^n$ with the properties that i) $\mathbf{0} \in R$; ii) if $y \in R$ and $x \leq y$, then $x \in R$; and iii) if $x,y \in R$ with $|y|> |x|$, then there is a vector $w \in R$ such that $x < w < \max\{u,v\}$ (where the maximum is taken coordinate-wise). The set of bases $\mathcal{B}_R$ is given by all maximal vectors in $R$ that have a common modulus $r$.  A polymatroid satsifies the \emph{base-exchange property}: if $x,y \in \mathcal{B}_R$ and $x_i > y_i$, then there exists an index $j$ with $y_j > x_j$ and $x - e_i + e_j \in R$.

Finally, as a generalization of discrete polymatroids, we describe the notion of \emph{$M$-convexity} for functions \cite{Murota1998,Murota2009}. As we will mostly work with its negated counterpart of \emph{$M$-concavity}, we will describe this first. Let $\nu : \Z_{\geq 0}^n \rightarrow \R \cup \{-\infty\}$ be a function.\footnote{In particular, we set $\log(0) = -\infty$.} The \emph{effective domain} of $\nu$ is given by 
$$
\text{dom}(\nu) = \{\alpha \in \Z_{\geq 0}^n : v(\alpha) > - \infty\}.
$$
The function $\nu$ is called \emph{$M^\sharp$-concave} if it satisfies the \emph{(symmetric) exchange property}: For any $\alpha, \beta \in \text{dom}(\nu)$ and any $i \in [n]$ satisfying $\alpha_i > \beta_i$, there exists a $j \in [n]$ such that $\alpha_j < \beta_j$ and
\begin{equation}\label{eq:M-concave}
\nu(\alpha) + \nu(\beta) \leq \nu(\alpha - e_i + e_j) + \nu(\beta + e_i - e_j).
\end{equation}
It is well known that a separable concave function is $M^\sharp$-concave \cite{Murota2003}. The function $\nu$ is called \emph{$M$-concave} if it is $M^{\sharp}$-concave and, in addition,  there is an $r \in \Z_{\geq 0}$ such that $\text{dom}(\nu) \subseteq \{\alpha: \sum_i \alpha_i = r\}$. 
A function $\nu : \Z_{\geq 0}^n \rightarrow \R \cup \{\infty\}$ is called \emph{$M$-convex} if $-\nu$ is $M$-concave.

\subsection{Strongly log-concave polynomials}
We consider polynomials $p \in \R[x_1,\dots,x_n]$ with non-negative coefficients. For a vector $\beta = (\beta_1,\dots,\beta_n) \in \Z_{\geq 0}^n$, we write 
$$
\partial^\beta = \prod_{i=1}^n \partial_{x_i}^{\beta_i}
$$
to denote the partial differential operator that differentiates a function $\beta_i$ times with respect to $x_i$ for $i = 1,\dots, n$.  
For $\alpha \in \Z_{\geq 0}^n$, we write $x^\alpha$ to denote $\prod_{i =1}^n x_i^{\alpha_i}$. Furthermore, we write $\alpha! = \prod_i \alpha_i!$, and for $\alpha, \kappa \in \Z_{\geq 0}^n$ with $\alpha_i \leq \kappa_i$ for all $i$, we write
$$
\binom{\kappa}{\alpha} = \prod_{i = 1}^n \binom{\kappa_i}{\alpha_i}.
$$
For a constant $t \in \Z_{\geq 0}$ with $t \geq \max_i \alpha_i$, we write $\binom{t}{\alpha} = \prod_{i = 1}^n \binom{t}{\alpha_i}$.
Let $\kappa \in \Z_{\geq 0}^n$ and $K = \times_i \{0,\dots,\kappa_i\}$. Let $w : K \rightarrow \R_{\geq 0}$ be a weight function. The \emph{generating polynomial} of $w$ is given by 
$$
g_{\kappa}(x) = \sum_{\alpha \in K} w(\alpha)x^{\alpha}.
$$
The \emph{support} of $g_{\kappa}$ is the set $\text{supp}(g_{\kappa}) = \{\alpha \in K : w(\alpha) > 0\}$.
The generating polynomial $g$ is called $d$-homogeneous if $|\alpha| = \sum_i \alpha_i = d$ for all $\alpha \in \{0,\dots,k\}^m$ with $w_{\alpha} > 0$. It is called \emph{multi-affine} if every variable $x_i$ appears with at most multiplicity one in every monomial of $p$. For example, $q(x_1,x_2) = x_1x_2$ is multi-affine, but $r(x_1,x_2) = x_1^2 + x_1x_2$ is not, as the multiplicity of $x_1$ in the first monomial is two.
Finally, the \emph{elementary symmetric polynomial of degree $d$}, for $\kappa = (1,1,\dots,1)$, is given by 
$$
h_{\kappa}(x) = \sum_{\alpha \in \{0,1\}^n : |\alpha| = d} x^\alpha.
$$

\begin{definition}[Strong log-concavity \cite{G2009}]
A polynomial $p \in \R[x_1,\dots,x_n]$ with non-negative coefficients is called \emph{log-concave} on a subset $S \subseteq \R^n_{\geq 0}$ if its Hessian $\triangledown^2 \log(p)$ is negative semidefinite on $S$. A polynomial $p$ is called \emph{strongly log-concave (SLC)} on $S$ if for any $\beta \in \Z_{\geq 0}^n$, we have that $\partial^\beta p$ is log-concave.
\end{definition}
\noindent For convenience, the zero polynomial is defined to be strongly log-concave always. It is interesting to note that if a $d$-homogeneous multi-affine polynomial $p$ is SLC, then the support of $p$ must form the collection of bases of a matroid, and, more general, if a (not multi-affine) homogeneous polynomial is SLC, its support forms an $M$-convex set \cite{BH2019}.

Finally, if the generating polynomial $g_{\kappa}$ is strongly log-concave, then the probability distribution $\pi(\alpha) \propto w(\alpha)$ is called strongly log-concave.

\begin{remark}
The definition of strong log-concavity is (in this work) is not really needed, but included for completeness. In our proofs we essentially only rely on properties of SLC polynomials from the literature that are reviewed below. 
For homogeneous generating polynomials the notion of strong log-concavity is equivalent to that of a polynomial being \emph{Lorentzian} \cite{BH2019}, or \emph{completely log-concave} \cite{ALOV2018}. These equivalences are shown in \cite{BH2018}. 
\end{remark}

\noindent We next state all properties of SLC polynomials that will be used in this work.
First of all, it is easy to check that the SLC property is preserved under multiplication with a non-negative scalar, which we state below for sake of reference.

\begin{proposition}[Br\"and\'en and Huh \cite{BH2019}]\label{prop:scalar}
If $p \in \R[x_1,\dots,x_n]$ is SLC and $\gamma \in \R_{\geq 0}$, then $\gamma  p$ is SLC.
\end{proposition}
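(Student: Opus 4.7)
The plan is to verify the definition of strong log-concavity directly for $\gamma \cdot p$, splitting into two trivial cases based on the sign of $\gamma$.

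First I would dispose of the degenerate case $\gamma = 0$: then $\gamma \cdot p$ is the zero polynomial, which the paper has already declared to be SLC by convention.

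For the remaining case $\gamma > 0$, I need to show that for every $\beta \in \Z_{\geq 0}^n$, the polynomial $\partial^\beta (\gamma \cdot p)$ is log-concave on the relevant domain $S$. Since differentiation is linear we have $\partial^\beta (\gamma \cdot p) = \gamma \cdot \partial^\beta p$, and on the set where this quantity is strictly positive,
\[
\log\bigl(\gamma \cdot \partial^\beta p\bigr) \;=\; \log \gamma \,+\, \log\bigl(\partial^\beta p\bigr).
\]
The term $\log \gamma$ is a constant, so its Hessian vanishes; consequently
\[
\nabla^2 \log\bigl(\gamma \cdot \partial^\beta p\bigr) \;=\; \nabla^2 \log\bigl(\partial^\beta p\bigr),
\]
which is negative semidefinite on $S$ because $p$ itself is SLC. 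This gives log-concavity of $\partial^\beta(\gamma \cdot p)$, and since $\beta$ was arbitrary, $\gamma \cdot p$ is SLC.

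There is no real obstacle here: the statement is essentially the observation that log-concavity is invariant under multiplication by a positive scalar, since passing through the logarithm converts the scalar into an additive constant with vanishing Hessian. The only minor point to be careful about is treating the boundary case $\gamma = 0$ separately so as to land inside the zero-polynomial convention, and implicitly handling points where $\partial^\beta p$ vanishes, which is already subsumed in the paper's formulation of log-concavity via the Hessian on $S$.
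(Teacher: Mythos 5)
Your proof is correct and is precisely the direct verification the paper alludes to when it says the property ``is easy to check''; the paper itself does not spell out an argument but cites Br\"and\'en and Huh. You handle both cases cleanly: $\gamma = 0$ falls under the zero-polynomial convention, and for $\gamma > 0$ the multiplicative constant becomes an additive constant under the logarithm and so disappears from the Hessian, leaving negative semidefiniteness intact for every $\partial^\beta$.
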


We continue with the \emph{polarization operator} defined by Br\"and\'en and Huh \cite{BH2019}. 
The polarization operator introduces auxiliary variables in order to turn $p$ into a multi-affine polynomial over a larger set of variables.\footnote{We elaborate on polarization in Section \ref{sec:polymatroid}.} 
Formally, following \cite{BH2019}, for $\kappa \in \Z_{\geq 0}^n$ let
$$
\R_{\kappa}[x_i] = \Big\{\text{polynomials in } \R[x_i]_{1 \leq i \leq n} \text{ of degree at most } \kappa_i \text{ in } x_i \text{ for every } i\Big\},
$$
and
$
\R^a_{\kappa}[x_{ij}] = \left\{\text{multi-affine polynomials in } \R[x_{ij}]_{1 \leq i \leq n, 1 \leq j \leq \kappa_i}\right\}.
$
The polarization operator $\Pi_\kappa : \R_{\kappa}[x_i] \rightarrow \R^a_{\kappa}[x_{ij}]$ replaces every factor $x^\alpha$ by 
$$
\frac{1}{\binom{\kappa}{\alpha}}\prod_{i = 1}^n \Big( \text{elementary symmetric polynomial of degree } \alpha_i \text{ in the variables } \{x_{ij}\}_{1 \leq j \leq \kappa_i} \Big). 
$$

\begin{proposition}[Br\"and\'en and Huh \cite{BH2019}]\label{prop:homo}
If $p$ is $d$-homogeneous and SLC over $\R_{\kappa}[x_i]$, then $\Pi_\kappa(p)$ is $d$-homogeneous  and SLC over $\R^a_{\kappa}[x_{ij}]$.
\end{proposition}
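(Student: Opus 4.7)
The plan is to verify the two claims separately. Homogeneity follows by direct inspection: a monomial $x^\alpha$ of $p$ of total degree $|\alpha|=d$ is replaced by $\prod_i \binom{\kappa_i}{\alpha_i}^{-1}$ times $\prod_i e_{\alpha_i}(x_{i1},\ldots,x_{i\kappa_i})$, which is homogeneous of total degree $\sum_i \alpha_i = d$ in the auxiliary variables. Hence every monomial of $\Pi_\kappa(p)$ has total degree $d$, so $\Pi_\kappa(p)$ is $d$-homogeneous.

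For strong log-concavity, I would decompose $\Pi_\kappa$ as a composition of $n$ single-variable polarization operators $\Pi_\kappa^{(i)}$, where $\Pi_\kappa^{(i)}$ acts only on the variable $x_i$ (splitting it into $x_{i1},\ldots,x_{i\kappa_i}$) and leaves the remaining variables intact. Because a composition of SLC-preserving maps is SLC-preserving, it suffices to show that each $\Pi_\kappa^{(i)}$ carries SLC polynomials to SLC polynomials. Grouping $p$ by powers of $x_i$ as $p = \sum_{k=0}^{\kappa_i} p_k(x_{-i})\, x_i^k$, one obtains
\begin{equation*}
\Pi_\kappa^{(i)} p \;=\; \sum_{k=0}^{\kappa_i} p_k(x_{-i})\,\binom{\kappa_i}{k}^{-1} e_k(x_{i1},\ldots,x_{i\kappa_i}).
\end{equation*}

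The central workhorse will be a commutation identity. Using $\partial_{x_{i\ell}} e_k(x_{i1},\ldots,x_{i\kappa_i}) = e_{k-1}$ in the remaining $\kappa_i - 1$ auxiliary variables, together with the elementary identity $\binom{\kappa_i}{k}^{-1} = (k/\kappa_i)\binom{\kappa_i-1}{k-1}^{-1}$, a short calculation shows that $\partial_{x_{i\ell}} \Pi_\kappa^{(i)} p$ equals, up to the positive factor $1/\kappa_i$, the single-variable polarization of $\partial_{x_i} p$ in one fewer auxiliary variable; whereas $\partial_{x_j}$ with $j \neq i$ commutes with $\Pi_\kappa^{(i)}$ outright. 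Iterating these two identities, any multi-indexed partial derivative $\partial^\beta \Pi_\kappa^{(i)} p$ equals, up to a positive constant, the single-variable polarization of some partial derivative $\partial^{\beta'} p$, whose log-concavity on the positive orthant is guaranteed by the SLC hypothesis on $p$. This reduces the claim to a base case: that polarization itself preserves log-concavity of a non-negatively coefficiented polynomial. I would establish this base case via the Grace--Walsh--Szeg\H{o} coincidence theorem, which transfers real-stability between a symmetric multi-affine polynomial and its diagonal restriction, combined with the characterization of the SLC cone as the closure on the non-negative orthant of products of stable linear forms (the Lorentzian / completely log-concave viewpoint referenced in the remark).

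I expect the main technical obstacle to be precisely this base case, namely lifting log-concavity (rather than real-stability) through single-variable polarization. Real-stability and SLC are not literally the same notion, so a limiting / approximation argument is needed to transfer Grace--Walsh--Szeg\H{o}-style stability statements to the SLC cone, exploiting the continuity of $\Pi_\kappa^{(i)}$ as a linear operator and the fact that SLC is a closed condition. Once the base case is in place, composing the single-variable polarizations across $i=1,\ldots,n$ yields the desired SLC property of $\Pi_\kappa(p) \in \R^a_\kappa[x_{ij}]$.
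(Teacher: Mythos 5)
The paper does not prove this proposition: it is stated as a cited result of Br\"and\'en and Huh \cite{BH2019}, so there is no in-paper argument to compare your attempt against, only the original source.

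Your homogeneity check is correct, and so is the commutation identity $\partial_{x_{i\ell}}\,\Pi^{(i)}_\kappa p = \kappa_i^{-1}\,\Pi^{(i)}_{\kappa-e_i}\!\left(\partial_{x_i}p\right)$ and the reduction it enables: iterating it (together with the fact that $\partial_{x_j}$ for $j\neq i$ commutes with $\Pi^{(i)}_\kappa$) shows that any $\partial^{\beta}\Pi^{(i)}_\kappa p$ is a positive scalar multiple of a single-variable polarization of some $\partial^{\beta'}p$, so it indeed suffices to show that polarizing a derivative of an SLC polynomial produces a log-concave polynomial. This derivative--polarization intertwining is the same mechanism used in \cite{BH2019}, so up to the base case your argument is on sound and standard footing.

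The gap is in the base case itself. You appeal to ``the characterization of the SLC cone as the closure on the non-negative orthant of products of stable linear forms.'' No such characterization holds. Products of non-negative linear forms are real-stable, and the topological closure of the set of such products (or of any family of stable polynomials with non-negative coefficients) lies inside the cone of real-stable polynomials with non-negative coefficients. That cone is \emph{strictly} smaller than the Lorentzian/SLC cone: the basis-generating polynomial of any matroid is Lorentzian \cite{BH2019}, yet the V\'amos matroid $V_8$ famously fails the half-plane property (Choe--Oxley--Sokal--Wagner), so its basis-generating polynomial is SLC but not real-stable. Consequently a Grace--Walsh--Szeg\H{o} argument, which lives entirely inside the stable world, cannot reach the whole SLC cone, and your base case is unproved. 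The actual closure statement used in \cite{BH2019} is a different one --- Lorentzian polynomials are the Euclidean closure of the \emph{strictly Lorentzian} polynomials, an open signature condition on the Hessians of order-$(d-2)$ derivatives --- and it is that approximation, not real-stability, that makes the limiting argument for polarization go through.
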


We conclude this section by stating a large class of homogeneous polynomials that are known to be strongly log-concave.

\begin{proposition}[Br\"and\'en and Huh \cite{BH2019}]\label{prop:sufficient}
For $\kappa \in \Z_{\geq 0}^n$ and $w : K \rightarrow \R_{\geq 0}$ a non-negative weight function, consider
\begin{equation}\label{eq:standard_lorentzian}
g_{\kappa}(x) = \sum_{\alpha \in K} \frac{w(\alpha)}{\alpha!}x^{\alpha},
\end{equation}
and assume that $g_{\kappa}$ is $d$-homogeneous.
Let $\nu : \Z_{\geq 0}^n \rightarrow \R \cup \{-\infty\}$ be defined by
$
\nu(\alpha) = \log(w(\alpha))
$
for $\alpha \in K$ and $\nu(\alpha) = -\infty$ otherwise. If $\nu$ is $M$-concave, then $g_{\kappa}$ is SLC.
\end{proposition}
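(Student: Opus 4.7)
The plan is to reduce the claim to the multi-affine case via the polarization operator $\Pi_\kappa$ of Proposition~\ref{prop:homo}, where strong log-concavity for homogeneous multi-affine polynomials is known to be controlled by an $M$-convex/matroid-base support condition together with a compatible coefficient inequality, and then to transfer SLC back to $g_\kappa$ via the substitution $x_{ij} \mapsto x_i$.

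Applying $\Pi_\kappa$ to $g_\kappa$ gives a $d$-homogeneous multi-affine polynomial $\tilde g := \Pi_\kappa(g_\kappa)$ on the enlarged variable set $\{x_{ij} : 1 \le i \le n,\ 1 \le j \le \kappa_i\}$. The multi-affine monomial $\prod_{(i,j) \in S} x_{ij}$ appears with a positive coefficient in $\tilde g$ exactly when the projection $\alpha(S) := (|S \cap (\{i\} \times [\kappa_i])|)_i$ lies in $\text{supp}(g_\kappa)$, and that coefficient depends only on $\alpha(S)$. A key observation is that the substitution $x_{ij} := x_i$ in $\tilde g$ recovers $g_\kappa$ identically (the $\binom{\kappa_i}{\alpha_i}$ contributions from the elementary symmetric polynomials cancel exactly against the $1/\binom{\kappa}{\alpha}$ normalization built into $\Pi_\kappa$, and the $1/\alpha!$ factor in~\eqref{eq:standard_lorentzian} matches the monomial expansion); therefore, by closure of SLC under non-negative linear substitutions, it suffices to prove that $\tilde g$ is SLC.

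To prove SLC of $\tilde g$, I would first verify that $\text{supp}(\tilde g)$ is the set of bases of a matroid $\mathcal{M}$ on the ground set $\bigsqcup_i \{i\} \times [\kappa_i]$: given $S,S' \in \text{supp}(\tilde g)$ and $(i,j) \in S \setminus S'$, the exchange inequality~\eqref{eq:M-concave} applied to $\alpha(S), \alpha(S')$ yields an index $i'$ with $\alpha(S')_{i'} > \alpha(S)_{i'}$ (so that some $(i',j') \in S' \setminus S$ exists in the fiber over $i'$) realizing base exchange in both $S$ and $S'$. The same exchange inequality then transports to the level of the coefficients of $\tilde g$, placing $\tilde g$ in the scope of the matroid/multi-affine case of the characterization of SLC polynomials due to Br\"and\'en and Huh and Anari, Liu, Oveis~Gharan, and Vinzant. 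This gives that $\tilde g$ is SLC, and the substitution argument above completes the proof. The main difficulty will be the polarization bookkeeping: verifying that the normalizing factors in $\Pi_\kappa$ interact with the $1/\alpha!$ coefficients so that $M$-concavity of $\nu$ transports cleanly to the multi-affine coefficient inequality required to invoke the matroid case.
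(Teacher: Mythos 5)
The paper does not prove Proposition~\ref{prop:sufficient}; it simply cites Br\"and\'en and Huh \cite{BH2019}, so there is no in-paper argument to compare against. On the merits of your plan: the polarization identity (substituting $x_{ij}\mapsto x_i$ in $\Pi_\kappa(g_\kappa)$ recovers $g_\kappa$) and the claim that $\text{supp}(\Pi_\kappa(g_\kappa))$ is the base set of a matroid are both correct, and SLC is indeed closed under non-negative linear substitutions, so the reduction to proving SLC of $\tilde g:=\Pi_\kappa(g_\kappa)$ is sound.

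However, the step you label ``the main difficulty'' is not merely bookkeeping left for later; it is where the argument breaks. To put $\tilde g$ ``in the scope of the matroid/multi-affine case'' you would need the set function $S\mapsto\log\tilde c_S$, where $\tilde c_S = w(\alpha(S))\big/\big(\alpha(S)!\binom{\kappa}{\alpha(S)}\big)$, to satisfy the valuated-matroid exchange inequality, and this fails in general. Take $n=2$, $\kappa=(2,2)$, $d=2$, and $w\equiv 1$ on $\{(2,0),(1,1),(0,2)\}$, so $\nu\equiv 0$ is trivially $M$-concave. Then $g_\kappa=\tfrac12 x_1^2+x_1x_2+\tfrac12 x_2^2$, and the polarized coefficients are $\tilde c_{\{(1,1),(1,2)\}}=\tilde c_{\{(2,1),(2,2)\}}=\tfrac12$ while $\tilde c_{\{(1,i),(2,j)\}}=\tfrac14$. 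The exchange inequality for $B=\{(1,1),(1,2)\}$, $B'=\{(2,1),(2,2)\}$, $e=(1,1)$ would require $\tfrac12\cdot\tfrac12\le\tfrac14\cdot\tfrac14$ for some $f\in B'\setminus B$, which is false for both choices. So $\log\tilde c$ is not a valuated matroid, even though $\tilde g$ is of course SLC by Proposition~\ref{prop:homo}: the valuated-matroid condition is sufficient for SLC of a multi-affine polynomial, not a characterization, and polarization does not preserve it. In short, the normalizing factors $1/\binom{\kappa}{\alpha}$ destroy exactly the discrete log-concavity you are hoping to exploit, so your step 2 cannot be carried out as described. For reference, Br\"and\'en and Huh's own proof does not pass through polarization; they characterize Lorentzian polynomials via $M$-convexity of the support together with a signature condition (at most one positive eigenvalue) on the Hessians of all $(d-2)$-th order partials, and verify that signature condition directly from the exchange inequality for $\nu$.
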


\subsection{Markov chains}\label{sec:markov}
Let $\mathcal{M} = (\Omega,P)$ be an aperiodic, irreducible and time-reversible Markov chain with state space $\Omega$, transition matrix $P$, and stationary distribution $\pi$. Reversibility means that
$
\pi(x)P(x,y) = \pi(y)P(y,x)
$
for any $x, y \in \Omega$. 
We write $P^t(x,\cdot)$ for the distribution over $\Omega$ at time step $t$ with initial state $x \in \Omega$. The \emph{total variation distance} $d_{TV}(\pi,\sigma)$ of two distributions $\pi$ and $\sigma$ over $\Omega$ is defined as 
$
d_{TV}(\pi,\sigma) = \max_{S \subseteq \Omega} |\pi(S) - \sigma(S)| = \frac{1}{2}\sum_{x \in \Omega} |\pi(x) - \sigma(x)|,
$
where for a distribution $\sigma$ over $\Omega$, we write $\sigma(S) = \sum_{x \in S} \sigma(x)$. 
We say that two distributions $\pi$ and $\sigma$ are \emph{$\epsilon$-close} if $d_{TV}(\pi,\sigma) \leq \epsilon$.
The total variation distance of the distribution $P^t(x,\cdot)$ from $\pi$ at time $t$ with initial state $x$ is denoted by $\Delta_x(t)$. 
The \emph{mixing time} of $\mathcal{M}$ with initial state $x \in \Omega$ is
$
\tau_x(\epsilon) = \min\{ t : \Delta_x(t') \leq \epsilon \text{ for all } t' \geq t\}.
$
Informally, $\tau_x(\epsilon)$  is the number of steps until the Markov chain is $\epsilon$-close to its stationary distribution, given that it is starting in $x$.  A treaty of some more advanced Markov chain notions is given in Appendix \ref{app:markov}, including the definition of the \emph{modified log-Sobolev constant $\rho = \rho(P)$} which can be used to bound the mixing time of a Markov chain.\footnote{The definition of $\rho$ is deferred to Appendix \ref{app:markov} as we actually do not need it in this work; we only rely on lower bounds on this constant obtained by other authors.} In particular, it holds that
\begin{equation}
\label{eq:pre_modsob}
 \tau_x(\epsilon) \leq  \frac{1}{\rho(P)}\left( \log \log \pi(x)^{-1} + \log\left(\frac{1}{2\epsilon^2} \right)\right).
\end{equation}

\paragraph{Markov chain decomposition.}\label{sec:markov_decomp}
Let $\Omega = \Omega_1 \cup \dots \cup \Omega_m$ be a disjoint partition of the state space $\Omega$. Following \cite{Martin2000}, consider
$
\bar{\pi}(i) = \pi(\Omega_i) = \sum_{x \in \Omega} \pi(x)
$
and let $\bar{P} : [m] \times [m] \rightarrow [0,1]$ be defined by
$$
\bar{P}(i,j) = \bar{\pi}(i)^{-1}\sum_{x \in \Omega_i,\ y \in \Omega_j} \pi(x)P(x,y).
$$
The Markov chain on $[m]$ with transition matrix $\bar{P}$ is called the \emph{projection chain} on the partition $\{\Omega_i\}_{i = 1,\dots,m}$. It is time-reversible with respect to the distribution $\bar{\pi}$ over $[m]$. For $i \in [m]$ the \emph{restriction chain} on $\Omega_i$ has transition matrix $P_i: \Omega_i \times \Omega_i \rightarrow [0,1]$ given by
$$
P_i(x,y) = \left\{ \begin{array}{ll}
P(x,y) & \text{ if } x \neq y,\\
1 - \sum_{z \in \Omega_i \setminus \{x\}} P(x,z) & \text{ if } x = y.
\end{array}\right.
$$
Its stationary distribution $\pi_i$ is given by $\pi_i(x) = \pi(x)/\bar{\pi}(i)$ for $x \in \Omega_i$. In Appendix \ref{app:markov_decomp} we give a Markov chain decomposition result based on the modified log-Sobolev constant $\rho$.

\paragraph{Base-exchange Markov chain.}
Let $\mathcal{N}$ be a matroid and let $\pi$ be an SLC probability distribution over the set of bases $\mathcal{B}$ given by $\pi(\alpha) \propto w(\alpha)$ for some non-negative weight function $w: K \rightarrow \R$. Here $\pi(\alpha) \propto w(\alpha)$ means that $\pi(\alpha) = w(\alpha)/(\sum_{\alpha \in K} w(\alpha))$.
The \emph{base-exchange Markov chain on $\mathcal{B}$} is defined by the following transitions, where $B \in \mathcal{B}$ is the current state of the Markov chain:
\begin{itemize}
\item Select an element $e \in B$ uniformly at random and remove it.
\item Pick a base $B' \in \mathcal{B}$ with $B' \supset B - e$ with probability $\propto w(B')$ among all such bases $B'$.
\end{itemize}
It is not hard to see, using the base-exchange property, that this procedure defines an ergodic, time-reversible Markov chain with stationary distribution $\pi$. Anari et al. \cite{ALOV2018} show that this chain is rapidly mixing for any matroid $\mathcal{N}$. In particular, in a recent follow-up work, they give a (tight) mixing time bound \cite{SLC4}.

\begin{theorem}[Anari et al. \cite{SLC4}]\label{thm:base_walk}
Let $\mathcal{N}$ be matroid of rank $r$, and let $\pi$ be an SLC probability distribution over the set of bases $\mathcal{B}$ given by $\pi(\alpha) \propto w(\alpha)$ for some weight function $w: K \rightarrow \R_{\geq 0}$.  Then the mixing time of the base-exchange random walk, satisfies
$
\tau(\epsilon) \leq O(r \log(r/\epsilon)).
$
\end{theorem}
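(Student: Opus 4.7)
The plan is to prove the bound via the \emph{local-to-global} method for the modified log-Sobolev constant. Since inequality \eqref{eq:pre_modsob} already reduces mixing to a lower bound on $\rho(P)$, the main task is to show $\rho(P) \geq c/r$ for an absolute constant $c>0$ and then to control the $\log\log \pi(x)^{-1}$ term.

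First I would set up the basis complex viewpoint. The base-exchange walk — remove a uniformly chosen element of the current basis, then resample a basis containing the remaining $r-1$ elements with probability $\propto w(\cdot)$ — is the canonical down-up random walk on top faces of the $r$-dimensional weighted simplicial complex whose faces are subsets of bases. For any face $\tau$ of size $k$, the \emph{link} at $\tau$ is itself a weighted matroid of rank $r-k$, and its generating polynomial is $\partial^\tau g$, where $g(x) = \sum_{B \in \mathcal{B}} w(B)\, x^B$. By the very definition of strong log-concavity, $\partial^\tau g$ is again SLC for every $\tau$; in particular at every codimension-two face the link polynomial is $2$-homogeneous, multi-affine, and SLC.

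Second, I would exploit this two-variable base case. A $2$-homogeneous multi-affine SLC polynomial has a symmetric matrix of coefficients whose log-concavity condition is equivalent (essentially by Cauchy interlacing) to the pairwise correlation matrix of the associated link distribution having all eigenvalues bounded by $1$. This gives ``local spectral independence'' at every link, and — using that $2$-homogeneous SLC distributions themselves satisfy a modified log-Sobolev inequality with an absolute constant — local \emph{entropy} contraction with an $\Omega(1)$ constant. Upgrading the spectral bound of \cite{ALOV2018} to this entropic bound is the technical heart of \cite{SLC4}. I would then chain the local contractions via the local-to-global theorem for MLSI: the global constant $\rho(P)$ combines across the levels $r, r-1, \dots, 1$ from local MLSI constants that are $\Omega(1)$ by the previous step, yielding $\rho(P) \geq c/r$. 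Plugging into \eqref{eq:pre_modsob} and bounding $\pi(x)^{-1} \leq |\mathcal{B}|\cdot (w_{\max}/w_{\min})$ crudely, so that $\log\log \pi(x)^{-1} = O(\log r)$ from a warm start with bounded relative entropy, yields $\tau_x(\epsilon) \leq O\!\left(r\bigl(\log r + \log(1/\epsilon^2)\bigr)\right) = O(r\log(r/\epsilon))$.

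The main obstacle is precisely the step upgrading local spectral independence to local entropy contraction with an \emph{absolute} constant. A direct plug-in of spectral independence into the earlier Cryan--Guo--Mousa framework yields only $\rho(P) \geq \Omega(1/r^2)$ and hence mixing $O(r^2\log(r/\epsilon))$ as in \cite{ALOV2018}; shaving the extra factor of $r$ requires the sharper entropic comparison between the global Dirichlet form and the averaged link Dirichlet forms, which must be tailored to the SLC setting (and does not hold merely from spectral independence alone).
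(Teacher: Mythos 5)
This is not a theorem the paper proves: it is imported verbatim from Anari, Liu, Oveis-Gharan and Vinzant~\cite{SLC4} and cited as a black box (see the sentence ``In particular, in a recent follow-up work, they give a (tight) mixing time bound \cite{SLC4}''). There is therefore no proof in the paper against which to check yours; what follows is an assessment of your sketch as a reconstruction of the argument in the cited reference.

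Your high-level outline is in the right neighborhood: the simplicial-complex/down-up view, the observation that every link polynomial $\partial^\tau g$ is again SLC so that rank-two links control local behavior, and the need to upgrade the spectral statement of~\cite{ALOV2018} to an entropic one before invoking a local-to-global theorem --- these are indeed the conceptual ingredients. You also correctly flag that the hard step is exactly this spectral-to-entropic upgrade. Two points need tightening, though. First, the claim that a ``direct plug-in of spectral independence into the earlier Cryan--Guo--Mousa framework yields only $\rho(P)\geq\Omega(1/r^2)$'' is off: the paper itself cites Cryan, Guo and Mousa~\cite{Cryan2019} for $\rho\geq 1/r$, obtained directly and without~\cite{SLC4}'s machinery; the $r^2$ in~\cite{ALOV2018} comes from the Poincar\'e route, not from an $r^2$-weak MLSI. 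Second, and more substantively, your treatment of the $\log\log\pi(x)^{-1}$ term is too casual. The crude bound $\pi(x)^{-1}\leq|\mathcal{B}|\cdot(w_{\max}/w_{\min})$ still carries a $\log\log n$ (and a weight-ratio) dependence that does \emph{not} disappear just by invoking a ``warm start''; removing all dependence on the ground set size and on $\pi_{\min}$ is precisely what makes the bound in~\cite{SLC4} tight and nontrivial beyond~\cite{Cryan2019}. The reference handles this via a dedicated initial-entropy estimate (showing the relative entropy after a single transition is $O(r\log r)$, uniformly over the starting basis), which needs to be argued, not assumed. As written, your last step does not close that gap.
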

We note that the mixing time is independent of the size $n$ of the ground set $E$ of the matroid $\mathcal{N}$, as well as the stationary distribution $\pi$.  In this work Theorem \ref{thm:base_walk} is essentially only applied to partition and uniform matroids. 
Furthermore, Cryan, Guo and Mousa \cite{Cryan2019} show that the modified log-Sobolev of the base-exchange random walk satisfies $\rho \geq 1/r$, where $r$ is the rank of the matroid.

\subsection{Sampling algorithms}\label{sec:sampling}
Consider a class of (capacitated) congestion games $\Gamma = (N,E,(\mathcal{S}_i)_{i\in N},(c_e)_{e\in E},(u_e)_{e \in E})$ with $n$ players and $m$ resources, and cost functions $c_e : \Z_{\geq 0} \rightarrow \Q$ for $e \in E$. Let $w : \mathcal{S} \rightarrow \Q_{\geq 0}$ be a weight function. In this work, an algorithm for sampling $s \in \mathcal{S}$ according to a distribution $\epsilon$-close to $\pi$ with $\pi(s) \propto w(s)$ is said to run in \emph{(randomized) polynomial time} if the number of arithmetic operations can be upper bounded by a polynomial in $n,\ m,\footnote{Assuming that the strategy sets can be described in a ``compact'' form, like in the case of (symmetric) network congestion games. Alternatively, one could replace $m$ by $\max_i |\Si|$ in case the strategy sets are assumed to be given explicitly.}\ \log(1/\epsilon),\ \max_{e,j} \log(c_e(j))$ and $\max_s \log(w(s))$.  
The generation of a uniform random $0/1$ bit is considered to be one arithmetic operation.

\begin{remark}[Real numbers]
In this work we use Markov chains whose transition probabilities are, in general, not rational numbers (in particular for the Gibbs distribution). Whenever we use real numbers, it is implicitly assumed that we use sufficiently accurate approximations to these numbers. All our results remain valid when (real-valued) transition probabilities are replaced by sufficiently accurate rational approximations.  
We note that, roughly speaking, whenever we want to generate Markov chain transitions with probabilities proportional to $e^{-T\Phi(s)}$ for $s \in \mathcal{S}$, our algorithms run in pseudo-polynomial time in terms of the values of the cost functions of the congestion game under consideration.
\end{remark}

All our results algorithmic results are based on running Markov chains for a sufficiently long time. 
We will usually write our running time bounds as the product of two factors: The number of steps that we need to run the Markov chain (before it's close to stationarity), and the complexity of implementing one such step. In particular, in all cases, the transitions probabilities of one step are determined by a sequence of rational numbers $a = (a_1,\dots,a_z)$ and $q = (q_1,\dots,q_z)$, and we want to sample an index $i \in [z]$ with probability
\begin{equation}\label{eq:trans_real}
\frac{q_ie^{a_i}}{\sum_{i} q_ie^{a_i}}.
\end{equation}
Here $z$, as well as the encoding size of the $q_i$, will be $\text{poly}(n,m)$. We will refer to $C = C(n,m,a)$ as the computational complexity of sampling an index $i$ according to (approximations of) the above probabilities in order not to overload our theorem statements. We say that probabilities of the form \eqref{eq:trans_real} are \emph{suitable}.

\subsection{Bipartite graphs}
An (undirected) bipartite graph $G = (A \cup B, F)$ is given by two disjoint sets of nodes $A = \{a_1,\dots,a_n\}$ and $B = \{b_1,\dots,b_m\}$ with $F \subseteq \{\{a,b\} : a \in A, b \in B\}$. We say that $G$ has degree sequence $(\mathbf{x},\mathbf{y})$ if $d(a_i) = x_i$ for $i = 1,\dots,n$ and $d(b_i) = y_j$ for $j = 1,\dots,m$, where $d(v)$ denotes the degree of node $v$ in $G$. We write $\mathcal{G}(\mathbf{x},\mathbf{y})$ for the set of all bipartite graphs on $A \cup B$ with degree sequence $(\mathbf{x},\mathbf{y})$. 

\section{General approach}\label{sec:overview}
In Section \ref{sec:pre_cong}, we gave two possible definitions for the load profile of a strategy profile $s = (s_1,\dots,s_n) \in \times_i \Si$. For general congestion games, we defined the resource load profile $\ell(s) = (\ell_e(s))$ that keeps track of how many players use a particular resource $e$ in $s$. For symmetric congestion games, we may in addition consider the strategy load profile $z(s) = (z_t(s))_{t \in \mathcal{S}_0}$ that keeps track of how many player use a particular strategy $t$ from the common strategy set $\mathcal{P}$.

A general approach for sampling a strategy profile according to the Gibbs distribution in Section \ref{sec:ep_gibbs} and \ref{sec:cap_gibbs} will be to first sample a (resource or strategy) load profile $\alpha$ according to approximately the right probability, and then sample a strategy profile $s \in \mathcal{S}(\alpha)$ uniformly at random. Remember that the set $\mathcal{S}(\alpha)$ was used to denote all strategy profiles with the given (resource or strategy) load profile. The approach is summarized in Algorithm \ref{alg:ep}, in which we essentially give the formulation for resource load profiles, since we use Rosenthal's potential $\bar{\Phi} : \times_i\Si \rightarrow \R$ as given in \eqref{eq:rosenthal_resource}. The formulation for strategy load profiles is exactly the same, with $\bar{\Phi}$ replaced by $\Phi$ as given in \eqref{eq:rosenthal}. We note that, in order to sample a strategy load profile $s \in \mathcal{S}(\alpha)$ uniformly at random in symmetric games, it suffices to generate a random permutation of the players in $N = \{1,\dots,n\}$. 

Sampling a load profile with the correct probability in our applications corresponds to sampling a base of a discrete polymatroid according to a strongly log-concave distribution. In order to do this, we present a reduction of this problem to that of sampling a base of a matroid according to a strongly log-concave distribution in Section \ref{sec:polymatroid} (after which we can rely on Theorem \ref{thm:base_walk}).

\vspace*{-0.2cm}
\IncMargin{1em}
\begin{algorithm}
\SetKwData{Left}{left}\SetKwData{This}{this}\SetKwData{Up}{up}
\SetKwFunction{Union}{Union}\SetKwFunction{FindCompress}{FindCompress}
\SetKwInOut{Input}{Input}\SetKwInOut{Output}{Output}
\Input{Congestion game $\Gamma$, temperature $T \geq 0$ and $\epsilon \geq 0$.}
\Output{Strategy profile $s \in \mathcal{S}$ according to distribution $\bar{\pi}$ that is $\epsilon$-close to Gibbs distribution $\pi$ at temperature $T$.}\medskip

\textbf{Step I:} Sample load profile $\alpha$ according to a distribution $\sigma'$ that is $\epsilon$-close to $\pi'$ given by
$$
\pi'(\alpha) = |\mathcal{S}(\alpha)| e^{-T \bar{\Phi}(\alpha)}.
$$\\
\textbf{Step II:} Sample strategy profile $s \in \mathcal{S}(\alpha)$ (approximate) uniformly at random.
\caption{Gibbs sampler for congestion game $\Gamma$.}
\label{alg:ep}
\end{algorithm}
\DecMargin{1em}
\vspace*{-0.4cm}

\subsection{Sampling bases of discrete polymatroids}\label{sec:polymatroid}
In this section we describe how to generate a discrete polymatroid base, according to a strongly log-concave distribution over the set of all polymatroid bases, by reducing it to the problem of generating a base of a matroid.\footnote{This is a special case of sampling an element from an M-convex set under a strongly log-concave distribution.} 
This follows more or less directly from Theorem \ref{thm:base_walk} by using the notion of polarization. Polarization can be seen as a functional version of the classical reduction from discrete polymatroids to matroids, as given by Helgason \cite{Helgason1974}.\footnote{See also Chapter 44.6b in \cite{Schrijver2003} for this reduction.}

For a polymatroid $R \subset \Z_{\geq 0}^n$, consider a $d$-homogeneous  strongly log-concave  polynomial
$
g_R(x_1,\dots,x_n) = \sum_{\alpha \in \mathcal{B}_R} w(\alpha) x^{\alpha}.
$
with positive coefficients and support the set of bases $\mathcal{B}_R$.
Consider the matroid $\mathcal{N}_R = (E,\mathcal{I})$ on ground set $E = \{(i,j) : 1 \leq i \leq n,\ 1 \leq j \leq d\}$, where $I \in \mathcal{I}$ if and only if the vector $\alpha(I) \in \Z_{\geq 0}^n$ given by $\alpha_e = |\{f : (e,f) \in I\}|$ satisfies $\alpha(I) \in R$. The fact that $\mathcal{N}_R$ is indeed a matroid follows directly from the fact that $R$ is a polymatroid. Note that, for a given $\alpha \in R$, we have
\begin{equation}
\label{eq:scaling}
|\{I : \alpha(I) = \alpha\}| = \binom{d}{\alpha}
\end{equation}
We slightly abuse notation here and write $d = (d,\dots,d)$ for the all $d$-vector in $\Z^n$.

Then, with $\Pi(\mathcal{B}_R)$ the set of bases of $\mathcal{N}_R$, the polarization $\Pi(g)$ of $g$ can be written as 
\begin{align*}
\Pi(g)(y_{11},\dots, y_{1d},\dots,y_{n1},\dots,y_{nd}) &= \sum_{B \in \Pi(\mathcal{B}_R)} \binom{d}{\alpha(B)}^{-1} w(\alpha(B))\cdot  y^{B} \\
&= \sum_{B \in \Pi(\mathcal{B}_R)} w_\Pi(\alpha(B))\cdot  y^{B}
\end{align*}
where, for $B \in  \Pi(\mathcal{B}_R)$, we define 
\begin{equation}
\label{eq:poly_to_matroid}
w_\Pi(\alpha(B)) = \binom{d}{\alpha(B)}^{-1} w(\alpha(B)).
\end{equation}
Polarization should be interpreted as spreading out the weight $w(\alpha)$ for $\alpha \in R$ equally over all bases $B \in \{A : \alpha(A) = \alpha\} \subseteq \Pi(\mathcal{B}_R)$. Proposition \ref{prop:homo} implies that $\Pi(g)$ is also strongly log-concave. 

\begin{example}
Let $p(x_1,x_2) = x_1^2x_2 + x_1x_2$, so that $\text{supp}(p) = \{(2,0),(1,1)\}$, and take $d = (2,2)$. Then 
\begin{align*}
\Pi_d(p) & =\frac{1}{2}x_{11}x_{12}(x_{21} + x_{22}) + \frac{1}{4}(x_{11} + x_{12})(x_{21} + x_{22}) \\
&= \frac{1}{2}x_{11}x_{12}x_{21} + \frac{1}{2}x_{11}x_{12}x_{22} + \frac{1}{4}x_{11}x_{21} + \frac{1}{4}x_{11}x_{22} + \frac{1}{4}x_{12}x_{21} + \frac{1}{4}x_{12}x_{22}.
\end{align*}
Note that, looking at the support of $p$, we have $\binom{d}{\alpha} = 2$ monomials corresponding to $\alpha = (2,0)$ and $\binom{d}{\alpha} = 4$ monomials corresponding to $\alpha = (1,1)$.
\end{example}

Corollary \ref{cor:base_walk} below now follows directly from Theorem \ref{thm:base_walk}. It simply says the following. Suppose the current state of the base-exchange Markov chain  after $T$ steps, starting from any state $B_0 \in  \Pi(\mathcal{B}_R)$, is the base $B \in \Pi(\mathcal{B}_R)$, and suppose we output the polymatroid base $\alpha(B)$. If $T$ is large enough, such that we are in state $B$ with probability close to $w_\Pi(\alpha(B))$ for every $B \in \Pi(\mathcal{B}_R)$, then $\alpha(B)$ will be outputted with probability close to $w(\alpha(B))$, with $w_\Pi(\alpha(B))$ and $w(\alpha(B))$ as in \eqref{eq:poly_to_matroid}.

\begin{corollary}\label{cor:base_walk}
Let $\pi$ be the distribution over $\mathcal{B}_R$ with $\pi(\alpha) \propto w(\alpha)$, and let $\Pi_\pi$ be the distribution over $\Pi(\mathcal{B}_R)$ with $\Pi_\pi(B) \propto w_\Pi(\alpha(B))$.
Let $B \in \Pi(\mathcal{B}_R)$ and let $\Pi_{\sigma}^T = P^T(B,\cdot)$ be the distribution over $\Pi(\mathcal{B}_R)$ after $T$ steps of the base-exchange Markov chain $\mathcal{M} = (\Pi(\mathcal{B}_R),P)$.  Let $\sigma^T$ be the induced distribution over $\mathcal{B}_R$ given by
$
\sigma^T(\alpha) = \sum_{B : \alpha(B) = \alpha} \Pi_{\sigma}^T(B).
$
If $d_{TV}(\Pi_{\sigma}^T,\Pi_\pi) \leq \epsilon$, then also $d_{TV}(\sigma^T,\pi) \leq \epsilon$.
\end{corollary}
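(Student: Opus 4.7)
The plan is to observe that this corollary is essentially an instance of the data-processing (pushforward) inequality for total variation distance, applied to the deterministic map $\Phi : \Pi(\mathcal{B}_R) \to \mathcal{B}_R$ given by $B \mapsto \alpha(B)$. The key step is to verify that, under this map, the stationary distribution $\Pi_\pi$ on the matroid bases pushes forward exactly to $\pi$ on the polymatroid bases, after which the claim reduces to a one-line triangle-inequality computation.

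First, I would check that the normalizing constants agree. Writing $Z = \sum_{\alpha \in \mathcal{B}_R} w(\alpha)$ and $Z_\Pi = \sum_{B \in \Pi(\mathcal{B}_R)} w_\Pi(\alpha(B))$, we group terms by the value $\alpha = \alpha(B)$ and use \eqref{eq:scaling} together with the definition \eqref{eq:poly_to_matroid}:
$$
Z_\Pi = \sum_{\alpha \in \mathcal{B}_R}\, \sum_{B : \alpha(B) = \alpha} \binom{d}{\alpha}^{-1} w(\alpha) = \sum_{\alpha \in \mathcal{B}_R} \binom{d}{\alpha}\binom{d}{\alpha}^{-1} w(\alpha) = Z.
$$
The same grouping shows that $\sum_{B : \alpha(B) = \alpha} \Pi_\pi(B) = w(\alpha)/Z = \pi(\alpha)$, i.e.\ $\Phi_\ast \Pi_\pi = \pi$, while by the very definition in the statement we have $\Phi_\ast \Pi_\sigma^T = \sigma^T$.

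Finally, I would apply the triangle inequality inside the absolute values:
\begin{align*}
d_{TV}(\sigma^T,\pi) &= \tfrac{1}{2}\sum_{\alpha \in \mathcal{B}_R} \Bigl| \sum_{B : \alpha(B) = \alpha} \bigl(\Pi_\sigma^T(B) - \Pi_\pi(B)\bigr)\Bigr| \\
&\leq \tfrac{1}{2}\sum_{\alpha \in \mathcal{B}_R} \sum_{B : \alpha(B) = \alpha} \bigl|\Pi_\sigma^T(B) - \Pi_\pi(B)\bigr| \\
&= \tfrac{1}{2}\sum_{B \in \Pi(\mathcal{B}_R)} \bigl|\Pi_\sigma^T(B) - \Pi_\pi(B)\bigr| = d_{TV}(\Pi_\sigma^T,\Pi_\pi) \leq \epsilon.
\end{align*}
There is no real obstacle here: the only content-bearing step is matching the marginals of $\Pi_\pi$ with $\pi$, which is exactly what polarization was designed to do, and the rest is the standard fact that total variation distance is contractive under pushforwards by a deterministic map.
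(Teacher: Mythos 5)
Your proposal is correct and follows essentially the same route as the paper's own proof: rewrite $\pi(\alpha)$ as a sum over the fiber $\{B : \alpha(B)=\alpha\}$ using \eqref{eq:scaling} and \eqref{eq:poly_to_matroid}, then pass the triangle inequality through the sum and recognize the right-hand side as $2d_{TV}(\Pi_\sigma^T,\Pi_\pi)$. The only difference is cosmetic: you make the verification that $Z_\Pi = Z$ and that $\Phi_\ast\Pi_\pi = \pi$ explicit, whereas the paper leaves the normalizing constants implicit, but the underlying data-processing argument is identical.
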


\begin{remark}
It is possible to define a more direct Markov chain on the set of all bases of a given discrete polymatroid, and prove that this chain is rapidly mixing (also based Theorem \ref{thm:base_walk}), but this is not necessarily needed for our results (and not of interest of this work). 
\end{remark}

\section{Extension parallel congestion games}\label{sec:ep}
An \emph{extension parallel congestion game} is a symmetric congestion game in which the common strategy set $\mathcal{P}$ of the players consists of the $o,d$-paths in a (directed) extension-parallel network $G = (V,A)$ with source $o$ and target $d$.  
For two given networks $G_i = (V_i,A_i)$ with source $o_i$ and target $d_i$ for $i = 1,2$, let $G' = (V_1 \cup V_2, A_1 \cup A_2)$ be the union of $G_1$ and $G_2$. The \emph{parallel composition} of $G_1$ and $G_2$ is the network obtained by identifying $o_1$ with $o_2$, and $d_1$ with $d_2$. These nodes are the source and target of $G'$, respectively. The \emph{series composition} of $G_1$ and $G_2$ is obtained by identifying $d_1$ with $o_2$. The node $o_1$ becomes the source of $G'$, and $d_2$ its target.
An \emph{extension parallel network} either consists of i) a single arc $(o,d)$, ii) two extension-parallel networks in parallel, iii) a single arc in series with an extension parallel network. An example is given in Figure \ref{fig:ep}.
For a given extension parallel graph $G$, we use $\mathcal{P} = \{p_1,\dots,p_q\}$ to denote all $o,d$-paths in $G$. Note that for an extension parallel network  we have $q \leq |A| = m$.
\begin{figure}[h!]\begin{center}
	\begin{tikzpicture}[scale=1]
		\node (a) at (0,0.5)[circle,scale=0.7,fill=black] {};
				\node at (0,0.1)[] {$o$};
	    \node (b) at (3,0)[circle,scale=0.7,fill=black] {};
	    \node (c) at (6,0.5)[circle,scale=0.7,fill=black] {};
		\node (a2) at (0,0.5)[circle,scale=0.7,fill=black] {};
	    \node (b2) at (3,1)[circle,scale=0.7,fill=black] {};
	    \node (c2) at (6,0.5)[circle,scale=0.7,fill=black] {};
	   \node at (6,0.1)[] {$d$};
	    
	\path[every node/.style={sloped,anchor=south,auto=false}]
	(a) edge[->,very thick,bend left=10] node {} (b)
	(a) edge[->,very thick,bend left=-10] node {} (b)
	(b) edge[->,very thick,bend left=-10] node {} (c)
	(a2) edge[->,very thick,bend left=10] node {} (b2)
	(b2) edge[->,very thick,bend left=10] node {} (c2)
	(b2) edge[->,very thick,bend left=-10] node {} (c2);

	\end{tikzpicture}		\caption{Example of an extension parallel network}
		\label{fig:ep}
\end{center}
\end{figure}
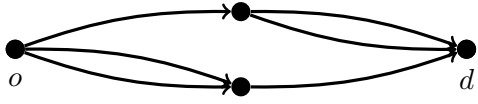 

In this section we will always be working with strategy load profiles (and so these are sometimes simply referred to as load profiles). The set of all possible strategy load profiles will be denoted by $\mathcal{L} = \{ \alpha \in \{1,\dots,q\}^n : |\alpha| = n\}$. We consider the potential $\Phi : \mathcal{L} \rightarrow \Q$ defined by $\Phi(\alpha) = \Phi(s)$ for some $s \in \mathcal{S}(\alpha)$.\footnote{This is well-defined as the potential value is the same for any choice of $s \in \mathcal{S}(\alpha)$.} 
The main result that we will need in this section is the M-convexity of Rosenthal's potential for EP congestion games.
\begin{proposition}[Fujishige et al. \cite{Fujishige2015}]
\label{prop:fujishige}
Let $\Gamma$ be an extension parallel congestion game. Then the  potential $\Phi : \mathcal{L} \rightarrow \Q$ defined by $\Phi(\alpha) = \Phi(s)$ for $s \in \mathcal{S}(\alpha)$, is $M$-convex.
\end{proposition}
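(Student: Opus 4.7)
The plan is to induct on the recursive structure of the extension parallel network $G$, strengthening the inductive statement so that it threads through the parallel case. For an EP network $H$ and an integer $k \geq 0$, let $\Phi_H^k$ denote Rosenthal's potential viewed as a function of strategy load profiles on $H$ of total mass $k$; the goal is $M$-convexity of $\Phi_H^k$ for all $H, k$ (the claim for $G$ with $k = n$ being what we need).

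The base case, where $H$ is a single arc, is vacuous since $|\mathcal{P}_H| = 1$. In the series case, $H$ is a single arc $e$ placed in series with an EP network $H'$, so every $o$-$d$ path in $H$ is obtained by splicing $e$ onto an $o'$-$d'$ path in $H'$, giving a bijection $\mathcal{P}_H \leftrightarrow \mathcal{P}_{H'}$. All $k$ players load $e$ in every profile, contributing the constant $\phi_e(k) := \sum_{j=1}^k c_e(j)$ to Rosenthal's potential, while the loads on the remaining arcs match those in $H'$; hence $\Phi_H^k(\alpha) = \phi_e(k) + \Phi_{H'}^k(\alpha')$, and the inductive hypothesis on $H'$ closes the case, since adding a constant and relabeling coordinates preserves $M$-convexity.

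The main case is the parallel composition $H = H_1 \parallel H_2$. Because $H_1$ and $H_2$ are arc-disjoint, splitting a load profile as $\alpha = (\alpha^1, \alpha^2)$ according to which subnetwork each path lies in yields
$$\Phi_H^k(\alpha^1, \alpha^2) = \Phi_{H_1}^{|\alpha^1|}(\alpha^1) + \Phi_{H_2}^{|\alpha^2|}(\alpha^2), \qquad |\alpha^1| + |\alpha^2| = k.$$
To verify the exchange property, pick $\alpha, \beta$ with $|\alpha| = |\beta| = k$ and a path $p$ with $\alpha_p > \beta_p$, say $p \in \mathcal{P}_{H_1}$. When $|\alpha^1| = |\beta^1|$ the inductive hypothesis applied to $\Phi_{H_1}^{|\alpha^1|}$ delivers a swap partner $q \in \mathcal{P}_{H_1}$ whose exchange inequality, combined with the unchanged $H_2$-contribution, gives the required bound. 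Otherwise the totals in the two subnetworks differ, and one must either locate an in-block partner with a weaker guarantee or perform a cross-block swap with some $q \in \mathcal{P}_{H_2}$, verifying the inequality from the marginal costs of shifting one unit of load between $H_1$ and $H_2$.

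The main obstacle is precisely this cross-block case, where the naive $M$-convex inductive hypothesis fails because it compares profiles of equal total mass in a single subnetwork. The natural remedy is to strengthen the inductive statement to $M^\sharp$-convexity on the full free-total domain $\mathbb{Z}_{\geq 0}^{\mathcal{P}_H}$, and to intersect with the hyperplane $\{|\alpha| = n\}$ only at the very end. Both recursive steps survive the strengthening: in the series case the extra term $\phi_e(|\alpha|)$ depends only on $|\alpha|$ and is therefore invariant under any swap $\alpha \mapsto \alpha - e_p + e_q$, so it preserves the exchange inequality; in the parallel case $\Phi_{H_1}$ and $\Phi_{H_2}$ act on disjoint coordinate sets, so a swap partner from the ``correct'' side of the composition can always be chosen. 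Restricting the resulting $M^\sharp$-convex function to the slice $\{|\alpha| = n\}$ then yields the $M$-convexity of $\Phi_G^n$ claimed in the proposition.
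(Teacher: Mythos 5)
The paper does not prove this proposition; it cites it from Fujishige et al.\ \cite{Fujishige2015}, so there is no in-paper proof to compare against. Your high-level strategy — induct on the series-parallel decomposition, strengthen the inductive hypothesis to $M^\sharp$-convexity on the free domain $\mathbb{Z}_{\geq 0}^{\mathcal{P}_H}$ so that it threads through the parallel composition, and intersect with the hyperplane $\{|\alpha| = n\}$ at the end — is sound, and you correctly identify that the naive $M$-convex hypothesis breaks in the parallel case.

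However, there is a genuine gap in the series step after the strengthening. You claim that the extra term $\phi_e(|\alpha|)$ ``is invariant under any swap $\alpha \mapsto \alpha - e_p + e_q$.'' This is only true when $q$ is a genuine coordinate. The exchange axiom for $M^\sharp$-convexity (in its standard form, which must be what you intend: under a literal reading of the paper's Definition, which omits the dummy index, even a univariate convex function — your base case — and separable convex functions would fail to be $M^\sharp$-convex, contradicting Proposition 2.6) also permits the one-sided exchange $q = 0$, i.e.\ $\alpha \mapsto \alpha - e_p$ and $\beta \mapsto \beta + e_p$. Under that exchange $|\alpha|$ drops by one and $|\beta|$ rises by one, so $\phi_e(|\alpha|)$ is emphatically not invariant, and what actually needs checking is
\[
\phi_e(|\alpha|) + \phi_e(|\beta|) \;\geq\; \phi_e(|\alpha|-1) + \phi_e(|\beta|+1),
\]
which by convexity of $\phi_e$ holds if and only if $|\alpha| \geq |\beta| + 1$. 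So you must argue that whenever the inductive hypothesis on $\Phi_{H'}$ forces $q = 0$, you in fact have $|\alpha| > |\beta|$. This is true — it follows by lifting to the $M$-convex function $\hat f(\alpha_0,\alpha) := f(\alpha)$ on $\{\alpha_0 + |\alpha| = M\}$ and observing that $|\alpha| \leq |\beta|$ forces $\hat\alpha_0 \geq \hat\beta_0$, so the exchange partner for $\hat f$ cannot be the dummy coordinate — but it is not the triviality your write-up suggests. The very same observation is also needed, and currently elided, in the last sentence: restricting an $M^\sharp$-convex function to the slice $\{|\alpha| = n\}$ yields an $M$-convex function precisely because equal moduli rule out the $q = 0$ exchange. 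As written, the series-case justification is incorrect, even though the architecture of the proof is repairable.
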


Before giving our main result as sketched in Section \ref{sec:contributions}, we first give a more direct approach for sampling from the Gibbs distribution in EP congestion games.

\subsection{Sampling from the Gibbs distribution}\label{sec:ep_gibbs}
As mentioned in Section \ref{sec:overview}, and sketched in Algorithm \ref{alg:ep}, the high-level algorithmic idea for sampling a strategy profile according to the Gibbs distribution consists of first sampling a load profile $\alpha$ with the correct probability, and then a strategy profile from $\mathcal{S}(\alpha)$ uniformly at random. The main result of this section, based on this approach, is stated in Theorem \ref{thm:ep}.

\begin{theorem}
\label{thm:ep}
Let $\epsilon > 0$ and $T \geq 0$, and let $\Gamma$ be an extension parallel congestion game with $n$ players. There is a randomized algorithm $\mathcal{A}$ with output distribution $\bar{\pi}$ over $\mathcal{P}^n$ that is $\epsilon$-close to the Gibbs distribution $\pi$ at temperature $T$, and runs in (expected) time $O(C\cdot n\log(n/\epsilon))$ with $C$ the complexity of implementing one step of a base-exchange Markov chain with suitable probabilities (see Section \ref{sec:sampling}).
\end{theorem}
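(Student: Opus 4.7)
The plan is to implement the two-step scheme of Algorithm~\ref{alg:ep}. Since $\Gamma$ is symmetric, the Gibbs distribution on $\mathcal{S} = \mathcal{P}^n$ is constant on the fibers $\mathcal{S}(\alpha)$, so its pushforward under $s \mapsto z(s)$ is
\[
\pi'(\alpha) \;\propto\; |\mathcal{S}(\alpha)|\, e^{-T\Phi(\alpha)} \;=\; \frac{n!}{\alpha!}\, e^{-T\Phi(\alpha)}, \qquad \alpha \in \mathcal{L}.
\]
If we sample $\alpha$ from a distribution $\epsilon$-close to $\pi'$ and then pick a uniformly random permutation of the $n$ players and assign its first $\alpha_{p_1}$ players to $p_1$, the next $\alpha_{p_2}$ to $p_2$, and so on, the output strategy profile has distribution $\epsilon$-close to the Gibbs distribution. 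This second step runs in $O(n)$ time, so the work is entirely in Step~I.

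To sample $\alpha$, consider the generating polynomial
\[
g(x_1, \dots, x_q) \;=\; \sum_{\alpha \in \mathcal{L}} \frac{w(\alpha)}{\alpha!}\, x^\alpha, \qquad w(\alpha) := n!\cdot e^{-T\Phi(\alpha)}.
\]
It is $n$-homogeneous with support $\mathcal{L}$, the base set of the discrete polymatroid $R = \{\alpha \in \Z_{\geq 0}^q : |\alpha| \leq n\}$. Setting $\nu(\alpha) = \log w(\alpha) = \log(n!) - T\Phi(\alpha)$ on $\mathcal{L}$ and $\nu(\alpha) = -\infty$ elsewhere, $\nu$ is $M$-concave because $\Phi$ is $M$-convex on $\mathcal{L}$ by Proposition~\ref{prop:fujishige}. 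Proposition~\ref{prop:sufficient} then yields that $g$ is strongly log-concave.

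Next I polarize. With $\kappa = (n, \dots, n)$, Proposition~\ref{prop:homo} implies that $\Pi_\kappa(g)$ is an $n$-homogeneous multi-affine SLC polynomial on the $nq$ variables $\{x_{ij} : 1 \leq i \leq q,\ 1 \leq j \leq n\}$; its support is the set of bases of the associated matroid $\mathcal{N}_R$, which in this case is simply the $n$-uniform matroid on $nq$ elements (any size-$n$ subset $I$ yields $\alpha(I)$ with $|\alpha(I)| = n$, hence $\alpha(I) \in \mathcal{B}_R$). Run the base-exchange Markov chain from Theorem~\ref{thm:base_walk} on $\mathcal{N}_R$ with the SLC stationary distribution $\Pi_\pi$ induced by polarization. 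Since the rank equals $n$, the mixing time is $O(n \log(n/\epsilon))$, independent of the starting base (which can be any size-$n$ subset of $[nq]$). After that many steps, output $\alpha(B)$ for the current base $B$; by Corollary~\ref{cor:base_walk}, this has distribution $\epsilon$-close to $\pi'$, as required.

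It remains to bound the cost of one step. A single transition removes a uniformly random element of $B$ and then, among the at most $nq-n+1$ extensions $B' = B - e + f$, picks one with probability proportional to the polarized weight $\binom{n}{\alpha(B')}^{-1} w(\alpha(B'))$. This is a sample from a distribution of the form $q_i e^{a_i}/\sum_i q_i e^{a_i}$ with $a_i = -T\Phi(\alpha(B'))$ and rational $q_i$ encoded in $\mathrm{poly}(n,m)$ bits, hence a suitable probability in the sense of Section~\ref{sec:sampling} with cost $C$. Multiplying by the number of steps yields total running time $O(C \cdot n \log(n/\epsilon))$, and adding the final $O(n)$ permutation step does not change the bound. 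The one genuinely non-trivial ingredient is the $M$-convexity of Rosenthal's potential on $\mathcal{L}$, which is exactly what Proposition~\ref{prop:fujishige} supplies for EP games; without it, neither the SLC property of $g$ nor the applicability of the matroid sampler would go through.
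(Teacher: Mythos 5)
Your proposal follows the paper's proof essentially step for step: sample a strategy load profile via the base-exchange walk on the polarization of the SLC generating polynomial (whose strong log-concavity comes from the $M$-convexity of Rosenthal's potential, Proposition~\ref{prop:fujishige}, plus Proposition~\ref{prop:sufficient}), invoke Corollary~\ref{cor:base_walk} and Theorem~\ref{thm:base_walk} for the $O(n\log(n/\epsilon))$ mixing bound on the $n$-uniform matroid, then finish with a uniform random permutation; the pushforward/post-processing argument for $d_{TV}(\bar{\pi},\pi)\leq\epsilon$ is what the paper verifies by direct calculation in Appendix~\ref{app:calc}. The approach and all key ingredients are identical, so no further comparison is needed.
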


\begin{proof}
Note that for an extension parallel congestion game, the number of strategy profiles corresponding to a given load profile $\alpha$ is
$
|\mathcal{S}(\alpha)| = \frac{n!}{\alpha!}$.\footnote{This is the number of ways in which we can assign $n$ labeled balls to bins $b_1,\dots,b_p$, where $b_i$ contains $\alpha_i$ balls.}

\begin{lemma}\label{lem:ep_SLC}
The $n$-homogeneous generating polynomial
\begin{equation}
g(x_1,\dots,x_n) = \sum_{\alpha \in [q]^n : |\alpha| = n} |\mathcal{S}(\alpha)| e^{-T \Phi(\alpha)} x^{\alpha} = \sum_{\alpha \in [q]^n : |\alpha| = n} \frac{n!}{\alpha!} e^{-T \Phi(\alpha)} x^{\alpha}
\label{eq:ep_slc}
\end{equation}
is strongly log-concave. Hence, the distribution $\pi'$ over $\mathcal{L}$ given by $\pi'(\alpha) \propto \frac{n!}{\alpha!} e^{-T \Phi(\alpha)}$ for $\alpha \in \mathcal{L}$ is strongly log-concave.
\end{lemma}
\begin{proof}
Strong log-concavity is preserved under scalar multiplication by Proposition \ref{prop:scalar}, so it suffices to show that 
$$
\frac{1}{n!}g(x) = \sum_{\alpha \in [q]^n : |\alpha| = n} \frac{e^{-T \Phi(\alpha)}}{\alpha!}  x^{\alpha}
$$
is strongly log-concave. In turn, by Proposition \ref{prop:sufficient}, it is sufficient to show that $\log\left(e^{-T \Phi(\alpha)}\right) = -T\Phi(\alpha)$ is an $M$-concave function on its effective domain. As $T \geq 0$, this is equivalent to showing that $\Phi(\alpha)$ is $M$-convex on its effective domain $\mathcal{L} = \{ \alpha \in [q]^n : |\alpha| = n\}$. This follows from Proposition \ref{prop:fujishige}.
\end{proof}

Because of Lemma \ref{lem:ep_SLC}, the polarization $\Pi(g)$ of $g$ in \eqref{eq:ep_slc} is also strongly log-concave. The support of $\Pi(g)$ can be seen as the bases of the $n$-uniform matroid $\mathcal{N}$ on ground set $\{(i,j) : 1 \leq i,j \leq n\}$. 
Our algorithm now consists of first running the base-exchange Markov chain for 
$
O(n\log(n/\epsilon))
$
steps, starting from any initial base. We output $\alpha(B)$, where $B$ is the state we are in after the $O(n\log(n/\epsilon))$ steps that were carried out. The resulting distribution $\sigma'(\alpha)$ over $\mathcal{L}$ satisfies $d_{TV}(\sigma',\pi') \leq \epsilon$ by Corollary \ref{cor:base_walk}. We then uniformly at random choose a strategy profile from $\mathcal{S}(\alpha)$. Let $\bar{\pi}$ be the resulting output distribution over $\mathcal{S}$. 
It remains to show that $d_{TV}(\pi,\bar{\pi}) \leq \epsilon$, which can be done using a simple calculation that is deferred to Appendix \ref{app:calc}.

We conclude with analysing the running time of the algorithm. One step of the base-exchange Markov chain can be implemented in time $O(C)$ by definition. Generating an $s \in \mathcal{S}(\alpha)$ uniformly at random can be done by generating a uniform random permutation $\mu$ of $\{1,\dots,n\}$. We set $s_i = p_1$ for players $i = \mu(1),\dots,\mu(\alpha_1)$, $s_i = p_2$ for players $\mu(\alpha_1 + 1),\dots,\mu(\alpha_2 + 1)$, and so on.
Generating a uniform random permutation can be done in time $O(n\log(n))$ using random $0/1$ bits. 
\end{proof}

\subsection{Relaxed logit dynamics}\label{sec:relaxed}
In this section, we give the proof of Theorem \ref{thm:ep_informal}, reformulated in Theorem \ref{thm:relaxed} below. 
Formally, the \emph{relaxed logit dynamics Markov chain} with current state $s \in \mathcal{P}^n$ proceeds by:
\begin{itemize}
\item With probability $\frac{1}{2}$: Select two players $i,j \in N$ uniformly at random and transition to $s'$ given by 
$$
s_k' = \left\{\begin{array}{ll}
s_i & \text{ if } k = j\\
s_j & \text{ if } k = i\\
s_k & \text{ otherwise.}
\end{array}\right.
$$
\item With probability $\frac{1}{2}$: Perform a transition according to the logit dynamics (as in Section \ref{sec:pre_gibbs}).
\end{itemize}
We note that for any symmetric congestion game, this is a well-defined ergodic, time-reversible Markov chain with the Gibbs distribution as stationary distribution.

\begin{theorem}\label{thm:relaxed}
For an extension parallel congestion game $\Gamma$ with common strategy set $\mathcal{P}$ and initial state $s \in \mathcal{P}^n$, the mixing time of the relaxed logit dynamics Markov chain at temperature $T \geq 0$ satisfies
\vspace*{-0.2cm}
\begin{center}
$
\displaystyle \tau_s(\epsilon) \leq n^3 \left( \log n + \log \log |\mathcal{P}| + \log \left(\frac{2T\Phi_{\max}}{\epsilon^2}\right) \right)
$
\end{center}
where $\Phi_{max} = \max_{r \in \mathcal{S}} \Phi(r)$ is the maximum value attained by Rosenthal's potential over $\mathcal{P}^n$.
\end{theorem}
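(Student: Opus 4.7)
The plan is to prove Theorem \ref{thm:relaxed} by Markov chain decomposition over the partition of the state space by strategy load profile, reducing each of the resulting projection and restriction chains to a known rapidly mixing chain, and combining the two bounds via the modified log-Sobolev decomposition theorem from Appendix \ref{app:markov_decomp}. Concretely, partition $\mathcal{P}^n = \bigcup_{\alpha \in \mathcal{L}} \mathcal{S}(\alpha)$ by the load profile. Since Rosenthal's potential $\Phi$ depends only on the load profile, the Gibbs distribution $\pi$ is uniform on each block $\mathcal{S}(\alpha)$, and the induced distribution on $\mathcal{L}$ equals $\pi'(\alpha) \propto (n!/\alpha!)\,e^{-T\Phi(\alpha)}$, which is SLC by Lemma \ref{lem:ep_SLC}.

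A logit step from any $x \in \mathcal{S}(\alpha)$ either self-loops or leaves $\mathcal{S}(\alpha)$, so only the swap moves act nontrivially inside a block. Hence, up to the laziness factor $1/2$, the restriction chain on $\mathcal{S}(\alpha)$ is the random-transposition walk on the Young-subgroup coset space $S_n/(S_{\alpha_1}\times\dots\times S_{\alpha_q})$, whose modified log-Sobolev constant is $\Omega(1/n)$ by classical results (and no worse than that of random transpositions on $S_n$, since a projection cannot slow mixing).

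For the projection chain $\bar P$ on $\mathcal{L}$, swap moves contribute only self-loops, and for $p \neq p'$ a direct computation gives
\begin{equation*}
\bar P(\alpha,\, \alpha - e_p + e_{p'}) \;=\; \tfrac{1}{2}\cdot \tfrac{\alpha_p}{n}\cdot \frac{e^{-T\Phi(\alpha - e_p + e_{p'})}}{\sum_{r} e^{-T\Phi(\alpha - e_p + e_r)}}.
\end{equation*}
The key technical step is to show that $\bar P = \tfrac{1}{2}\tilde P + \tfrac{1}{2} I$, where $\tilde P$ is the Markov chain on $\mathcal{L}$ obtained by projecting the base-exchange walk $P^*$ on the polarized $n$-uniform matroid on $[q]\times[n]$ (associated to $\Pi(g)$ from Lemma \ref{lem:ep_SLC}) down to load profiles. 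This identity will be a bookkeeping calculation, using $\binom{n}{k-1}=\binom{n}{k}\cdot k/(n-k+1)$ to simplify ratios of the polarization weights $\binom{n}{\alpha}^{-1}(n!/\alpha!)\,e^{-T\Phi(\alpha)}$ at neighboring load profiles. Combined with the Cryan–Guo–Mousa bound $\rho(P^*) \geq 1/n$ and the fact that projection and laziness degrade $\rho$ by at most constant factors, this yields $\rho(\bar P) = \Omega(1/n)$.

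Finally, combining these two $\Omega(1/n)$ bounds via the decomposition theorem from Appendix \ref{app:markov_decomp} provides $\rho(P) = \Omega(1/n^3)$ for the relaxed logit dynamics, and plugging this into \eqref{eq:pre_modsob} together with $\log \pi(s)^{-1} \leq T\Phi_{\max} + n\log|\mathcal{P}|$ yields the claimed mixing-time bound. The main technical obstacle will be verifying the identity $\bar P = \tfrac{1}{2}\tilde P + \tfrac{1}{2} I$, since this is the step where the $M$-convexity of $\Phi$ (via Lemma \ref{lem:ep_SLC} and the polarization) really enters the argument.
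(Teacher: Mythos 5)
Your proposal follows the same overall route as the paper: decompose the state space by strategy load profile, bound the projection chain via the polarized base-exchange walk and the Cryan--Guo--Mousa bound, bound the restriction chains via the random-transposition walk, and combine through the Hermon--Salez modified-log-Sobolev decomposition. The restriction-chain analysis matches the paper's Appendix B.3 exactly.

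Where you differ is the projection chain. You claim the exact identity $\bar P = \tfrac12\tilde P + \tfrac12 I$ with $\tilde P$ the lumped base-exchange walk, which would give $\bar\rho = \Omega(1/n)$. The paper instead carries out a \emph{lossy} comparison in Appendix B.3 and only obtains $\bar\rho \geq 1/n^2$: there, the ratio of polarization weights is computed (in equation (B.2)) without the $n!/\gamma!$ factor from $w_\Pi(\gamma)=\binom{n}{\gamma}^{-1}(n!/\gamma!)e^{-T\Phi(\gamma)}$, yielding the non-constant ratio $\gamma_{p'_\gamma}/\beta_{p'_\beta}\in[1/n,n]$. If you restore that factor, a short calculation shows
$$
(n-\mu_{p'})\,\binom{n}{\mu+e_{p'}}^{-1}\frac{n!}{(\mu+e_{p'})!} \;=\; \frac{\prod_r (n-\mu_r)!}{(n!)^{q-1}},
$$
which is independent of $p'$ (here $\mu=\alpha-e_p$); so $(n-\mu_{p'})w_\Pi(\mu+e_{p'})\propto e^{-T\Phi(\mu+e_{p'})}$ with a $p'$-free constant, and your identity holds. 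Your projection bound is therefore tighter than the one the paper actually proves. Two caveats. First, your final combination is arithmetically off: Hermon--Salez gives $\rho \geq \min\{\chi\bar\rho,\rho_{\min}\}$, and the compatibility parameter $\chi$ here is $\min_{s,s'} P(s,s')/\bar P(\alpha,\beta) = \min_\alpha 1/\alpha_p = 1/n$ — you never invoke $\chi$, yet it is exactly where the paper picks up its third factor of $n$. With $\bar\rho,\rho_{\min}=\Omega(1/n)$ and $\chi=\Omega(1/n)$, you get $\rho=\Omega(1/n^2)$, not $\Omega(1/n^3)$; the stated theorem follows \emph{a fortiori}, but you should either prove the identity and claim the better $n^2$ bound, or fall back to the paper's cruder comparison. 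Second, a small conceptual point: the $M$-convexity of $\Phi$ does not enter through this bookkeeping identity (which is a purely combinatorial fact about polarization weights); it enters only through Lemma~\ref{lem:ep_SLC}, which certifies strong log-concavity so that the Cryan--Guo--Mousa bound $\rho(P^*)\geq 1/n$ is applicable.
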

Compared to the mixing time of the (non-relaxed) logit dynamics for general games, we get a doubly exponential improvement in terms of the dependence on $T\Phi_{max}$ (at the cost of a small polynomial increase in the dependence on $n$). 

\begin{proof}[Proof of Theorem \ref{thm:relaxed}]
We will use a Markov chain decomposition argument based on the two operations that define the relaxed logit dynamics Markov chain. 
We first partition the state space $\mathcal{S} = \mathcal{P}^n$ naturally based on load profiles by setting $\Omega_{\alpha} = \mathcal{S}(\alpha)$ for $\alpha \in \mathcal{L}$, where as before we have $\mathcal{L} = \{ \alpha : \alpha \in [q]^n \text{ and } |\alpha| = n\}$. 
Our proof approach is to apply the Markov chain decomposition theorem of Hermon and Salez \cite{Hermon2019} as given in Theorem \ref{thm:decomp_all}. In particular, for this we will need to bound the modified log-Sobolev constants of the projection and restriction chains. 
We start with the modified log-Sobolev constant $\bar{\rho}$ of the projection chain.

The projection chain $\bar{P}$ has state space $\mathcal{L}$ and stationary distribution $\bar{\pi}(\alpha) = |\mathcal{S}(\alpha)|e^{-T\Phi(\alpha)}$ for $\alpha \in \mathcal{L}$. 
Let $\alpha, \beta \in \mathcal{L}$ such that $\sum_e |\alpha_e - \beta_e| = 2$, i.e., there exist  paths $p$ and $p'$ such that
$$
\alpha_e = \left\{ \begin{array}{ll}
\beta_e + 1 & \text{ if } e = p\\
\beta_e - 1 & \text{ if } e = p'\\
\alpha_e & \text{ if } e \in E \setminus \{p,p'\}.
\end{array}\right.
$$
In this case we say that $\alpha$ and $\beta$ are \emph{adjacent load profiles differing on paths $p$ and $p'$}. Note that if $s \in \mathcal{S}(\alpha)$ and $s' \in \mathcal{S}(\beta)$ are such that they differ by a deviation of some player $i$ from path $p$ to path $p'$, then 
$
P(s,s') = \frac{1}{2n}\frac{\exp(-T \Phi(p',s_{-i}))}{Z'}
$
and this expression is the same for every such player $i$ with $s_i = p$. Moreover, note that $\pi(x) = \pi(y)$ for any two strategy profiles $x,y \in \mathcal{S}(\alpha)$.

For some fixed choice of strategy profile $x \in \Omega_{\alpha}$ and player $i$ using path $p$, i.e., $s_i = p$, the transition probabilities for adjacent load profiles can then be seen to equal\footnote{Here $Z$ and $Z'$ are the normalizing constants as in Section \ref{sec:pre_gibbs}}
\begin{align}
2\bar{P}(\alpha,\beta) = \frac{1}{\bar{\pi}(\alpha)} \sum_{x \in \Omega_{\alpha}, y \in \Omega_{\beta}} \pi(x)P(x,y) &= \frac{\pi(x)}{\bar{\pi}(\alpha)}\frac{\alpha_p}{n} |\mathcal{S}(\alpha)|\frac{\exp(-T \Phi(\beta))}{Z'} \nonumber \\
&= \frac{\exp(-T\Phi(\alpha))/Z}{|\mathcal{S}(\alpha)|\exp(-T\Phi(\alpha))/Z}\frac{\alpha_p}{n} |\mathcal{S}(\alpha)|\frac{\exp(-T \Phi(\beta))}{Z'}\nonumber  \\
&= \frac{\alpha_p}{n}\frac{\exp(-T \Phi(\beta))}{Z'}\nonumber  \\
&= 2\alpha_p P(s,s') \label{eq:alpha_beta}
\end{align}
where the last equality is true for any choice of $s \in \mathcal{S}(\alpha)$ and $s' \in \mathcal{S}(\beta)$. Note that this implies that for any $\alpha, \beta \in \mathcal{L}$, $s \in \mathcal{S}(\alpha)$ and $s' \in \mathcal{S}(\beta)$, we have
\begin{equation}
\label{eq:ep_chi_bound}
\frac{P(s,s')}{\bar{P}(\alpha,\beta)} = \frac{1}{\alpha_p} \geq \frac{1}{n}.
\end{equation}
The lower bound of $1/n$ will serve as our lower bound on $\chi$ as defined in Appendix \ref{app:markov_decomp}.
In order to bound the modified log-Sobolev constant of the projection chain, one can use a comparison argument (as defined in  Appendix \ref{app:markov_comparison}) with the base-exchange Markov chain on the support of the polarization $\Pi(g_{\Gamma})$ of $g_{\Gamma}$ as in \eqref{eq:ep_slc}. In this section the support corresponds to the set of bases of an $n$-uniform matroid. In particular, it holds that 
\begin{equation}
\label{eq:ep_modsob1}
\bar{\rho} \geq \frac{1}{n}\cdot \rho(\Pi(\mathcal{L})) \geq \frac{1}{n^2},
\end{equation}
where $\rho(\Pi(\mathcal{L}))$ is the modified log-Sobolev constant of the base-exchange Markov chain on the support of $\Pi(g)$ with $g$ as in \eqref{eq:ep_slc}. The second inequality comes from the fact that $\rho(\Pi(\mathcal{L}))  \geq 1/n$, as was shown by Cryan, Guo and Mousa \cite{Cryan2019}. The first inequality is not hard to see, but we defer it to Appendix \ref{app:ep_missing} as it requires a Markov chain comparison argument between two Markov chains on different state spaces.

We continue with bounding the modified log-Sobolev constant of the restriction chains. In order to do this, we will use a comparison argument with the \emph{random transposition Markov chain} on the set $S_k$ of all permutations of $\{1,\dots,k\}$. Given a permutation $\sigma$, this chain proceeds by selecting two positions $a$ and $b$ uniformly at random, and interchanging the positions of the elements $\sigma(a)$ and $\sigma(b)$.
With $\rho_{rt}$ denoting the modified log-Sobolev constant of this chain, it follows that for every $\alpha \in \mathcal{L}$, we have
\begin{equation}
\label{eq:ep_modsob2}
\rho_{\alpha} \geq \rho_{rt} \geq \frac{1}{n-1}
\end{equation}
using the fact that $\rho_{rt} \geq 1/(n-1)$, as shown by Goel \cite{Goel2004}. This comparison argument is also deferred to Appendix \ref{app:ep_missing}. Now, applying Theorem \ref{thm:markov_comparison}, it follows that $\bar{\rho} \geq 1/n^3$.
Plugging this into \eqref{eq:pre_modsob}, it then follows that 
$$
\tau_s(\epsilon) \leq n^3 \left( \log n + \log \log |\mathcal{P}| + \log \left(\frac{2T\Phi_{\max}}{\epsilon^2}\right) \right)
$$
using that $\pi(s)^{-1} \leq |\mathcal{P}|^n e^{-T\Phi_{\max}}$ for every $s \in \mathcal{S}$, because of the non-negativity of the cost functions.
\end{proof}

\subsection{Uniform sampling of pure Nash equilibria}\label{sec:nash}
In Theorem \ref{thm:ep_nash} we show that the result in Theorem \ref{thm:ep} also implies that, for an extension parallel congestion game $\Gamma$, we can (approximate) uniformly at random sample a pure Nash equilibrium from the set $\text{NE}(\Gamma)$ of all pure Nash equilibria of $\Gamma$ in pseudo-polynomial time. That is, we sample every $s \in \text{NE}(\Gamma)$ with probability approximately $1/|\text{NE}(\Gamma)|$. The (approximate) uniform sampling of combinatorial objects has received a lot of attention in the last thirty years, in particular within the area of theoretical computer science. However, to the best of our knowledge, no non-trivial results for (pure) Nash equilibria are known, despite the fact that the problem of computing Nash equilibria has received much attention.

For the proof of Theorem \ref{thm:ep_nash} we use the fact that Nash equilibria are precisely the profiles minimizing Rosenthal's potential in EP congestion games. Furthermore, we will exploit the fact that Theorem \ref{thm:ep} holds for any temperature $T \geq 0$. In particular, if we set $T$ large enough, then most weight in the stationary distribution will be assigned to profiles minimizing Rosenthal's potential (under the assumption that the cost functions are integer-valued). This means that, with high probability, Algorithm \ref{alg:ep} will output a strategy profile minimizing Rosenthal's potential with domain $\mathcal{S}$.\footnote{Whenever we refer to Algorithm \ref{alg:ep}, we mean the implementation of the high-level approach as given in the previous section.} We will use the Gibbs distribution with base $2$ instead of $e$ to avoid having to work with real numbers. The full proof of Theorem \ref{thm:ep_nash} is given in Appendix \ref{app:ep_nash}

\begin{theorem}
\label{thm:ep_nash}
Let $\epsilon > 0$ and let $\Gamma$ be an extension parallel congestion game with integer-valued cost functions and $n$ players. There is a randomized  algorithm $\mathcal{A}$ with output distribution $\bar{\pi}$ over $\text{NE}(\Gamma)$ that is $\epsilon$-close to the uniform distribution over $\text{NE}(\Gamma)$, and runs in (expected) time polynomial in $n, m, \Phi_{\max}$ and $\log(1/\epsilon)$ where $\Phi_{\max}$ is the maximum value attained by Rosenthal's potential.
\end{theorem}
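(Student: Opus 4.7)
The plan is to reduce approximate uniform sampling of $\text{NE}(\Gamma)$ to Gibbs sampling at a suitably large temperature $T$, exploiting two structural facts: in an extension parallel congestion game, every pure Nash equilibrium is in fact a \emph{global} minimizer of $\Phi$ (so all PNE share the common potential value $\Phi_{\min}$), and since the cost functions are integer-valued, every non-equilibrium profile $s$ satisfies $\Phi(s)\geq \Phi_{\min}+1$. Together these imply that the Gibbs distribution $\pi$ conditioned on $\text{NE}(\Gamma)$ is \emph{exactly} uniform over $\text{NE}(\Gamma)$, while the integrality gap lets us drive the Gibbs mass outside $\text{NE}(\Gamma)$ to zero by taking $T$ modestly large.

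First I would switch to the base-$2$ Gibbs distribution $\pi(s)\propto 2^{-T\Phi(s)}$ (as suggested in the preamble, this keeps all computations rational). For any $s^\star\in\text{NE}(\Gamma)$ and any $s\notin\text{NE}(\Gamma)$ one has
\begin{equation*}
\frac{\pi(s)}{\pi(s^\star)}\;=\;2^{-T(\Phi(s)-\Phi_{\min})}\;\leq\;2^{-T},
\end{equation*}
so $\pi(\mathcal{S}\setminus\text{NE}(\Gamma))\leq |\mathcal{S}|\cdot 2^{-T}\leq m^n\cdot 2^{-T}$. Choosing $T=\lceil n\log_2 m+\log_2(8/\epsilon)\rceil = O(n\log m+\log(1/\epsilon))$ makes this bound at most $\epsilon/8$, while $T\Phi_{\max}$ is polynomial in $n,m,\Phi_{\max},\log(1/\epsilon)$.

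Then I would invoke Theorem~\ref{thm:ep} (adapted to base $2$, which only changes the exponents but leaves all M-convexity arguments intact) at temperature $T$ with accuracy parameter $\epsilon'=\epsilon/8$ to obtain a sample from a distribution $\sigma$ with $d_{TV}(\sigma,\pi)\leq \epsilon/8$. Testing whether the returned $s$ lies in $\text{NE}(\Gamma)$ is routine: for each player $i$ and each alternative $o,d$-path $s_i'\in\mathcal{P}$, compare $C_i(s)$ to $C_i(s_i',s_{-i})$, which runs in time polynomial in $n$ and $m$. If $s\in\text{NE}(\Gamma)$ output it, otherwise reject and restart. Since $\pi(\text{NE}(\Gamma))\geq 1-\epsilon/8$ and $\sigma$ is within $\epsilon/8$ of $\pi$, the per-trial acceptance probability is at least $3/4$, so the expected number of trials is $O(1)$. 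A standard rejection-sampling calculation shows
\begin{equation*}
d_{TV}\bigl(\sigma(\cdot\mid\text{NE}(\Gamma)),\;\pi(\cdot\mid\text{NE}(\Gamma))\bigr)\;\leq\;\frac{2\,d_{TV}(\sigma,\pi)+\pi(\mathcal{S}\setminus\text{NE}(\Gamma))}{\sigma(\text{NE}(\Gamma))}\;\leq\;\epsilon,
\end{equation*}
and the right-hand distribution is exactly $\text{Unif}(\text{NE}(\Gamma))$ by the equi-potential property. Since one call to Algorithm~\ref{alg:ep} runs in time polynomial in $n,m,\log(1/\epsilon')$ and the bit-length of the suitable transition probabilities (of the form $2^{-T\Phi}$, so pseudo-polynomially controlled by $T\Phi_{\max}$), the total expected running time is polynomial in $n,m,\Phi_{\max},\log(1/\epsilon)$ as claimed.

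The main conceptual ingredient to nail down is the equality $\text{NE}(\Gamma)=\arg\min_{s}\Phi(s)$ in EP games: the direction $\arg\min\subseteq\text{NE}$ follows from Rosenthal's identity~\eqref{eq:exact} in any congestion game, but the converse is EP-specific and can be extracted from the structural results of \cite{Holzman1997,Fujishige2015}; in fact, the M-convexity of $\Phi$ on the space of load profiles (Proposition~\ref{prop:fujishige}) implies that every local minimum of $\Phi$ is global, which, combined with the fact that PNE are local minima of $\Phi$, delivers the claim. Once this is established, the rest is a clean combination of Theorem~\ref{thm:ep}, the integrality of the costs, and standard rejection sampling.
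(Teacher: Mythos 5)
Your proposal is correct and matches the paper's approach: switch to a base-2 Gibbs distribution, drive $T$ large enough (roughly $n\log_2 q + \log_2(1/\epsilon)$) so that the integrality gap forces almost all Gibbs mass onto $\text{NE}(\Gamma)=\arg\min_s \Phi(s)$, run the EP Gibbs sampler of Theorem~\ref{thm:ep}, and reject-and-restart on non-equilibria. The only cosmetic differences are that the paper rejects by comparing the potential value of the sample to a precomputed minimum $\phi$ rather than by checking deviations directly, and cites Proposition~\ref{prop:min_nash} (Holzman--Law-Yone/Fotakis) for $\text{NE}(\Gamma)=\arg\min\Phi$ rather than deriving it from M-convexity as you do; both are valid.
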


\begin{remark}
The pseudo-polynomial dependence, coming from the polynomial dependence on $\Phi_{\max}$ rather than $\log_2(\Phi_{\max})$, arises from the fact that we have to compute transition probabilities of the form $2^{a_i}/\sum_i 2^{a_i}$, where the $a_i$ are integers, which requires $\Omega(\sum_i a_i)$ random $0/1$ bits (following the notion of suitable probabilities in Section \ref{sec:sampling}). However, there is no \emph{a priori} reason that the problem of (approximately) sampling pure Nash equilibria according to the uniform distribution requires pseudo-polynomial (in the input size of the original congestion game) time, as opposed to sampling from the Gibbs distribution. We leave open the question of finding a truly polynomial time algorithm.
\end{remark}

\section{Capacitated uniform congestion games}\label{sec:cap}
In this section, we consider \emph{$u$-capacitated $k$-uniform congestion games} for vectors $u = (u_1,\dots,u_m)$ and $k = (k_1,\dots,k_n)$. We write $K = |k| = \sum_{i} k_i$ and $U = |u| = \sum_e u_e$. The vector $u$ models the capacities of the resources $e \in E$, i.e., the variables $(u_e)_{e\in E}$ as defined in Section \ref{sec:pre_cong}. The strategy set of player $i \in N$ is given by all subsets $S \subseteq E$ of cardinality $|S| = k_i$, i.e., the bases of the $k_i$-uniform matroid on $E$. We write $\Gamma(u,k)$ for the collection of all $u$-capacitated $k$-uniform congestion games. We remark that in this section \emph{load profiles} will refer to resource load profiles as defined in Section \ref{sec:pre_cong}, and no longer to \emph{path load} profiles as considered in Section \ref{sec:ep}. 

Note that we can naturally model a feasible strategy profile in $s = (s_1,\dots,s_n) \in \mathcal{S}$ of a capacitated uniform congestion game as a (simple) bipartite graph $G = (N \cup E, F) \in \mathcal{G}(u,k)$ on $N \cup E$: There is an edge $\{i,e\} \in F$ if and only if player $i \in N$ uses resource $e \in E$ in $s_i$. 

The main result needed in this section is stated in Proposition \ref{prop:mckay} below.
We use the notation $[x]_b = x(x-1)\cdots (x-b+1)$ for $x,b \in \Z_{\geq 0}$. For a bipartite degree sequence $(k,\alpha)$, we then write $K_b = \sum_{i=1}^n [k_i]_b$ and $A_b = \sum_{j=1}^m [\alpha_j]_b$. Note that $K = A = K_1 = A_1$.

\begin{proposition}[McKay \cite{McKay1984}]
Let $\mathcal{D}$ be the collection of all bipartite degree sequences $(k,\alpha)$ for which $1 \leq k_{\max}\alpha_{\max} = o\left(K^{1/4}\right)$. Then 
$$
|\mathcal{G}(k,\alpha)| = \frac{K!}{\prod_i k_i!\prod_j \alpha_j!} \exp\left(-\frac{K_2}{K^2}\cdot A_2 + O\left(\max\{k_{\max},\alpha_{\max}\}^4/K\right)\right)
$$
as $K \rightarrow \infty$.
\label{prop:mckay}
\end{proposition}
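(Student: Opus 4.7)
The plan is to deploy McKay's bipartite switching method via the configuration (pairing) model. First I would associate to each player $i$ a set of $k_i$ ``half-edges'' and to each resource $j$ a set of $\alpha_j$ half-edges, and let $\mathcal{C}(k,\alpha)$ denote the set of bijections (``pairings'') between the two sides, so $|\mathcal{C}(k,\alpha)|=K!$. Every simple bipartite graph in $\mathcal{G}(k,\alpha)$ arises from exactly $\prod_i k_i!\,\prod_j \alpha_j!$ pairings, one per independent choice, at each vertex, of how the incident half-edges are labelled. Consequently
$$
|\mathcal{G}(k,\alpha)| = \frac{K!}{\prod_i k_i!\,\prod_j \alpha_j!}\cdot \Pr[\pi\text{ is simple}],
$$
where $\pi$ is uniform in $\mathcal{C}(k,\alpha)$. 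This reduces the whole question to an asymptotic evaluation of $\Pr[\pi\text{ simple}]$.

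Next I would analyse the random variable $D(\pi)$ counting unordered pairs of parallel edges (``defects'') of $\pi$. A direct moment calculation gives $\E[D(\pi)] = \lambda(1+o(1))$ with $\lambda$ proportional to $K_2 A_2/K^2$, and one expects $D$ to be asymptotically Poisson so that $\Pr[D=0]\approx \exp(-\lambda)$. To make this rigorous I would invoke a switching argument: for each pairing with exactly $d$ defects, define the set of ``forward switchings'' obtained by picking one doubled edge between some $(i,j)$ together with an ordinary edge $(i',j')$ of $\pi$ disjoint from it, then rewiring the four involved half-edges so as to kill that defect; symmetrically, ``backward switchings'' act on pairings with $d-1$ defects and are designed to invert forward switchings. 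Under the sparsity hypothesis $k_{\max}\alpha_{\max}=o(K^{1/4})$, both counts concentrate to their leading-order expressions up to a factor $1+O(\max\{k_{\max},\alpha_{\max}\}^2/K)$, yielding
$$
\frac{|\{\pi:D(\pi)=d\}|}{|\{\pi:D(\pi)=d-1\}|} = \frac{\lambda}{d}\bigl(1+O(\max\{k_{\max},\alpha_{\max}\}^2/K)\bigr).
$$
Iterating over $d$ and summing the resulting geometric-type series then gives $\Pr[D=0] = \exp\bigl(-\lambda + O(\max\{k_{\max},\alpha_{\max}\}^4/K)\bigr)$, which combined with the reduction above produces the claimed formula.

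The main obstacle will be the bookkeeping of the error terms in the switching ratios. Some forward switchings fail to reduce the defect count by exactly one, because the auxiliary ordinary edge happens to share an endpoint with the chosen defect, or because the rewiring accidentally creates a new parallel edge elsewhere; symmetrically, some backward switchings are not valid inverses. Each such ``bad'' event forces an additional half-edge coincidence whose probability is $O(\max\{k_{\max},\alpha_{\max}\}^2/K)$, and because a pair of composed switching moves is involved the resulting multiplicative error in the ratio is $O(\max\{k_{\max},\alpha_{\max}\}^4/K)$. Making this uniform in $d$ across all defect counts that contribute non-trivially to the partition sum, and checking that the iterated product and geometric series evaluation preserve the error bound globally, is precisely where the hypothesis $k_{\max}\alpha_{\max}=o(K^{1/4})$ is used at full strength.
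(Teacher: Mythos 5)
The paper does not prove this statement; Proposition~\ref{prop:mckay} is imported verbatim (modulo notation and a slightly stronger hypothesis) from McKay's 1984 asymptotic enumeration paper, so there is no internal proof to compare against. Your outline is a faithful reconstruction of McKay's actual method: the exact pairing/configuration-model identity $|\mathcal{G}(k,\alpha)| = \frac{K!}{\prod_i k_i!\,\prod_j \alpha_j!}\Pr[\pi\text{ simple}]$, the observation that in the bipartite case the only defect type is a repeated edge, and the iterated forward/backward switching ratio between defect classes giving $\Pr[D=0]=\exp(-\lambda+O(\Delta^4/K))$ where $\Delta=\max\{k_{\max},\alpha_{\max}\}$. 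The reduction and the switching framework are both sound, and you correctly locate where the hypothesis $k_{\max}\alpha_{\max}=o(K^{1/4})$ enters, namely in making the ratio estimates uniform over the defect counts $d$ that carry nonnegligible mass.

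One thing worth flagging, which you sidestep by writing ``$\lambda$ proportional to $K_2A_2/K^2$'': the first-moment calculation in the pairing model gives the Poisson parameter
\[
\lambda = \sum_{i,j}\binom{k_i}{2}\binom{\alpha_j}{2}\cdot\frac{2}{K(K-1)} = \frac{K_2A_2}{2K(K-1)} \sim \frac{K_2A_2}{2K^2},
\]
i.e. with an extra factor of $\tfrac12$. McKay's theorem accordingly has exponent $-\frac{K_2A_2}{2K^2}+O(\Delta^4/K)$, not $-\frac{K_2A_2}{K^2}$ as written in the paper's transcription of the proposition. Had you pinned down the constant, this discrepancy would have surfaced. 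It does not affect the validity of your outline, but if you were to complete the proof, nailing that constant (two ordered pairings per unordered choice of half-edge quadruple, divided by $K(K-1)$) is exactly the step that produces the correct leading coefficient.
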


\subsection{Sampling from the Gibbs distribution}
\label{sec:cap_gibbs}
In this section we give an (almost polynomial time) sampling algorithm that samples from a distribution which is close to the Gibbs distribution (provided the game is sufficiently large). That is, we show that for a large class of pairs $(u,k)$, we can sample from a distribution close to the Gibbs distribution.

We follow again the high-level approach  in Algorithm \ref{alg:ep}. The set of all feasible load profiles is now given by 
$
L(k,u) = \left\{ \alpha : 0 \leq \alpha \leq u \text{ and } |\alpha| = \sum_i k_i \right\}
$
and $\mathcal{S}(\alpha) = \mathcal{G}(k,\alpha)$ for any feasible load profile $\alpha \in L(k,u)$.
Recall that we want to sample an $\alpha \in L(k,u)$ with probability proportional to (approximately)
$
\approx |\mathcal{S}(\alpha)| e^{-T \Phi(\alpha)},
$
and then sample a strategy profile $s \in \mathcal{S}(\alpha)$ with probability $\approx 1/|\mathcal{S}(\alpha)|$. 

A couple of problems arise here compared to the case of extension parallel congestion games. First of all, there is no polynomial time algorithm known to compute the numbers $w_{\alpha} = |\mathcal{S}(\alpha)|$.\footnote{In fact, it is still an open question  whether or not this problem is $\#$P-complete (see, e.g., \cite{JSV2004}).} Instead, we will rely on a \emph{fully-polynomial randomized approximation scheme} for computing approximations $\hat{w}_{\alpha}$ to the numbers $w_{\alpha}$ up to arbitrary precision \cite{Bezakova2007,JSV2004}.
Secondly, in this case the polynomial 
\begin{equation}
g(x_1,\dots,x_n) = \sum_{\alpha \in L(k,u)} |\mathcal{S}(\alpha)| e^{-T \Phi(\alpha)} x^{\alpha} 
\label{eq:uc_not_slc}
\end{equation}
is in general not strongly log-concave. 
We overcome this problem by showing that we can restore strong log-concavity `approximately' when the game becomes large, and when there are, in addition, suitable capacity constraints. We do this by using asymptotic enumeration formulas for the number of bipartite graphs with a given degree sequence, an area that has received considerable attention in combinatorics. It turns out that replacing $|\mathcal{S}(\alpha)|$ by an asymptotic approximation $\phi(\alpha)$ in \eqref{eq:uc_not_slc} gives rise to a strongly log-concave polynomial.

Finally, the problem of sampling a strategy profile $s \in \mathcal{S}(\alpha) = \mathcal{G}(k,\alpha)$ now corresponds to that of sampling a bipartite graph with degree sequence $(k,\alpha)$ for which many algorithms are known. 
The main result of this section is given in Theorem \ref{thm:uc_main}. 

\begin{theorem}\label{thm:uc_main}
Let $\epsilon \geq 0$, let $\pi$ be the Gibbs distribution at temperature $T \geq 0$, and let $\mathcal{D}$ be the the class of all congestion games $\Gamma(k,u)$ satisfying 
\begin{equation}
\label{eq:mckay}
1 \leq k_{\max}u_{\max} = o\left(K^{1/4}\right).
\end{equation}
There is a randomized algorithm $\mathcal{A}$ for the class $\mathcal{D}$, and a constant $K_0 \geq 0$, such that the output distribution $\bar{\sigma}$ over $\mathcal{S}$ has the property that
$$
d_{TV}(\bar{\sigma},\pi) \leq \epsilon
$$
whenever $K \geq K_0$. The algorithm runs in (expected) time 
$$
C \cdot n \left( \log n + \log \log |\mathcal{P}| + \log \left(\frac{2T\Phi_{\max}}{\epsilon^2}\right) \right)
$$
with $C(n,m,\epsilon,\Phi_{\max}) = \text{\emph{poly}}(1/\epsilon,n,m,\Phi_{\max})$.
\end{theorem}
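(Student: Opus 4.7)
The plan is to instantiate the sampler of Algorithm \ref{alg:ep}: first sample a resource load profile $\alpha \in L(k,u)$ with probability approximately proportional to $|\mathcal{S}(\alpha)|e^{-T\Phi(\alpha)}$, then sample $s \in \mathcal{S}(\alpha) = \mathcal{G}(k,\alpha)$ approximately uniformly. The two obstacles flagged before the theorem are that $|\mathcal{S}(\alpha)|$ has no known polynomial-time closed form and that the generating polynomial in \eqref{eq:uc_not_slc} is not strongly log-concave in general. I would resolve both by replacing $|\mathcal{S}(\alpha)|$ with its leading-order McKay approximation from Proposition \ref{prop:mckay}.

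Concretely, for $(k,u) \in \mathcal{D}$ and any $\alpha \in L(k,u)$ (so $\alpha \le u$ coordinate-wise),
$$
|\mathcal{S}(\alpha)| = \frac{K!}{\prod_i k_i!}\cdot \frac{1}{\alpha!}\exp\!\left(-\tfrac{K_2}{K^2}A_2(\alpha) + O\!\big(\max\{k_{\max},u_{\max}\}^4/K\big)\right).
$$
Under the hypothesis $k_{\max}u_{\max} = o(K^{1/4})$ the error term is $o(1)$ uniformly in $\alpha$, and the prefactor is independent of $\alpha$. Thus the truncated weights $\tilde{w}(\alpha) := \tfrac{1}{\alpha!}\exp(-T\Phi(\alpha) - \tfrac{K_2}{K^2}A_2(\alpha))$ and the corresponding distribution $\tilde{\pi} \propto \tilde{w}$ on $L(k,u)$ satisfy $d_{TV}(\tilde{\pi},\pi') = o(1)$ with $\pi'(\alpha) \propto |\mathcal{S}(\alpha)|e^{-T\Phi(\alpha)}$, and for any $\epsilon > 0$ there is a $K_0$ such that this is at most $\epsilon/2$ whenever $K \ge K_0$. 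The key structural step is then to show that the $K$-homogeneous generating polynomial
$$
\tilde{g}(x_1,\dots,x_m) = \sum_{\alpha \in L(k,u)} \tilde{w}(\alpha)\, x^\alpha
$$
is SLC via Proposition \ref{prop:sufficient}, which reduces to proving that $\nu(\alpha) := -T\Phi(\alpha) - \tfrac{K_2}{K^2}A_2(\alpha)$ is $M$-concave on $L(k,u)$. This holds because $\Phi$ is separable convex in the load profile by Proposition \ref{prop:ros_sep} and $A_2(\alpha) = \sum_j \alpha_j(\alpha_j-1)$ is a sum of one-variable convex functions; hence $\nu$ is separable concave on $\Z_{\geq 0}^m$, so $M^\sharp$-concave, and restricting to the hyperplane $|\alpha| = K$ upgrades this to $M$-concavity. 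Since $L(k,u)$ is the base set of the partition polymatroid with capacities $u$ and total rank $K$, Corollary \ref{cor:base_walk} applies.

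With SLC established, Step I is realised by running the base-exchange Markov chain on the polarisation $\Pi(\tilde{g})$ for the number of steps required by \eqref{eq:pre_modsob} together with the Cryan--Guo--Mousa bound $\rho \ge 1/K$ on the modified log-Sobolev constant, outputting the induced load profile $\alpha$. Implementing one transition requires sampling proportional to $\tilde{w}$ over a polynomially large set, which only requires evaluating Rosenthal's potential and $A_2$, hence the transition probabilities are suitable in the sense of Section \ref{sec:sampling} and the cost is absorbed in $C$. For Step II, conditional on $\alpha$, I would invoke a known FPAUS for sampling a bipartite graph with prescribed degree sequence $(k,\alpha)$ (e.g.\ Bez\'akov\'a--Bhatnagar--Vigoda or Jerrum--Sinclair--Vigoda), producing $s \in \mathcal{G}(k,\alpha)$ within total variation $\epsilon/4$ in time polynomial in $n, m, \Phi_{\max}, 1/\epsilon$.

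The hard part will be the error accounting: combining (i) the multiplicative McKay error of $1+o(1)$ between $\tilde{\pi}$ and $\pi'$, (ii) the approximation error of the rational probabilities used inside the polarised base-exchange transitions, (iii) the truncation error of the Markov chain, and (iv) the conditional bipartite sampler error, while driving all four below $\epsilon/4$ and rescaling $\epsilon$ by an absolute constant. A subtle point is that McKay's formula must apply uniformly over all $\alpha \in L(k,u)$ visited by the chain, not merely at $\alpha = u$; this follows from $\alpha \le u$ on $L(k,u)$, which yields $\alpha_{\max} \le u_{\max}$ and preserves the asymptotic regime throughout. Identifying $K_0$ then amounts to choosing it large enough that the uniform-in-$\alpha$ $o(1)$ McKay error is bounded by $\epsilon/2$, at which point the stated running-time decomposition $C \cdot (\text{mixing steps})$ follows from Theorem \ref{thm:base_walk} together with the modified log-Sobolev inequality.
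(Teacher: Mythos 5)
Your plan is correct in its main structural insight — the SLC Lemma via McKay's asymptotic formula plus separable convexity of $A_2$ and $\Phi$, the polarization to a truncated partition matroid, and the Cryan--Guo--Mousa bound on the modified log-Sobolev constant — but you run a genuinely different Markov chain from the one the paper runs, and it is precisely the alternative that the paper flags and declines in Remark~\ref{rem:epsilon}.

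In the paper's proof, the McKay approximation $\phi(\alpha)$ is used \emph{only} to establish a mixing-time bound: once $K \geq K_0$ (an absolute threshold, independent of $\epsilon$), the inequality $\tfrac{1}{2}|\mathcal{S}(\alpha)| \leq \phi(\alpha) \leq \tfrac{3}{2}|\mathcal{S}(\alpha)|$ holds, and a Markov chain comparison argument (Theorem~\ref{thm:markov_comparison}) transfers the modified log-Sobolev lower bound from the $\phi$-weighted base-exchange chain to the \emph{ideal} chain weighted by the true counts $|\mathcal{S}(\alpha)|$. The algorithm then runs a third chain, whose weights are FPRAS-approximations $\hat w_\alpha$ of $|\mathcal{S}(\alpha)|$ (via Bez\'akov\'a--Bhatnagar--Vigoda), with a second comparison argument carrying the mixing bound over to that chain. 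Because $\hat w_\alpha$ can be made arbitrarily close to $|\mathcal{S}(\alpha)|$, the stationary error is controlled by $\epsilon$ at the cost of a $\operatorname{poly}(1/\epsilon)$ factor inside $C$, while $K_0$ remains a fixed constant. You instead run the chain directly with the McKay weights $\tilde w(\alpha)$, so its stationary distribution $\tilde\pi$ differs from the target $\pi'$ by an $o(1)$ amount that depends only on $K$ and cannot be reduced by extra computation; to beat $\epsilon$ you must let $K_0$ depend on $\epsilon$. Your route buys a simpler transition rule (no black-box counting) and a potentially milder $\epsilon$-dependence in $C$, but it weakens the statement in the sense of Remark~\ref{rem:epsilon}: the threshold $K_0$ is no longer a ``constant'' independent of the desired accuracy. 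Both are defensible readings of the theorem's quantifiers, but they are not the same proof, and the paper deliberately takes the FPRAS path to keep $K_0$ fixed.

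Two smaller points. First, you correctly state the Cryan--Guo--Mousa bound as $\rho \geq 1/K$ (the rank of the polarized partition matroid is $K = \sum_i k_i$, not $n$); the paper's proof writes $1/n$, which is consistent with the bound stated only when the $k_i$'s are bounded, and reads like a slip carried over from the extension-parallel section where the rank was $n$. Second, your remark that $M^\sharp$-concavity ``restricted to the hyperplane $|\alpha|=K$ upgrades to $M$-concavity'' is a little loose: the effective domain is $L(k,u)$, which is a proper subset of the hyperplane, and what one really needs is that a separable concave function with effective domain equal to a polymatroid base set is $M$-concave; the paper cites Murota for exactly this fact, and it is worth making the citation explicit rather than relying on the hyperplane restriction alone.
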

\begin{proof}
Setting 
$$
\phi(\alpha) = \frac{K!}{k!\alpha!} \exp\left(-\frac{K_2}{K^2}\cdot A_2\right)
$$
it follows, assuming \eqref{eq:mckay} holds, that for any $0 \leq \alpha \leq u$, we have $\phi(\alpha) = (1 + o(1))|\mathcal{S}(\alpha)|$, where $o(1)$ is with respect to $K \rightarrow \infty$. In particular if $K \geq K_0$ for $K_0$ large enough, it follows that 
\begin{equation}
\label{eq:asymptotic}
\frac{1}{2}|\mathcal{S}(\alpha)| \leq \phi(\alpha) \leq \frac{3}{2}|\mathcal{S}(\alpha)|.
\end{equation}
The next step is now to show that replacing $|\mathcal{S}(\alpha)|$ by $\phi(\alpha)$ in \eqref{eq:uc_slc} gives rise to a strongly log-concave polynomial. 
The crucial observation here is to see that $A_2 = \sum_j \alpha_j (\alpha_j - 1)$ is a separable convex function.

\begin{lemma}\label{lem:uc_slc}
The $K$-homogeneous generating polynomial
\begin{equation}
g(x_1,\dots,x_n) = \sum_{\alpha \in L(k,u)} \frac{K!}{k!\alpha!} \cdot  \exp\left(-\frac{K_2}{K^2}\cdot \sum_{j=1}^m \alpha_j(\alpha_j-1) \right) \exp\left(-T \Phi(\alpha)\right) \cdot x^{\alpha}
\label{eq:uc_slc}
\end{equation}
is strongly log-concave.
\end{lemma}
\begin{proof}
Following the proof of Lemma \ref{lem:ep_SLC}, first observe that\footnote{Note that $k$ and all quantities involving $k$ are considered fixed.}
\begin{small}$$
\frac{k!}{K!} g(x_1,\dots,x_n) = \sum_{\alpha \in L(k,u)} \frac{1}{\alpha!} \exp\left(-\frac{K_2}{K^2}\cdot \sum_{j=1}^m \alpha_j(\alpha_j-1) \right) \exp\left(-T \Phi(\alpha)\right) x^{\alpha}
$$\end{small}
is strongly log-concave as well because of Proposition \ref{prop:scalar}. Then, in order to apply Proposition \ref{prop:sufficient}, it suffices to show that
$$
-\left( \frac{K_2}{K^2}\cdot \sum_{j=1}^m  \alpha_j (\alpha_j - 1) + T \Phi(\alpha) \right)
$$
is $M$-concave over its domain $L(k,u)$.\footnote{That is, formally speaking, we define it to be $-\infty$ outside of $L(k,u)$} This follows directly from the fact that both $\sum_{j=1}^m \alpha_j (\alpha_j - 1)$ and $\Phi(\alpha)$ are separable convex functions in $(\alpha_1,\dots,\alpha_m)$ over the (effective) domain $L(k,u)$, and the fact that $K_2, K, T \geq 0$. Separable convex functions (over effective domain $L(k,u)$) are known to be $M$-convex \cite{Murota2003}.
\end{proof}

One can now carry out similar steps as in the proof of Theorem \ref{thm:ep}, albeit with some modifications. Again, the polarization $\Pi(g)$ of $g$ as in \eqref{eq:uc_slc} is strongly log-concave as well. The support of $\Pi(g)$ is now the set of bases of the $n$-truncation of a partition matroid $\mathcal{N}$ on ground set $E = \cup_j E_j$ where $E_j = \{(j,i) : 1 \leq i \leq n\}$ with $A \subseteq E$ is independent if and only if $|A \cap E_j| \leq u_j$ for $j = 1,\dots,m$. It follows that the modified log-Sobolev constant of this chain satisfies $\rho_{\mathcal{N}} \geq 1/n$. A simple Markov chain comparison argument (as described in Appendix \ref{app:markov_comparison}), in combination with \eqref{eq:asymptotic}, then yields that the modified log-Solev constant $\rho$ of the Markov chain in which we use the original quantities $|\mathcal{S}(\alpha)|$, instead of the approximation $\phi(\alpha)$, satisfies $\rho \geq \frac{1}{2n}$.

Of course, the algorithmic problem is now that we cannot compute the quantities $w_{\alpha} = |\mathcal{S}(\alpha)|$ exactly in polynomial time, so one step of this base-exchange Markov chain cannot be implemented efficiently. Nevertheless we can use the approximation scheme of Bez\'akov\'a, Bhatnagar and Vigoda \cite{Bezakova2007} to compute approximations $\hat{w}_{\alpha}$ up to arbitrary precision in time polynomial in $n, m$ and $1/\epsilon$ (this gives the dependence of $1/\epsilon$ in $C$). 
A Markov chain comparison argument then implies that the base-exchange Markov chain on $\mathcal{N}$ using these approximation is also rapidly mixing, in particular, it is sufficient to run the chain  
$$
3n \left( \log n + \log \log |\mathcal{P}| + \log \left(\frac{2T\Phi_{\max}}{\epsilon^2}\right) \right)
$$
steps and then output $\alpha(B)$ where $B$ is the current base after having run the chain for the above-mentioned number of steps.

We conclude with the sampling of a strategy profile from the set $\mathcal{S}(\alpha)$ uniformly at random, which now requires sampling a bipartite graph from the set $\mathcal{G}(k,\alpha)$ uniformly at random. One algorithm to do this for degree sequences satisfying the condition in Proposition \ref{prop:mckay} is that of Arman, Gao and Wormald \cite{Arman2019} that runs in expected polynomial time.\footnote{For an overview of algorithms that can be used to (approximately) sample a bipartite graph with a given degree sequence, see, e.g., \cite{Dyer2020}.}
\color{black}
\end{proof}
\begin{remark}\label{rem:epsilon}
We remark here that the algorithm of Theorem \ref{thm:uc_main} does not run in polynomial time, as described in Section \ref{sec:sampling}, because of the dependence of $C$ on $1/\epsilon$ (as opposed to the required $\log(1/\epsilon)$). This dependence arises because of the algorithmic approximations of the numbers $|\mathcal{S}(\alpha)|$ used in the proof of Theorem \ref{thm:uc_main}. Alternatively, we could just use the approximations $\phi(\alpha)$ straight away. However, these predictions only become accurate when $K \rightarrow \infty$, so this gives a weaker result in terms of closeness to the Gibbs distribution. 
\end{remark}

\paragraph{Acknowledgements.} The author is grateful to Prasad Tetali for pointing him to \cite{Hermon2019}, and to the anonymous reviewers of EC 2021 for their useful comments.

\newpage
\bibliographystyle{plain}
\bibliography{references}

\begin{thebibliography}{10}

\bibitem{Ackermann2008}
Heiner Ackermann, Heiko R\"{o}glin, and Berthold V\"{o}cking.
\newblock On the impact of combinatorial structure on congestion games.
\newblock {\em Journal of the ACM}, 55(6):25:1--25:22, 2008.

\bibitem{SLC1}
Nima Anari, Shayan~Oveis Gharan, and Cynthia Vinzant.
\newblock Log-concave polynomials, entropy, and a deterministic approximation
  algorithm for counting bases of matroids.
\newblock In {\em 2018 IEEE 59th Annual Symposium on Foundations of Computer
  Science (FOCS)}, pages 35--46. IEEE, 2018.

\bibitem{ALOV2018}
Nima Anari, Kuikui Liu, Shayan~Oveis Gharan, and Cynthia Vinzant.
\newblock Log-concave polynomials iii: Mason's ultra-log-concavity conjecture
  for independent sets of matroids.
\newblock {\em CoRR}, abs/1807.00929, 2018.

\bibitem{SLC2}
Nima Anari, Kuikui Liu, Shayan~Oveis Gharan, and Cynthia Vinzant.
\newblock Log-concave polynomials ii: high-dimensional walks and an fpras for
  counting bases of a matroid.
\newblock In {\em Proceedings of the 51st Annual ACM SIGACT Symposium on Theory
  of Computing}, pages 1--12, 2019.

\bibitem{SLC4}
Nima Anari, Kuikui Liu, Shayan~Oveis Gharan, and Cynthia Vinzant.
\newblock Log-concave polynomials iv: Exchange properties, tight mixing times,
  and faster sampling of spanning trees.
\newblock {\em CoRR}, abs/2004.07220, 2020.

\bibitem{Arman2019}
Andrii Arman, Pu~Gao, and Nicholas Wormald.
\newblock Fast uniform generation of random graphs with given degree sequences.
\newblock In {\em IEEE Annual Symposium on Foundations of Computer Science,
  2019}, pages 1371--1379. IEEE Computer Society, 2019.

\bibitem{Asadpour2009}
Arash Asadpour and Amin Saberi.
\newblock On the inefficiency ratio of stable equilibria in congestion games.
\newblock In {\em Internet and Network Economics}, pages 545--552, 2009.

\bibitem{Auletta2016}
Vincenzo Auletta, Diodato Ferraioli, Francesco Pasquale, Paolo Penna, and
  Giuseppe Persiano.
\newblock Convergence to equilibrium of logit dynamics for strategic games.
\newblock {\em Algorithmica}, 76(1):110--142, 2016.

\bibitem{Bezakova2007}
Ivona Bez{\'a}kov{\'a}, Nayantara Bhatnagar, and Eric Vigoda.
\newblock Sampling binary contingency tables with a greedy start.
\newblock {\em Random Structures \& Algorithms}, 30(1-2):168--205, 2007.

\bibitem{Blume1993}
Lawrence~E Blume et~al.
\newblock The statistical mechanics of strategic interaction.
\newblock {\em Games and economic behavior}, 5(3):387--424, 1993.

\bibitem{Bobkov2006}
Sergey~G Bobkov and Prasad Tetali.
\newblock Modified logarithmic sobolev inequalities in discrete settings.
\newblock {\em Journal of Theoretical Probability}, 19(2):289--336, 2006.

\bibitem{BH2018}
Petter Br\"and\'en and June Huh.
\newblock Hodge-riemann relations for potts model partition functions.
\newblock {\em CoRR}, abs/1811.01696, 2018.

\bibitem{BH2019}
Petter Br\"and\'en and June Huh.
\newblock Lorentzian polynomials.
\newblock {\em CoRR}, abs/1902.03719, 2019.

\bibitem{Camerer2010}
Colin~F Camerer.
\newblock Behavioural game theory.
\newblock In {\em Behavioural and Experimental Economics}, pages 42--50.
  Springer, 2010.

\bibitem{Cryan2019}
Mary Cryan, Heng Guo, and Giorgos Mousa.
\newblock Modified log-sobolev inequalities for strongly log-concave
  distributions.

\bibitem{DelPia2015}
Alberto Del~Pia, Michael Ferris, and Carla Michini.
\newblock Totally unimodular congestion games.
\newblock In {\em Proceedings of the Twenty-Eighth Annual ACM-SIAM Symposium on
  Discrete Algorithms}, pages 577--588, 2017.

\bibitem{Diaconis1996}
Persi Diaconis and Laurent Saloff-Coste.
\newblock Logarithmic sobolev inequalities for finite markov chains.
\newblock {\em The Annals of Applied Probability}, 6(3):695--750, 1996.

\bibitem{Dyer2020}
Martin Dyer, Catherine Greenhill, Pieter Kleer, James Ross, and Leen Stougie.
\newblock Sampling hypergraphs with given degrees.
\newblock {\em CoRR}, abs/2006.12021, 2020.

\bibitem{Fabrikant2004}
Alex Fabrikant, Christos Papadimitriou, and Kunal Talwar.
\newblock The complexity of pure nash equilibria.
\newblock In {\em Proceedings of the Thirty-sixth Annual ACM Symposium on
  Theory of Computing}, pages 604--612, 2004.

\bibitem{Ferraioli2013}
Diodato Ferraioli.
\newblock Logit dynamics: a model for bounded rationality.
\newblock {\em ACM SIGecom Exchanges}, 12(1):34--37, 2013.

\bibitem{Fotakis2010}
Dimitris Fotakis.
\newblock Congestion games with linearly independent paths: Convergence time
  and price of anarchy.
\newblock {\em Theory of Computing Systems}, 47(1):113--136, 2010.

\bibitem{Fujishige2015}
Satoru Fujishige, Michel Goemans, Tobias Harks, Britta Peis, and Rico
  Zenklusen.
\newblock Congestion games viewed from m-convexity.
\newblock {\em Operations Research Letters}, 43(3):329--333, 2015.

\bibitem{Goel2004}
Sharad Goel.
\newblock Modified logarithmic sobolev inequalities for some models of random
  walk.
\newblock {\em Stochastic processes and their applications}, 114(1):51--79,
  2004.

\bibitem{GM09}
Laurent Gourv{\`e}s and J{\'e}r{\^o}me Monnot.
\newblock On strong equilibria in the max cut game.
\newblock In {\em Proc. 5th International Workshop on Internet and Network
  Economics}, pages 608--615, 2009.

\bibitem{G2009}
Leonid Gurvits.
\newblock On multivariate newton-like inequalities.
\newblock In {\em Advances in Combinatorial Mathematics}, pages 61--78, 2010.

\bibitem{Helgason1974}
Thorkell Helgason.
\newblock Aspects of the theory of hypermatroids.
\newblock In {\em Hypergraph Seminar}, pages 191--213. Springer Berlin
  Heidelberg, 1974.

\bibitem{Hermon2019}
Jonathan Hermon and Justin Salez.
\newblock Modified log-sobolev inequalities for strong-rayleigh measures.
\newblock {\em CoRR}, abs/1902.02775, 2019.

\bibitem{Holzman1997}
Ron Holzman and Nissan Law-Yone.
\newblock Strong equilibrium in congestion games.
\newblock {\em Games and economic behavior}, 21(1-2):85--101, 1997.

\bibitem{Ieong2005}
Samuel Ieong, Robert McGrew, Eugene Nudelman, Yoav Shoham, and Qixiang Sun.
\newblock Fast and compact: A simple class of congestion games.
\newblock In {\em Proceedings of the 20th National Conference on Artificial
  Intelligence}, pages 489--494, 2005.

\bibitem{Jerrum1993}
Mark Jerrum and Alistair Sinclair.
\newblock Polynomial-time approximation algorithms for the ising model.
\newblock {\em SIAM Journal on computing}, 22(5):1087--1116, 1993.

\bibitem{JSV2004}
Mark Jerrum, Alistair Sinclair, and Eric Vigoda.
\newblock A polynomial-time approximation algorithm for the permanent of a
  matrix with nonnegative entries.
\newblock {\em Journal of the ACM}, 51(4):671--697, 2004.

\bibitem{Jerrum2004}
Mark Jerrum, Jung-Bae Son, Prasad Tetali, Eric Vigoda, et~al.
\newblock Elementary bounds on poincar{\'e} and log-sobolev constants for
  decomposable markov chains.
\newblock {\em The Annals of Applied Probability}, 14(4):1741--1765, 2004.

\bibitem{KleerSchaferEC}
Pieter Kleer and Guido Sch\"{a}fer.
\newblock Potential function minimizers of combinatorial congestion games:
  Efficiency and computation.
\newblock In {\em Proceedings of the 2017 ACM Conference on Economics and
  Computation}, EC '17, pages 223--240. ACM, 2017.

\bibitem{KS2019}
Pieter Kleer and Guido Sch{\"a}fer.
\newblock Topological price of anarchy bounds for clustering games on networks.
\newblock In {\em Proceedings of the 15th Conference on Web and Internet
  Economics}, pages 241--255, 2019.

\bibitem{Levin2010}
David~A Levin, Malwina~J Luczak, and Yuval Peres.
\newblock Glauber dynamics for the mean-field ising model: cut-off, critical
  power law, and metastability.
\newblock {\em Probability Theory and Related Fields}, 146(1-2):223, 2010.

\bibitem{Mamageishvili2016}
Akaki Mamageishvili and Paolo Penna.
\newblock Tighter bounds on the inefficiency ratio of stable equilibria in load
  balancing games.
\newblock {\em Operations Research Letters}, 44(5):645--648, 2016.

\bibitem{Martin2000}
Russell~A Martin and Dana Randall.
\newblock Sampling adsorbing staircase walks using a new markov chain
  decomposition method.
\newblock In {\em Proceedings 41st Annual Symposium on Foundations of Computer
  Science}, pages 492--502, 2000.

\bibitem{McFadden1973}
Daniel McFadden.
\newblock Conditional logit analysis of qualitative choice behavior.
\newblock {\em Frontiers in Econometrics}, pages 105--142, 1973.

\bibitem{McKay1984}
Brendan~D McKay.
\newblock Asymptotics for 0-1 matrices with prescribed line sums.
\newblock {\em Enumeration and Design, Academic Press}, pages 225--238, 1984.

\bibitem{Mihail1989}
Milena Mihail and Umesh Vazirani.
\newblock On the expansion of 0-1 polytopes.
\newblock {\em Journal of Combinatorial Theory, Series B}, 1989.

\bibitem{Murota1998}
Kazuo Murota.
\newblock Discrete convex analysis.
\newblock {\em Mathematical Programming}, 83(1-3):313--371, 1998.

\bibitem{Murota2003}
Kazuo Murota.
\newblock Discrete convex analysis.
\newblock {\em SIAM Monograph on Discrete Mathematics and Applications, Society
  for Industrial and Applied Mathematics (SIAM), Philadelphia, PA}, 2003.

\bibitem{Murota2009}
Kazuo Murota.
\newblock Recent developments in discrete convex analysis.
\newblock In {\em Research Trends in Combinatorial Optimization}, pages
  219--260, 2009.

\bibitem{Penna2018}
Paolo Penna.
\newblock The price of anarchy and stability in general noisy best-response
  dynamics.
\newblock {\em International Journal of Game Theory}, 47(3):839--855, 2018.

\bibitem{Rosenthal1973}
Robert~W. Rosenthal.
\newblock A class of games possessing pure-strategy {Nash} equilibria.
\newblock {\em International Journal of Game Theory}, 2:65--67, 1973.

\bibitem{Sandholm2010}
William~H Sandholm.
\newblock {\em Population games and evolutionary dynamics}.
\newblock MIT press, 2010.

\bibitem{Schrijver2003}
Alexander Schrijver.
\newblock {\em Combinatorial optimization: polyhedra and efficiency. Vol. {B},
  Matroids, trees, stable sets. Chapters 39--69}.
\newblock Algorithms and combinatorics. Springer-Verlag, 2003.

\end{thebibliography}

\appendix
\newpage
\section{Markov chains and functional inequalities}\label{app:markov}
Let $\mathcal{M} = (\Omega,P)$ be a time-reversible Markov chain with stationary distribution $\pi$, and $f,g : \Omega \rightarrow \R_{\geq 0}$. Let 
$\mathbb{E}_{\pi}(f) = \sum_{x \in \Omega} \pi(x)f(x)$ and 
$$
\text{Var}_{\pi}(f) = \sum_{x \in \Omega} \pi(x)(f(x) - \mathbb{E}_{\pi}(f))^2.
$$
Furthermore, define the entropy-like quantity
$$
\text{Ent}_{\pi}(f) = \mathbb{E}_{\pi}\left[f\log(f) - f\log(\mathbb{E}_{\pi}(f))\right]
$$
and the \emph{Dirichlet form}
\begin{align*}
\mathcal{E}_P(f,g) &= \frac{1}{2}\sum_{x \in \Omega} \sum_{y \in \Omega} \pi(x)P(x,y)[f(x) - f(y)][g(x) - g(y)].
\end{align*}
The Poincar\'e constant is defined by 
$$
\lambda(P) = \inf\left\{ \frac{\mathcal{E}_P(f,f)}{\text{Var}_{\pi}(f)} \ \Big| \ f: \Omega \rightarrow \R_{\geq 0}, \ \text{Var}_{\pi}(f) \neq 0 \right\}, 
$$
the \emph{log-Sobolov constant} by
$$
\alpha(P) = \inf\left\{ \frac{\mathcal{E}_P(\sqrt{f},\sqrt{f})}{\text{Ent}_{\pi}(f)} \ \Big| \ f: \Omega \rightarrow \R_{\geq 0}, \ \text{Ent}_{\pi}(f) \neq 0 \right\},
$$
amd the \emph{modified log-Sobolev constant} of the Markov chain $\mathcal{M}$ is defined by
$$
\rho(P) = \inf\left\{ \frac{\mathcal{E}_P(f,\log(f))}{\text{Ent}_{\pi}(f)} \ \Big| \ f: \Omega \rightarrow \R_{\geq 0}, \ \text{Ent}_{\pi}(f) \neq 0 \right\}.
$$
These quantities satisfy (see Prop. 3.6 in \cite{Bobkov2006})
\begin{equation}
\label{eq:comp}
2\lambda(P) \geq \rho(P) \geq 4\alpha(P).
\end{equation}
It is well-known they can be used to upper bound the mixing time of a Markov chain as
\begin{align*}
\tau_x(\epsilon) \leq \frac{1}{2\lambda(P)}\left( \log \pi(x)^{-1} + 2\log\left(\frac{1}{2\epsilon} \right)\right), \ \  \tau_x(\epsilon) \leq \frac{1}{\rho(P)}\left( \log \log \pi(x)^{-1} + \log\left(\frac{1}{2\epsilon^2} \right)\right),\  \text{and}\\
 \tau_x(\epsilon) \leq  \frac{1}{4\alpha(P)}\left( \log \log \pi(x)^{-1} + \log\left(\frac{1}{2\epsilon^2} \right)\right). \ \ \ \ \ \ \ \ \ \ \ \ \ \ \ \ \ \ \ \ \ \ 
\end{align*}

\subsection{Markov chain decomposition}\label{app:markov_decomp}
Let $\mathcal{M} = (\Omega,P)$ be a time-reversible Markov chain with stationary distribution $\pi$ and let $\Omega = \Omega_1 \cup \dots \cup \Omega_m$ be a disjoint partition of the state space. We write $\lambda$ for the Poincar\'e constant of this chain, $\bar{\lambda}$ for the Poincar\'e constant of the projection chain, and $\lambda_i$ for that of the restriction chain $P_i$ (and define $\lambda_{\min} = \min_i \lambda_i$). We define $\alpha,\ \bar{\alpha}$ and $\alpha_{\min}$ similarly for the log-Sobolev constant, as well as $\rho,\ \bar{\rho}$ and $\rho_{\min}$ for the modified log-Sobolev constant.

Hermon and Salez \cite{Hermon2019} recently gave a  Markov chain decomposition theorem that applies to the Poincar\'e constant, the log-Sobolev constant and the modified log-Sobolev constant.\footnote{For other Markov chain decomposition theorems, see, e.g., the work of Jerrum et al. \cite{Jerrum2004} (who, in particular, give stronger theorems for the Poincar\'e and log-Sobolev constant).} We next describe the necessary objects to formulate their result.

Assume that for each $i,j \in [m]$ with $i \neq j$ and $\bar{P}(i,j) > 0$, we are given a coupling $\kappa_{ij}: \Omega_i \times \Omega_j \rightarrow [0,1]$ of the probability distributions $\pi_i$ and $\pi_j$. That is, $\kappa_{ij}$ is such that
\begin{align*}
\forall x \in \Omega_i,\ \sum_{y \in \Omega_j} \kappa_{ij}(x,y) = \pi_i(x),\\
\forall y \in \Omega_j,\ \sum_{x \in \Omega_i} \kappa_{ij}(x,y) = \pi_j(y).
\end{align*}
Based on the couplings $\kappa_{ij}$, we define
$$
\chi = \min_{x \in \Omega_i, y \in \Omega_j, i,j \in [m]} \left\{ \frac{\pi(x)P(x,y)}{\bar{\pi}(i)\bar{P}(x,y)\kappa_{ij}(x,y)}\right\},
$$
with the range taken over all combinations for which the denumerator in the fraction is strictly positive. We state (a small variation of) the theorem of Hermon and Salez \cite{Hermon2019} for the modified log-Sobolev constant (for the other constants the statements are similar).

\begin{theorem}[Hermon and Salez \cite{Hermon2019}]
\label{thm:decomp_all}
With the above notation, it holds that
$
\rho \geq \min\{\chi \bar{\rho}, \rho_{min}\} 
$
with 
$$
\chi =  \max_{x \in \Omega_i, y \in \Omega_j, i,j \in [m] : \bar{P(i,j)} > 0} \left\{ \frac{P(x,y)}{\bar{P}(i,j)}, \frac{P(y,x)}{\bar{P}(j,i)} \right\}.
$$
\end{theorem}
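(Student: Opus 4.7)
The plan is to work with the variational definition of $\rho$. Given any test function $f: \Omega \to \R_{\geq 0}$ with $\text{Ent}_\pi(f) > 0$, the goal is to prove
$$
\text{Ent}_\pi(f) \;\leq\; \frac{1}{\min\{\chi\bar{\rho},\, \rho_{\min}\}}\, \mathcal{E}_P(f, \log f),
$$
which by the variational definition of $\rho$ is equivalent to the stated lower bound. The starting point is the law of total entropy associated with the partition $\Omega = \Omega_1 \cup \cdots \cup \Omega_m$: letting $\bar{f}(i) := \E_{\pi_i}[f]$ be the projected function on $[m]$, one has
$$
\text{Ent}_\pi(f) \;=\; \sum_{i\in[m]} \bar{\pi}(i)\, \text{Ent}_{\pi_i}(f) \;+\; \text{Ent}_{\bar{\pi}}(\bar{f}),
$$
which I will refer to as the \emph{within-block} and \emph{between-block} parts of the entropy. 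I will bound each part separately and then recombine.

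For the within-block part, the definition of $\rho_i$ gives $\text{Ent}_{\pi_i}(f) \leq \rho_i^{-1}\mathcal{E}_{P_i}(f,\log f) \leq \rho_{\min}^{-1}\mathcal{E}_{P_i}(f,\log f)$. Multiplying by $\bar{\pi}(i)$, summing over $i$, and unpacking the definition of $P_i$ (which agrees with $P$ on intra-block transitions) yields exactly the contribution to $\mathcal{E}_P(f,\log f)$ coming from pairs $(x,y)$ lying in the same block; call this $\mathcal{E}_P^{\mathrm{in}}(f,\log f)$. For the between-block part, the definition of $\bar{\rho}$ directly gives $\text{Ent}_{\bar{\pi}}(\bar{f}) \leq \bar{\rho}^{-1}\mathcal{E}_{\bar{P}}(\bar{f},\log\bar{f})$.

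The crux is to dominate $\mathcal{E}_{\bar{P}}(\bar{f},\log\bar{f})$ by the inter-block part $\mathcal{E}_P^{\mathrm{out}}(f,\log f)$ of the original Dirichlet form, and this is where the couplings $\kappa_{ij}$ enter. For any $i\neq j$ with $\bar{P}(i,j)>0$, the marginal property of $\kappa_{ij}$ implies that $(\bar{f}(i),\bar{f}(j)) = \sum_{x\in\Omega_i,\, y\in\Omega_j}\kappa_{ij}(x,y)\,(f(x),f(y))$, i.e.\ the pair $(\bar{f}(i),\bar{f}(j))$ is a $\kappa_{ij}$-convex combination of the pairs $(f(x),f(y))$. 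Invoking the classical joint convexity of $\Psi(a,b) := (a-b)(\log a - \log b)$ on $\R_{>0}^2$ then yields the Jensen-type bound
$$
[\bar{f}(i)-\bar{f}(j)][\log\bar{f}(i)-\log\bar{f}(j)] \;\leq\; \sum_{x,y}\kappa_{ij}(x,y)\,[f(x)-f(y)][\log f(x)-\log f(y)].
$$
Weighting by $\tfrac{1}{2}\bar{\pi}(i)\bar{P}(i,j)$, summing over $i\neq j$, and then applying the pointwise inequality $\chi \cdot \bar{\pi}(i)\bar{P}(i,j)\kappa_{ij}(x,y) \leq \pi(x)P(x,y)$ encoded in the definition of $\chi$, one obtains $\mathcal{E}_{\bar{P}}(\bar{f},\log\bar{f}) \leq \chi^{-1}\mathcal{E}_P^{\mathrm{out}}(f,\log f)$.

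Putting the two bounds together,
$$
\text{Ent}_\pi(f) \;\leq\; \rho_{\min}^{-1}\mathcal{E}_P^{\mathrm{in}}(f,\log f) + (\chi\bar{\rho})^{-1}\mathcal{E}_P^{\mathrm{out}}(f,\log f) \;\leq\; \frac{1}{\min\{\chi\bar{\rho},\,\rho_{\min}\}}\,\mathcal{E}_P(f,\log f),
$$
which completes the argument since $\mathcal{E}_P^{\mathrm{in}} + \mathcal{E}_P^{\mathrm{out}} = \mathcal{E}_P$. The main technical hurdle is the Jensen-type step: it relies crucially on the joint convexity of $\Psi$ together with the fact that the \emph{same} coupling $\kappa_{ij}$ averages both coordinates of $(\bar{f}(i),\bar{f}(j))$, which is what allows a \emph{two-dimensional} Jensen inequality to simultaneously control the factor $\bar{f}(i)-\bar{f}(j)$ and the factor $\log\bar{f}(i)-\log\bar{f}(j)$. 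Once that step is in place, everything else is bookkeeping around the entropy decomposition and the definition of $\chi$.
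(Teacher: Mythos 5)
Your argument is correct, but note that the paper itself gives no proof of this statement: Theorem \ref{thm:decomp_all} is imported from Hermon and Salez \cite{Hermon2019}, so there is nothing internal to compare against. What you have written is essentially the standard proof of such decomposition bounds, and each step checks out: the law of total entropy $\text{Ent}_\pi(f)=\sum_i\bar{\pi}(i)\text{Ent}_{\pi_i}(f)+\text{Ent}_{\bar{\pi}}(\bar{f})$ is exact; the within-block terms are controlled by $\rho_{\min}$ and, after unpacking $P_i$ and $\pi_i(x)=\pi(x)/\bar{\pi}(i)$, produce precisely the intra-block part of $\mathcal{E}_P(f,\log f)$; and your key step is sound because $\Psi(a,b)=(a-b)(\log a-\log b)=a\log(a/b)+b\log(b/a)$ is indeed jointly convex (a sum of two relative-entropy kernels), so Jensen applied to the coupling $\kappa_{ij}$, followed by the pointwise bound $\chi\,\bar{\pi}(i)\bar{P}(i,j)\kappa_{ij}(x,y)\leq\pi(x)P(x,y)$, dominates $\mathcal{E}_{\bar{P}}(\bar{f},\log\bar{f})$ by $\chi^{-1}$ times the inter-block Dirichlet form, with matching factors of $\tfrac12$ on both sides. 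Recombining via $\max\{\rho_{\min}^{-1},(\chi\bar{\rho})^{-1}\}$ gives the claim. (Minor technicalities you leave implicit, such as test functions vanishing somewhere, are handled by the usual conventions.)

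One caution: what you prove is the statement with the coupling-based constant $\chi=\min\{\pi(x)P(x,y)/(\bar{\pi}(i)\bar{P}(i,j)\kappa_{ij}(x,y))\}$ defined in the text of Appendix \ref{app:markov_decomp} (which is the actual Hermon--Salez form; the $\bar{P}(x,y)$ appearing there is evidently a typo for $\bar{P}(i,j)$). The formula displayed inside the theorem statement itself --- a \emph{maximum} of ratios $P(x,y)/\bar{P}(i,j)$ with no couplings --- is not what your argument yields, and taken literally it would be a strictly stronger claim; it appears to be garbled, since the paper's own application in \eqref{eq:ep_chi_bound} uses a \emph{lower} bound on the ratio, consistent with the min/coupling version you proved. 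So your proof establishes the version of the theorem that the paper actually uses, and you should simply flag the discrepancy in the displayed definition of $\chi$ rather than try to prove the max form.
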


\subsection{Markov chain comparison}\label{app:markov_comparison}
Another useful property of proving mixing time bounds through Poincar\'e and (modified) log-Sobolev constants, is that it is easy to see that small perturbations in the transition probabilities and the stationary distribution only result in mild variations in these constants, by means of a Markov chain comparison argument. Goel \cite{Goel2004} states the following for the modified log-Sobolev constant, based on similar results for the other constants by Diaconis and Saloff-Coste \cite{Diaconis1996}.

\begin{theorem}[Lemma 4.1 \cite{Goel2004}]\label{thm:markov_comparison} Let $\mathcal{M} = (\Omega,P)$ and $\mathcal{M}' = (\Omega',P')$ be two finite, reversible Markov chains with stationary distributions $\pi$ and $\pi'$, respectively, and modified log-Sobolev constant $\rho$ and $\rho'$, respectively. 
Assume there is a mapping $\phi: W(\Omega,\pi) \rightarrow W'(\Omega',\pi')$ mapping $f \rightarrow f'$ for $f : \Omega \rightarrow \R_{\geq 0}$, and constants $C, c > 0$ and $B \geq 0$ such that for all $f \in W(\Omega,\pi)$, we have 
$$
\mathcal{E}_{P'}(f',\log f') \leq C \cdot \mathcal{E}_{P}(f, \log f) \ \ \text{ and } \ \ c \cdot \text{Ent}_{\pi}(f) \leq \text{Ent}_{\pi'}(f') + B\cdot \mathcal{E}_{P}(f, \log f). 
$$
Then 
$$
\frac{c \rho'}{C + B\rho'} \leq \rho.
$$
\end{theorem}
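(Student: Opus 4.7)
The plan is a direct chain-of-inequalities argument that funnels any admissible $f$ on $\Omega$ through its image $f'$ on $\Omega'$, applies the definition of $\rho'$ on the $\Omega'$-side, and then uses the two comparison hypotheses to translate back to the $\Omega$-side. Concretely, fix an arbitrary $f \in W(\Omega,\pi)$ with $\mathrm{Ent}_\pi(f) \neq 0$; my goal is to show the single pointwise bound
$$
\mathcal{E}_P(f,\log f) \;\geq\; \frac{c\rho'}{C + B\rho'}\,\mathrm{Ent}_\pi(f),
$$
since the modified log-Sobolev constant $\rho$ is defined as the infimum of the ratio $\mathcal{E}_P(f,\log f)/\mathrm{Ent}_\pi(f)$ and so the displayed inequality, holding for every $f$, yields $\rho \geq c\rho'/(C+B\rho')$.

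\textbf{Key steps.} First, apply the definition of $\rho'$ on the primed chain to the image $f' = \phi(f) \in W'(\Omega',\pi')$:
$$
\mathcal{E}_{P'}(f',\log f') \;\geq\; \rho'\,\mathrm{Ent}_{\pi'}(f').
$$
Second, invoke the first comparison hypothesis $\mathcal{E}_{P'}(f',\log f') \leq C\,\mathcal{E}_P(f,\log f)$ to obtain
$$
C\,\mathcal{E}_P(f,\log f) \;\geq\; \rho'\,\mathrm{Ent}_{\pi'}(f').
$$
Third, insert the second comparison hypothesis $\mathrm{Ent}_{\pi'}(f') \geq c\,\mathrm{Ent}_\pi(f) - B\,\mathcal{E}_P(f,\log f)$ (rearranged from the stated form); multiplying by $\rho' \geq 0$ preserves the inequality, yielding
$$
C\,\mathcal{E}_P(f,\log f) \;\geq\; \rho'\bigl[c\,\mathrm{Ent}_\pi(f) - B\,\mathcal{E}_P(f,\log f)\bigr].
$$
Finally, collect the $\mathcal{E}_P(f,\log f)$ terms on the left to get $(C+B\rho')\,\mathcal{E}_P(f,\log f) \geq c\rho'\,\mathrm{Ent}_\pi(f)$, divide by $C+B\rho' > 0$, and take the infimum over $f$.

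\textbf{Main obstacle.} Algebraically the argument is a three-line manipulation; essentially the only point to be careful about is that the intermediate quantity $c\,\mathrm{Ent}_\pi(f) - B\,\mathcal{E}_P(f,\log f)$ could in principle be negative, but this does not affect the chain because $\mathrm{Ent}_{\pi'}(f')$ simply dominates it and the subsequent multiplication by $\rho' \geq 0$ preserves the direction of the inequality. The content of the theorem is thus really in the hypotheses themselves: the real work in any application (which is how this lemma is used in Appendix \ref{app:ep_missing} and in the proof of Theorem \ref{thm:uc_main}) is constructing the map $\phi$ and verifying the Dirichlet-form and entropy comparison bounds with favourable constants $C,c,B$; the deduction of the bound on $\rho$ from those comparisons is the easy part.
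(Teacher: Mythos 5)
Your proof is correct, and the chain of inequalities (apply the definition of $\rho'$ to $f'$, translate the Dirichlet form via the first hypothesis, substitute the rearranged entropy bound, collect terms and divide by $C+B\rho'>0$) is exactly the standard comparison argument from Goel's Lemma 4.1. The paper itself states this result as a citation and does not reprove it, so there is no in-paper proof to compare against; your attention to the sign issue (multiplying the possibly-negative quantity $c\,\mathrm{Ent}_\pi(f)-B\,\mathcal{E}_P(f,\log f)$ by $\rho'\ge 0$) is the only subtlety and you handle it correctly.
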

In particular, if $\Omega = \Omega'$ and there exists a $\delta > 0$ such that $(1-\delta)P(x,y) \leq P'(x,y) \leq (1+\delta)P(x,y)$ for all $x,y \in \Omega$, and $(1-\delta)\pi(x) \leq \pi'(x) \leq (1 + \delta)\pi(x)$ for $x \in \Omega$, it directly follows that 
$$
\frac{1}{\rho} \leq \frac{1+\delta}{1-\delta}\cdot \frac{1}{\rho'}.
$$

\newpage
\section{Omitted proofs}\label{app:missing}
In this appendix we give all proofs missing from the main body. These result are not very difficult to prove, but somewhat tedious to formally write down, and, hence, were omitted.

\subsection{Proof of Corollary \ref{cor:base_walk}}
\begin{rtheorem}{Corollary}{\ref{cor:base_walk}}
Let $\pi$ be the distribution over $\mathcal{B}_R$ with $\pi(\alpha) \propto w(\alpha)$, and let $\Pi_\pi$ be the distribution over $\Pi(\mathcal{B}_R)$ with $\Pi_\pi(B) \propto w_\Pi(\alpha(B))$.
Let $B \in \Pi(\mathcal{B}_R)$ and let $\Pi_{\sigma}^T = P^T(B,\cdot)$ be the distribution over $\Pi(\mathcal{B}_R)$ after $T$ steps of the base-exchange Markov chain $\mathcal{M} = (\Pi(\mathcal{B}_R),P)$.  Let $\sigma^T$ be the induced distribution over $\mathcal{B}_R$ given by
$
\sigma^T(\alpha) = \sum_{B : \alpha(B) = \alpha} \Pi_{\sigma}^T(B).
$

If $d_{TV}(\Pi_{\sigma}^T,\Pi_\pi) \leq \epsilon$, then also $d_{TV}(\sigma^T,\pi) \leq \epsilon$.
\end{rtheorem}
\begin{proof}
We have that
\begin{align*}
2d_{TV}(\sigma,\pi) & = \sum_{\alpha \in \mathcal{B}_R} \left| \sum_{B : \alpha(B) = \alpha} \sigma'(B) - w(\alpha) \right| \\
&= \sum_{\alpha \in \mathcal{B}_R} \left| \sum_{B : \alpha(B) = \alpha} \left[ \sigma'(B) - \binom{d}{\alpha(B)}^{-1} w(\alpha)\right] \right| \ \ (\text{using } \eqref{eq:scaling}) \\
&\leq  \sum_{B \in \mathcal{B}'} |\sigma'(B) - w'(\alpha(B))| \ \ \ \ \ \ \ \ \ \ \ \ \ \ \ \ \ \ \ \ \ \ \ \ \ \ \ (\text{triangle inequality})  \\
& = 2d_{TV}(\sigma',\pi') \leq 2\epsilon.
\end{align*}
This gives the desired result.
\end{proof}

\subsection{Calculation that $d_{TV}(\pi,\bar{\pi}) \leq \epsilon$ in proof of Theorem \ref{thm:ep}} \label{app:calc}
Note that 
$$
\bar{\pi}(s) = \frac{\alpha!}{n!}\sigma'(\alpha)
$$
where $\alpha = \ell(s)$ is the load profile corresponding to strategy $s$. Then
\begin{align*}
 \sum_{s \in \mathcal{S}} |\bar{\pi}(s) - \pi(s)|  &= \sum_{\alpha} \sum_{s \in \mathcal{S}: \ell(s) = \alpha}  |\bar{\pi}(s) - \pi(s)|\\
 & = \sum_{\alpha} \sum_{s \in \mathcal{S}: \ell(s) = \alpha} \left|\frac{\alpha!}{n!}\sigma'(\alpha) - e^{-T\Phi(\alpha)}\right| \\
  & = \sum_{\alpha} \frac{n!}{\alpha!} \left|\frac{\alpha!}{n!}\sigma'(\alpha) - e^{-T\Phi(\alpha)}\right| \\
    & = \sum_{\alpha}  \left|\sigma'(\alpha) - \frac{n!}{\alpha!}e^{-T\Phi(\alpha)}\right| \\
 &\leq 2 \epsilon.
\end{align*}
This shows that $d_{TV}(\pi,\bar{\pi}) \leq \epsilon$ as desired.

\subsection{Comparison arguments omitted in proof of Theorem \ref{thm:relaxed}}\label{app:ep_missing}
\paragraph{First inequality in \eqref{eq:ep_modsob1}.}
We start with showing the first inequality in \eqref{eq:ep_modsob1}, which is
$$
\bar{\rho} \geq \frac{1}{n}\cdot \rho(\Pi(L)),
$$
by using Theorem \ref{thm:markov_comparison}. We will heavily abuse notation and write $\Pi(\cdot)$ for many different objects in order not to overload the notation. 
Remember that $\bar{\rho}$ is the modified log-Sobolev constant of the projection chain $\mathcal{M} = (L,\bar{P})$, where $L = \{\alpha \in [q]^n : |\alpha| = n\}$, with stationary distribution given by 
$$
\bar{\pi}(\alpha) = |\mathcal{S}(\alpha)|e^{-T\Phi(\alpha)}.
$$
Furthermore, $\rho(\Pi(L))$ is the modified log-Sobolev constant of the base-exchange Markov chain $\mathcal{M}_{\Pi} = (\Pi(L),P_{\Pi})$ on $\Pi(L)$ with stationary distribution $\pi_{\Pi}$. Similar to what was explained in Section \ref{sec:polymatroid}, we may write $\Pi(L) = \cup F_i$ where $F_i = \{(i,j) : 1 \leq j \leq n\}$ for $i = 1,\dots,n$, and $\Pi(L)$ is then the set of bases of the $n$-uniform matroid on $\cup F_i$. Roughly speaking, for every path $p \in \mathcal{P}$ we introduce $n$ auxiliary elements (corresponding to the $n$ auxiliary variables introduced when polarizing). 

For every $\alpha \in L$, there are $\binom{n}{\alpha}$ bases $B \in \Pi(L)$ corresponding to it (as in Section \ref{sec:polymatroid}). We will denote this set of bases by
$$
\Pi(\alpha) = \{A \in \Pi(L) : \alpha(A) = \alpha\},
$$
where $\alpha(A)$ is the vector given by $\alpha_i = |A \cap F_i|$ for $i = 1,\dots,n$.

We start with defining the required mapping $\phi$ needed in Theorem \ref{thm:markov_comparison}. For $f : L \rightarrow \R_{\geq 0}$, we define
$f' : \Pi(L) \rightarrow \R_{\geq 0}$ simply by setting $f'(A) = f(\alpha(A))$ for $A \in \Pi(L)$. It  can then easily be checked that $\text{Ent}_{\pi}(f) = \text{Ent}_{\pi_\Pi}(f')$ as $\pi(\alpha) = \sum_{A \in \Pi(\alpha) } \pi_\Pi(A)$. This means that we can take $c = 1$ and $B = 0$ in Theorem \ref{thm:markov_comparison}. In order to show the desired Dirichlet form inequality in the statement of Theorem \ref{thm:markov_comparison}, it suffices to prove that for any adjacent $\alpha, \beta \in L$, it holds that
\begin{equation}
\label{app:comp1}
\sum_{A \in \Pi(\alpha)} \pi_{\Pi}(A) \sum_{B \in \Pi(\beta)} P_{\Pi}(A,B) \leq C \cdot \bar{\pi}(\alpha)\bar{P}(\alpha,\beta).
\end{equation}
with $C = n$. The fact that this is sufficient follows from the observation that summing up \eqref{app:comp1} for all ordered pairs $(\alpha,\beta)$ for $\alpha,\beta \in L$ gives the desired result (in combination with the definition of $f'$).
 
Now, fix $\alpha, \beta \in L$ and assume that they are adjacent (the case $\alpha = \beta$ can be dealt with similarly). Remember that $\gamma\in L$ is adjacent to $\alpha$ if $\sum_e |\alpha_e - \gamma_e| = 2$, i.e., there exist  paths $p$ and $p'$ such that
$$
\alpha_e = \left\{ \begin{array}{ll}
\gamma_e + 1 & \text{ if } e = p\\
\gamma_e - 1 & \text{ if } e = p'\\
\alpha_e & \text{ if } e \in E \setminus \{p,p\}.
\end{array}\right.
$$
Let $r$ be the path for which $\alpha_{r} = \beta_{r} + 1$ and write $\mathcal{N}_r(\alpha)$ for all load profiles $\gamma$ adjacent to $\alpha$ for which $\alpha_r = \gamma_r + 1$ (including $\beta$). Following the definition of the base-exchange Markov chain, it then holds that for any $A \in \Pi(\alpha)$ and $B \in \Pi(\beta)$, we have
\begin{equation}
\label{app:comp2}
2\cdot \sum_{B \in \Pi(\beta)} P_{\Pi}(A,B) = \frac{\alpha_p}{n} \frac{(n - \beta_{p'_\beta} + 1)\binom{n}{\beta}^{-1}e^{-T\Phi(\beta)}}{\sum_{\gamma \in \mathcal{N}_r(\alpha) \cup \{\alpha\}} (n - \gamma_{p'_\gamma} + 1)\binom{n}{\gamma}^{-1}e^{-T\Phi(\gamma)}} =: Q
\end{equation}
where $p'_\gamma$ is used to indicate the path $p' = p'_{\gamma}$ for which $\alpha_{p'} = \gamma_{p'} - 1$ for $\gamma \in \mathcal{N}_p(\alpha)$, and $p'_\alpha = r$. With some care, it can be shown that for $\gamma \in \mathcal{N}(\alpha) \cup \{a\}$, it holds that
$$
 \frac{(n - \gamma_{p'_\gamma} + 1)\binom{n}{\gamma}^{-1}}{(n - \beta_{p'_\beta} + 1)\binom{n}{\beta}^{-1}} = \frac{\gamma_{p'_\gamma}}{\beta_{p'_\beta}} = \frac{\beta_{p'_\gamma} + 1}{\beta_{p'_\beta}} \geq \frac{1}{n}.
$$
Continuing the estimate in \eqref{app:comp2}, we then get
$$
Q \leq n \cdot \frac{\alpha_p}{n}  \frac{e^{-T\Phi(\beta)}}{\sum_{\gamma \in \mathcal{N}_r(\alpha) \cup \{\alpha\}} e^{-T\Phi(\gamma)}} = 2n \bar{P}(\alpha,\beta),
$$
using \eqref{eq:alpha_beta} for the final equality. This gives the desired result in \eqref{app:comp1}. Applying Theorem \ref{thm:markov_comparison} with $c = 1,\ C = n$ and $B = 0$ then gives the desired first inequality in \eqref{eq:ep_modsob1}.

\paragraph{First inequality in \eqref{eq:ep_modsob2}.} In order to show the inequality in  \eqref{eq:ep_modsob2}, we will again use a Markov chain comparison between two chains on different state spaces. We want to show that
$$
\rho_{\alpha} \geq \rho_{rt} 
$$
where $\rho_{\alpha}$ is the modified log-Sobolev constant of the restriction chain on $\mathcal{S}(\alpha)$ for $a \in L$, in which we randomly interchange the strategies of two players, and $\rho_{rt}$ the modified log-Sobolev constant of the so-called random transposition walk. From now on, we fix some $\alpha \in L$.

It is convenient to study these chains in terms of bipartite graphs  with given degrees on node partition $A \cup B$.  For $\alpha \in L$ we consider the degree sequence $x = (x_1,\dots,x_n)$ with $x_i = 1$ for every $i \in B$, and the sequence $y = (y_1,\dots,y_q)$ with $y_p = \alpha_p$ for $p \in A$, where one should remember that $q$ is the number of strategies, i.e., paths, available in the common strategy set (denoted by $A$ here) of all players. It follows directly that there is a one-to-one correspondence between $\mathcal{S}(\alpha)$ and $\mathcal{G}(x,y)$ where, for a given strategy profile $s \in \mathcal{S}(\alpha)$, there is an edge $\{i,p\}$ if and only if $s_i = p$. (This is similar to the setting we consider in Section \ref{sec:cap}.) Given $s \in \mathcal{S}(\alpha)$, our restriction chain can be interpreted as randomly selecting two edges from the bipartite graph $G_s$ corresponding to the profile $s$ and switching them if possible. That is, if we select $\{i,p\}$ and $\{i',p'\}$ with $p \neq p'$, we delete the edges $\{i,p\}$ and $\{i',p'\}$, and add the edges $\{i,p'\}$ and $\{i',p\}$ (note that $i \neq i'$ always holds as the nodes in $B$ have degree one).

In order to introduce the random transposition Markov chain, we split up every node $p_j \in A$ into nodes $p_{j1},\dots,p_{j\alpha_j}$, and consider bipartite graphs on two sets of $n$ nodes $B = \{1,\dots,n\}$ and $A^* = \cup_j A^*_j$, with $A^*_j = \{p_{j1},\dots,p_{j\alpha_j}\}$, where every node has degree one. That is, every such graph is a perfect matching between $A^*$ and $B$. Note that there are precisely
$$
\prod_{j=1}^q \alpha_j! = \alpha!
$$
perfect matchings corresponding to the graph $G_s$ for $s \in \mathcal{S}(\alpha)$ under the natural transformation in which, for a given perfect matching, we consider the graph that we get by merging all the nodes $p_{j1},\dots,p_{j\alpha_j}$ back into one node $p_j$ for every $j = 1,\dots,q$. We will denote this set of perfect matchings by $H(s)$ for $s \in \mathcal{S}(\alpha)$. The random tranposition Markov chain $\mathcal{M} = (\mathcal{H},P)$, with $\mathcal{H} = \cup_s H(s)$ denoting the set of all perfect matchings on the bipartition $A^*\cup B$, proceeds by selecting two edges (of the current perfect matching) uniformly at random, and switching them. Note that this is always possible here as opposed to in the case of our restriction chains on $\mathcal{S}(\alpha)$.

We can now use a similar type of comparison argument as for the first inequality in \eqref{eq:ep_modsob1} given above. We define the mapping $\phi$, for a given function $f : \mathcal{S}(\alpha) \rightarrow \R_{\geq 0}$, by setting $f'(M) = f(s)$ whenever $M \in H(s)$ for $s \in \mathcal{S}(\alpha)$. Let $\sigma$ be the uniform distribution over $\mathcal{S}(\alpha)$ and let $\sigma_{\mathcal{H}}$ the uniform distribution over  $\mathcal{H}$. Note that $\sigma(s) = \sum_{M \in H(s)} \sigma_{\mathcal{H}}(M)$.
It then follows that  for every $s, s' \in \mathcal{S}(\alpha)$, we have
\begin{equation}
\label{app:comp3}
\sum_{M \in H(s)} \sigma_{\mathcal{H}}(M) \sum_{M' \in H(s')} P(M,M') = \sigma(s)P_{\alpha}(s,s')
\end{equation}
since $P_{\alpha}(s,s') = \frac{1}{n(n-1)} = \sum_{M' \in H(s')} P(M,M')$ for all $M \in H(s)$ whenever $s \neq s'$ and $P_{\alpha}(s,s') > 0$. Note that there is only one matching $M' \in H(s')$ such that $P(M,M') > 0$. When $s = s'$, we also have
$$
\sum_{M' \in H(s')} P(M,M') = P_{\alpha}(s,s').
$$
This implies that we can take $C = 1$. As before, we can take $a = 1$ and $B = 0$.

\subsection{Proof of Theorem \ref{thm:ep_nash}}
\label{app:ep_nash}
\begin{rtheorem}{Theorem}{\ref{thm:ep_nash}}
Let $\epsilon > 0$ and let $\Gamma$ be an extension parallel congestion game with integer-valued cost functions and $n$ players. There is a randomized  algorithm $\mathcal{A}$ with output distribution $\bar{\pi}$ over $\text{NE}(\Gamma)$ that is $\epsilon$-close to the uniform distribution over $\text{NE}(\Gamma)$, and runs in (expected) time polynomial in $n, m, \Phi_{\max}$ and $\log(1/\epsilon)$ where $\Phi_{\max}$ is the maximum value attained by Rosenthal's potential.
\end{rtheorem}\\

For the proof of Theorem \ref{thm:ep_nash} we will use the following correspondence between Nash equilibria and strategy profiles minimizing Rosenthal's potential. 

\begin{proposition}[Holzman and Law-Yone \cite{Holzman1997}; Fotakis \cite{Fotakis2010}]
The set of strategy profiles $\text{NE}(\Gamma)$ of an extension parallel congestion game $\Gamma$ coincides with the set of strategy profiles that minimize Rosenthal's potential as in \eqref{eq:rosenthal}.
\label{prop:min_nash}
\end{proposition}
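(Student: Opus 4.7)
\textbf{Proof proposal for Proposition \ref{prop:min_nash}.}

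The plan is to prove the two inclusions separately. The easy direction is that any minimizer of Rosenthal's potential $\Phi$ is a pure Nash equilibrium, and this holds in \emph{any} congestion game (not just EP). Indeed, if $s \in \mathcal{S}$ minimizes $\Phi$ then for every player $i$ and every $s_i' \in \mathcal{S}_i$ we have $\Phi(s_i',s_{-i}) \geq \Phi(s)$, and combining this with the exact potential identity \eqref{eq:exact} yields $C_i(s_i',s_{-i}) - C_i(s) = \Phi(s_i',s_{-i}) - \Phi(s) \geq 0$, so $s \in \mathrm{NE}(\Gamma)$.

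For the harder reverse inclusion, I would exploit the EP structure via Proposition~\ref{prop:fujishige}, which states that the function $\Phi : \mathcal{L} \to \Q$ defined on strategy load profiles is $M$-convex. The key observation is the translation between the player-level and load-level descriptions: $s \in \mathrm{NE}(\Gamma)$ exactly means that for every $i \in N$ and every $p' \in \mathcal{P}$, $\Phi(s_i',s_{-i}) \geq \Phi(s)$, which, setting $p = s_i$ and $\alpha = z(s)$, is equivalent to the condition
\begin{equation*}
\Phi(\alpha - e_p + e_{p'}) \geq \Phi(\alpha) \quad \text{for every } p \in \mathcal{P} \text{ with } \alpha_p \geq 1 \text{ and every } p' \in \mathcal{P}.
\end{equation*}
In words, a PNE corresponds to a load profile that is a \emph{local} minimum of $\Phi$ with respect to single coordinate exchanges $e_p \mapsto e_{p'}$.

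The crux of the proof is then the standard fact that any local minimum of an $M$-convex function is a global minimum. To prove this, suppose for contradiction that $\alpha$ is such a local minimum but some $\beta \in \mathcal{L}$ satisfies $\Phi(\beta) < \Phi(\alpha)$; I would choose such a $\beta$ minimizing $|\alpha - \beta|_1$. Since $|\alpha| = |\beta| = n$ and $\alpha \neq \beta$, there exists a coordinate $p$ with $\alpha_p > \beta_p$. Applying the symmetric exchange property \eqref{eq:M-concave} (with $-\Phi$ in the role of an $M^\sharp$-concave function) gives an index $p'$ with $\alpha_{p'} < \beta_{p'}$ and
\begin{equation*}
\Phi(\alpha) + \Phi(\beta) \geq \Phi(\alpha - e_p + e_{p'}) + \Phi(\beta + e_p - e_{p'}).
\end{equation*}
Local optimality of $\alpha$ gives $\Phi(\alpha - e_p + e_{p'}) \geq \Phi(\alpha)$, so $\Phi(\beta + e_p - e_{p'}) \leq \Phi(\beta) < \Phi(\alpha)$. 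But $\beta + e_p - e_{p'}$ is strictly closer to $\alpha$ in $\ell_1$-distance than $\beta$ is, contradicting the choice of $\beta$. Hence $\alpha$ is a global minimum, so $s$ minimizes $\Phi$ over $\mathcal{S}$, completing the proof.

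The main obstacle is really the bookkeeping in the exchange argument, together with making sure that the player-level PNE condition correctly translates to the load-level local-minimality condition; both use the symmetry of the game in an essential way, and $M$-convexity of $\Phi$ (which fails beyond EP games, as noted after Theorem~\ref{thm:ep_informal}) is exactly what bridges the two.
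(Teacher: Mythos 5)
Your proof is correct. Note, however, that the paper does not prove Proposition \ref{prop:min_nash} at all: it is imported as a known result, with the original arguments of Holzman and Law-Yone and of Fotakis relying on the combinatorial structure of extension parallel networks (greediness/common-prefix properties of the path system) rather than on discrete convexity. Your route is genuinely different and fits the paper's toolkit better: the easy inclusion (potential minimizers are PNE) is the standard argument valid in every congestion game, and for the converse you observe that, by symmetry, the PNE condition is exactly local optimality of the load-profile potential $\Phi$ on $\mathcal{L}$ under single exchanges $\alpha \mapsto \alpha - e_p + e_{p'}$ with $\alpha_p \geq 1$, and then invoke (and reprove) Murota's local-to-global optimality criterion for $M$-convex functions via Proposition \ref{prop:fujishige}. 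The exchange/induction argument is carried out correctly: $\alpha_p > \beta_p$ guarantees $\alpha_p \geq 1$ so local optimality applies, $\alpha_{p'} < \beta_{p'}$ guarantees $\beta_{p'} \geq 1$ so $\beta + e_p - e_{p'}$ is again a valid load profile in $\mathcal{L}$, and the $\ell_1$-distance drops by $2$, yielding the contradiction. What your approach buys is generality consistent with the paper's own remark after Theorem \ref{thm:ep_informal}: the argument works verbatim for any symmetric congestion game whose Rosenthal potential is $M$-convex on strategy load profiles, whereas the cited proofs are tied to the EP network structure; the price is that you must take Fujishige et al.'s $M$-convexity result as input, which the paper does anyway.
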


\begin{proof}[Proof of Theorem \ref{thm:ep_nash}]
We first show that, for $T$ sufficiently large in the algorithm used to prove Theorem \ref{thm:ep}, most weight will be assigned to strategy profiles minimizing Rosenthal's potential. We will apply the idea in Algorithm \ref{alg:ep} used to prove Theorem \ref{thm:ep} with base $2$ instead of base $e$. Remember that $q$ is the number of $(o,d)$-paths in the extension parallel network of the game $\Gamma$.
Let $\phi = \Phi(s)$ be the common potential value of all strategy profiles $s \in \text{NE}(\Gamma)$. 
For any other strategy profile $s' \in \mathcal{S} \setminus \text{NE}(\Gamma)$, we have
$$
2^{-T \Phi(s')} \leq 2^{-T (\phi + 1)} = 2^{-T} e^{-T\phi}
$$
by assumption that all cost functions are integer-valued. As there are $q$ strategies to choose from for every player, we have $|\mathcal{S}| = q^n = 2^{n\log_2(q)}$.
This implies that the Gibbs distribution $\pi$ over $\mathcal{S}$  with temperature $T = \lceil n\log_2(q) + \log_2(2/\epsilon)\rceil$, satisfies
\begin{equation}\label{eq:ep_weight}
\pi(\mathcal{S}\setminus \text{NE}(\Gamma)) = \sum_{s \in \text{NE}(\Gamma)} 2^{-T \Phi(s')} \leq 2^{n\log(q)} 2^{-T} 2^{-T\phi} \leq \frac{\epsilon}{2} \cdot \pi(\text{NE}(\Gamma)).
\end{equation}

The algorithm for sampling an (almost) uniform sample from $\text{NE}(\Gamma)$ now works as follows. First compute a strategy profile minimizing Rosenthal's potential in order to determine $\phi$. This can be done efficiently, see, e.g., \cite{Fotakis2010}. Then run Algorithm \ref{alg:ep} with $T = \lceil n\log_2(q) + \log_2(2/\epsilon)\rceil$ and $\epsilon' = \epsilon/2$. If the resulting strategy profile has potential value $\phi$, output this strategy profile, and, otherwise, rerun Algorithm \ref{alg:ep} until it does. Note that with probability at least $(1 - \epsilon/2)$, Algorithm \ref{alg:ep} will output a strategy profile with potential value $\phi$ in one run. A simple argument then shows that the output distribution is $\epsilon$-close to the uniform distribution over $\text{NE}(\Gamma)$ as desired. 
\end{proof}
\end{document}